\newtheorem{theorem}{Theorem}
\newtheorem{lemma}[theorem]{Lemma}
\newtheorem{corollary}[theorem]{Corollary}
\newtheorem{example}{Example}
\newtheorem{remark}{Remark}
\newcommand{\deff}{\mbox{$\stackrel{\rm def}{=}$}}
\newcommand{\field}[1]{\mathbb{#1}}
\newcommand{\F}{\field{F}}
\newcommand{\cB}{{\mathcal B}}
\newcommand{\cC}{{\mathcal C}}
\newcommand{\cD}{{\mathcal D}}
\newcommand{\cG}{{\mathcal G}}
\newcommand{\cP}{{\mathcal P}}
\newcommand{\sP}{\cP}
\newcommand{\sG}{\cG}
\newcommand{\Gr}{\smash{{\sG\kern-1.5pt}_q\kern-0.5pt(n,k)}}
\newcommand{\Gk}{\smash{{\sG\kern-1.5pt}_q\kern-0.5pt(n,k_1)}}
\newcommand{\Gkk}{\smash{{\sG\kern-1.5pt}_q\kern-0.5pt(n,k_2)}}
\newcommand{\Grtwo}{\smash{{\sG\kern-1.5pt}_2\kern-0.5pt(n,k)}}
\newcommand{\Gkone}{\smash{{\sG\kern-1.5pt}_q\kern-0.5pt(n,k_1)}}
\newcommand{\Gktwo}{\smash{{\sG\kern-1.5pt}_q\kern-0.5pt(n,k_2)}}
\newcommand{\Ps}{\smash{{\sP\kern-2.0pt}_q\kern-0.5pt(n)}}
\begin{document}
\title{Optimal Fractional Repetition Codes\\ based on Graphs and Designs}

\author{Natalia Silberstein and Tuvi Etzion
\thanks{The authors are with the Department of Computer Science,
  Technion -- Israel Institute of Technology, Haifa 32000, Israel (email: natalys@cs.technion.ac.il, etzion@cs.technion.ac.il).
   }
    \thanks{This research was supported in part  by the Israeli Science Foundation (ISF), Jerusalem, Israel, under Grant 10/12.}
   \thanks{The first author was supported in part at the Technion by a Fine Fellowship.}
   \thanks{This work was presented in part at the Information Theory and Applications Workshop (ITA 2014), San-Diego, USA, February 2014.
   }
}

\maketitle

\begin{abstract}
Fractional repetition (FR) codes is a family of codes for distributed storage systems
that allow for uncoded exact repairs having the minimum repair bandwidth. However, in contrast
to minimum bandwidth regenerating (MBR) codes, where a random set of a certain size of
available nodes is used for a node repair, the repairs with FR codes are table based. This usually allows to store more data compared to MBR codes.
In this work, we consider bounds on
the fractional repetition capacity, which is the maximum amount of data that can be stored using an FR code.
Optimal
FR codes which attain these bounds are
presented. The constructions of these
FR codes are based on combinatorial designs and
on families of regular and biregular graphs.
These constructions of FR codes for given parameters raise some interesting questions
in graph theory. These questions and some of their solutions are discussed in this paper.
In addition, based on a connection between FR codes and batch codes, we propose a new family of codes for DSS, namely fractional repetition batch codes, which have the properties of batch codes and FR codes simultaneously.  These are the first codes for DSS which allow for uncoded efficient exact repairs and load balancing which can be performed by several users in parallel.
Other concepts related to FR codes are also discussed.
\end{abstract}

\begin{IEEEkeywords}
Coding for distributed storage systems,
fractional repetition codes,
combinatorial batch codes,
Tur\'an graphs,
cages,
transversal designs,
generalized polygons.

\end{IEEEkeywords}



\section{Introduction}
\label{sec:introduction}
In a distributed storage system (DSS), data is stored across a network of nodes, which can unexpectedly fail. To provide reliability, data redundancy based on coding techniques is introduced in such systems. Moreover, existing erasure codes allow to minimize  the storage overhead~\cite{WeKu02}.
Dimakis et al.~\cite{dimakis} introduced a new family of erasure codes, called \emph{regenerating codes}, which allow for efficient single node repairs by minimizing  repair bandwidth. In particular, they presented two families of regenerating codes, called \emph{minimum storage regenerating} (MSR) and \emph{minimum bandwidth regenerating} (MBR) codes, which correspond to the two extreme points on the storage-bandwidth trade-off~\cite{dimakis}. Constructions for these
two families of codes can be found in~\cite{bruck11,dimakis,DRWS11,Rashmi09,rashmi,shah,suh1} and references therein.

An $(n,k,d,M,\alpha,\beta)_q$ regenerating code $C$, for $k\leq d\leq n-1$, $\beta\leq \alpha$, is used to store a file of size $M$ across a network of $n$ nodes, where each node stores $\alpha$ symbols from $\F_q$, a finite field with $q$ elements, such that the stored file can be recovered by downloading  the data from any set of $k$ nodes,  where $k$ is called the \emph{reconstruction degree}. Note, that this means that any $n-k$ node failures (i.e., erasures) can be corrected by this code. When a single node fails, a newcomer node which substitutes the failed node contacts  any set of $d$ nodes and downloads $\beta$ symbols of each node in this set to reconstruct the failed data.   This process is called a \emph{node repair process}, and the amount of data downloaded to repair a failed node, $\beta d$, is called the \emph{repair bandwidth}.

The family of MBR codes has the minimum possible repair bandwidth, namely $\beta d=\alpha$.
In~\cite{Rashmi09,RSKR12} Rashmi et al. presented a construction for MBR codes which have the additional property of exact \emph{repair by transfer}, or exact \emph{uncoded repair}. In other words, the code proposed in~\cite{Rashmi09,RSKR12} allows for efficient  node repairs where no decoding is needed. Every node participating in a node repair process just passes one  symbol
($\beta=1$) which will be  directly stored in the newcomer node. This construction is based on a concatenation of an outer MDS code with an inner repetition code based on a complete graph as follows.
Let  $M=k\alpha -\binom{k}{2}$  be the size of a file, which corresponds to MBR capacity with $\beta=1$~\cite{dimakis}. This file is first encoded by using an $\left(\binom{n}{2},M\right)$ MDS code $\cC$. The $\binom{n}{2}$ symbols of the corresponding codeword of $\cC$ are placed on the $n$ different nodes, where each node stores $\alpha=n-1$ symbols, as follows. Each node is associated with a vertex in  $K_n$, the complete graph with $n$ vertices. Every symbol of the codeword from $\cC$ is associated with an edge of $K_n$. Each node $i$ of the DSS stores the symbols of the codeword of $\cC$ which are associated with the edges incident to vertex $i$ of $K_n$.
The uniqueness of this construction for the given parameters $\alpha=d=n-1$ was proved in~\cite{Rashmi09}.

El Rouayheb and Ramchandran~\cite{RoRa10} generalized the construction of~\cite{Rashmi09} and defined a new family of codes for DSSs which also allow exact repairs by transfer for a wide range of parameters. These codes, called \emph{DRESS} (Distributed Replication based Exact Simple Storage) codes~\cite{PNRR11}, consist of the concatenation of an outer MDS code and an inner repetition code called \emph{fractional repetition} (FR) code.
These codes for DSSs relax the requirement of a random $d$-set for a repair of a failed node and instead the repair becomes table based (or by using an appropriate function). This modified model requires new bounds on the maximum amount of data that can be stored on a DSS based on an FR code.

An $(n,\alpha, \rho)$ FR code $C^{\textmd{FR}}$ 
is a collection of $n$ subsets $N_1,\ldots, N_n$ of $[\theta]\deff\{1,2,\ldots,\theta\}$, $n\alpha=\rho \theta$,  such that
\begin{itemize}
  \item $|N_i|=\alpha$ for  each $i$, $1\leq i\leq n$;
  \item each symbol of $[\theta]$ belongs to exactly $\rho$ subsets in $C^{\textmd{FR}}$, where $\rho$ is called the \emph{repetition degree} of $C^{\textmd{FR}}$.
\end{itemize}

A $\left[(\theta, M), k, (n,\alpha, \rho)\right]$ DRESS code is a code obtained by the concatenation of an outer $(\theta, M)$ MDS code $\cC$
and an inner $(n,\alpha, \rho)$ FR code $C^{\textmd{FR}}$. To store a file $\textbf{f}\in \F_q^M$  on a DSS, $\textbf{f}$  is first encoded by using
$\cC$; next, the $\theta$ symbols of the codeword $\textbf{c}_{\textbf{f}}\in \cC$, which encodes the file $\textbf{f}$, are placed on the $n$ nodes defined by $C^{\textmd{FR}}$ as follows: node $i\in [n]$ of the DSS stores $\alpha$ symbols of $\textbf{c}_\textbf{f}$, indexed by the elements of the subset $N_i$.
Each symbol of $\textbf{c}_\textbf{f}$ is stored in exactly $\rho$ nodes  and it is possible to reconstruct the stored file $\textbf{f}$ from any set of $k$ nodes.
When some node $j$ fails, it can be repaired by using  a set of $d=\alpha$ other nodes  $i_1,i_2\ldots,i_{\alpha}$, such that
$N_j\cap N_{i_s}\neq \varnothing$, $s\in [\alpha]$, and $\cup_{s\in [\alpha]}(N_j\cap N_{i_s})=N_j$. Each such node passes exactly one symbol ($\beta=1$) to repair node $j$. Note that the repair bandwidth of a DRESS code is the same as the repair bandwidth of an MBR code.
The encoding scheme based on an FR code is shown in Fig.~\ref{fig:FRscheme}.

\begin{figure*}[t]
\centering
\includegraphics[trim=0 0 0 70,clip=true,width=0.6\textwidth]{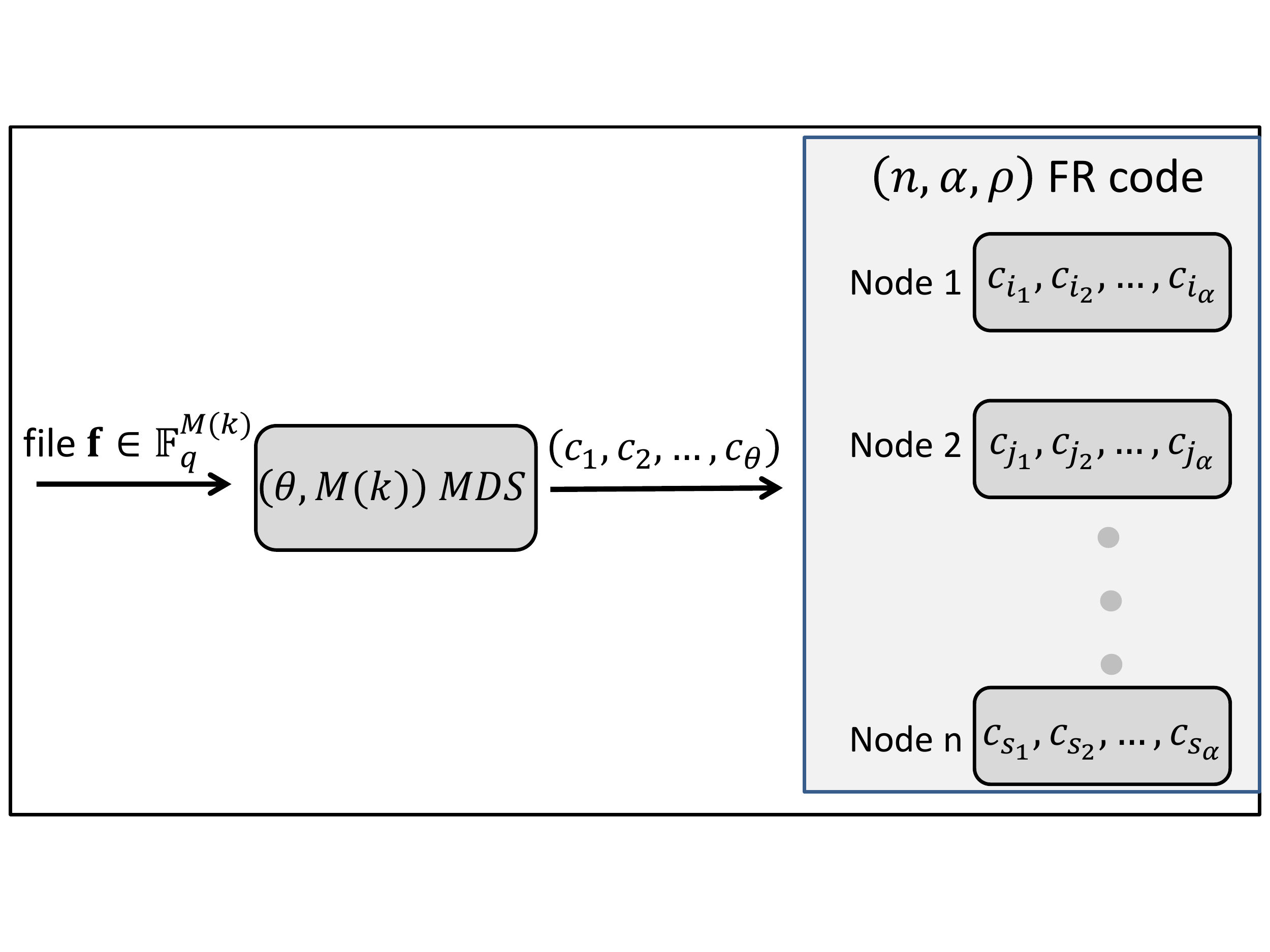}
\caption{The encoding scheme for a DRESS code}\label{fig:FRscheme}
\end{figure*}

Note that the stored file should be reconstructed
from any set of $k$ nodes, and since the outer code is an MDS code of dimension $M$,
it follows that
\begin{equation}
\label{eq:rate}
M\leq \min_{|I|=k}|\cup_{i\in I}N_i|,
\end{equation}
where at least $\min_{|I|=k}|\cup_{i\in I}N_i|$ of distinct symbols of the MDS codeword are contained in any set of $k$ nodes.
Since we want to maximize the size of a file that can be stored by using  a DRESS code, in the sequel we will always assume that $M=\min_{|I|=k}|\cup_{i\in I}N_i|$\footnote{In some works this value is called the \emph{rate }of a code. To avoid a confusion with a rate of a classical code we refer to this value as to the \emph{maximum file size}.}.
Note, that the same FR code can be used in different DRESS codes, with different $k$'s as reconstruction degrees, and different MDS codes. The file size $M$, which is the dimension of the chosen MDS code,  depends on the value of chosen $k$ and hence in the sequel we will use $M(k)$ to denote the size of the file.

An $(n,\alpha, \rho)$  FR
code is called \emph{universally good}~\cite{RoRa10} if for any $k\leq \alpha$ the $\left[(\theta, M(k)), k, (n,\alpha, \rho)\right]$ DRESS
code satisfies
\begin{equation}
\label{eq:univGood}
M(k)\geq k\alpha-\binom{k}{2},
\end{equation}
where the  righthand side of equation~(\ref{eq:univGood}) is the maximum file size that can be stored using an MBR code, i.e., the MBR capacity~\cite{dimakis}.
In particular, it is of interest to consider codes which allow to store larger files  when compared to MBR codes.
 Note that to satisfy~(\ref{eq:univGood}) the inner FR code of a DRESS code should satisfy that
\begin{equation}
\label{eq:FRunGood}
\min_{|I|=k}|\cup_{i\in I}N_i|\geq k\alpha-\binom{k}{2}.
\end{equation}
  Note also that if an FR code $C^{\textmd{FR}}$ satisfies~(\ref{eq:FRunGood}) then $|N_i\cap N_j|\leq 1$, for $N_i,N_j\in C^{\textmd{FR}}$, $i\neq j\in[n]$~\cite{Rashmi09}.

Two upper bounds on the maximum file size $M(k)$ of a $\left[(\theta, M(k)), k, (n,\alpha, \rho)\right]$ DRESS code ($n\alpha=\rho\theta$), called the \emph{FR capacity} and denoted in the sequel by $A(n,k,\alpha, \rho)$, were presented in~\cite{RoRa10}:
  \begin{equation}\label{eq:bound1}
  A(n,k,\alpha, \rho)\leq \left\lfloor\frac{n\alpha}{\rho}\left(1-\frac{\binom{n-\rho}{k}}{\binom{n}{k}}\right)\right\rfloor;
  \end{equation}
  \begin{equation}\label{eq:bound2}
   A(n,k,\alpha, \rho)\leq  \varphi(k), \textmd{ where } \varphi(1)=\alpha,\textmd{  } \varphi(k+1)=\varphi(k)+\alpha-\left\lceil\frac{\rho \varphi(k)-k\alpha}{n-k}\right\rceil.
  \end{equation}

  Note, that the bound in~(\ref{eq:bound2}) is tighter than bound in~(\ref{eq:bound1}).
   Note also that for any given $k$, the function $A(n,k,\alpha, \rho)$ is determined by the parameters of the inner FR code.
   We call an FR code \emph{$k$-optimal} if it satisfies
 $$\min_{|I|=k}|\cup_{i\in I}N_i|=A(n,k,\alpha,\rho),$$
  in other words, the size of a file stored by using the FR code is the maximum possible for the given $k$.
   We call an FR code \emph{optimal} if  for any $k\leq \alpha$  it is $k$-optimal
 \footnote{Note that since $\alpha=d$ we always have $k\leq d=\alpha$ (very similarly to the parameters of regenerating codes).}.

Constructions for FR codes are considered in many papers starting from~\cite{RoRa10}, where FR codes are constructed from random regular graphs, Steiner systems, and their dual designs. Randomized FR codes based on the balls-and-bins model are presented in~\cite{PNRR11}. Constructions of FR codes with the fewest number of storage nodes given the other parameters, based on finite geometries and corresponding  bipartite cage graphs are considered in~\cite{KoGi11}. FR codes based on affine resolvable designs and mutually orthogonal Latin squares are presented in~\cite{OlRa12,OlRa14}.  Enumeration of FR codes up to a given number of system nodes is presented in~\cite{AGG13}. Algorithms for computing the reconstruction degree $k$ and the repair degree $d$ of FR codes are presented in~\cite{BGA13}. Construction of FR codes based on regular graphs with a given girth, in particular cages,  and analysis of their minimum distance is considered in~\cite{OlRa13}. Generalization of FR codes to \emph{weak} or \emph{general} FR codes, where each node stores a different amount of symbols, and constructions of codes based on graphs and group divisible designs are considered in~\cite{GAY13,ZSLH14}.

Note, that in all these papers the optimality of the constructed FR codes regarding the FR capacity, i.e. the maximality of the size of the stored file, was not considered.
In this paper,  we address the problem of constructing $k$-optimal FR codes  and optimal FR codes (and hence optimal DRESS codes). In addition, we consider various problems from graph theory raised from the problem of constructing FR codes and present FR codes with additional desired properties.
%
%

The rest of the paper is organized as follows. In Section~\ref{sec:preliminaries}
we provide the main definitions of the structures which will be used in our constructions.
In particular, in Subsection~\ref{subsec:graphs}  we provide definitions for some families of regular graphs and
graphs with a given girth. We present the Tur\'an's theorem and the Moore bound which are essential for the  results in this paper.
In Subsection~\ref{subsec:designs} we provide the definitions of
transversal designs, projective planes, generalized polygons, and their incidence matrices.
The definitions of FR codes based on graphs or on designs are given in Subsection~\ref{subsec:codes}.

In Section~\ref{sec:rho2}
we consider FR codes with $\rho=2$
and  propose  constructions for FR codes  which
attain the bound in~(\ref{eq:bound2}).
 Some of these codes are optimal and some  are $k$-optimal for specific values of $k$. Note, that the case $\rho=2$
corresponds to the case of the highest data/storage ratio,
since the repetition degree is the lowest one. All the
constructions in this section are based on different families of regular graphs.
First, we provide a useful lemma which shows the connection between the file size of a code and the structure of its underlying graph.
In Subsection~\ref{subsec:constructions_rho2}  optimal FR codes
based on  Tur\'an graphs are considered.
In Subsection~\ref{subsec:k-optimal} $k$-optimal FR codes based on different regular graphs with a given girth are presented.
In Subsection~\ref{subsec:file_size} FR codes with a given file size are considered. The constructions raise many interesting questions in graph theory which are discussed in this section.

 In Section~\ref{sec:rho>2} we consider FR codes with
$\rho>2$. In this case, a failed node can  be repaired from several sets of other nodes, in contrast to the case with $\rho=2$, in which a failed node can be repaired from a unique subset of $\alpha$ available nodes. One construction is based on a family of combinatorial designs, called transversal designs. This construction generalizes the construction based on Tur\'an graphs for $\rho=2$. Another construction is based on biregular bipartite graphs with a given girth. One important family of such graphs are the generalized polygons. We analyze the parameters of the constructed codes and find the conditions for which the bound in~(\ref{eq:bound2}) is attained.

In Section~\ref{sec:genHamming} we establish a connection between the file size hierarchy  and the generalized Hamming weight hierarchy.
In fact, the sizes of the
file related to the increasing values of $k$'s form an integer sequence of nondecreasing
values which can be viewed as a generalized definition for Hamming weights
for constant weight codes.
In Section~\ref{sec:bound_k} we provide a lower bound on the reconstruction degree and present some FR codes which attain this bound.

In Section~\ref{sec:batch} we analyze additional properties of  FR codes by establishing  a connection between FR codes and \emph{combinatorial batch codes}.
We propose a novel family of codes for DSS, called \emph{fractional repetition batch codes} (FRB), which enable exact uncoded repairs and load balancing that can be performed by several users in parallel. We present examples of constructions of FRB codes based on bipartite complete graphs, graphs with large girth, transversal designs and affine planes.
Conclusion are given in Section~\ref{sec:conclusion}.


\section{Preliminaries}
\label{sec:preliminaries}
In this section we provide the definitions of all the combinatorial objects used for the constructions of FR codes presented in this paper.

\subsection{Regular and Biregular Graphs}
\label{subsec:graphs}
A graph $G=(V,E)$  consists of a vertex set $V$ and an edge set $E$, where an edge is an unordered pair of vertices of~$V$. For an edge $e=\{x,y\}\in E$ we say that $x$ and $y$ are \emph{adjacent} and that $x$ and $e$ are \emph{incident}. The \emph{degree} of a vertex $x$ is the number of edges incident with it. We say that a graph $G$ is \emph{\emph{regular}} if all its vertices have the same degree and $G$ is $d$-\emph{regular} if each vertex has degree $d$. A graph is called \emph{connected} if there is path between any pair of vertices.
A graph is called \emph{complete} if every pair of vertices are adjacent. A complete graph on $n$ vertices is denoted by $K_n$. A \emph{subgraph} $G_2=(V_2,E_2)$ of a graph $G_1=(V_1,E_1)$ is a graph  such that $V_2\subseteq V_1$ and $E_2\subseteq E_1$. A subgraph $G_2$ of a graph $G_1$ is called \emph{induced}  if $E_2=\{\{x,y\}:x,y\in V_2, \{x,y\}\in E_1\}$.
A \emph{k-clique} in a graph $G$ is a complete subgraph of $G$ with $k$ vertices.
The \emph{complement} of a graph $G=(V,E)$, denoted by $\overline{G}= (V, \overline{E})$, is a graph with the same vertex set $V$ but whose edge set $\overline{E}$ consists of all the edges not contained in $G$, i.e., $\overline{E}=\{\{x,y\}:x,y \in V, x\neq y, \{x,y\}\notin E\}$.

The \emph{incidence matrix }$\textbf{I}(G)$ of a graph $G=(V,E)$ is a binary $|V|\times |E|$ matrix with rows and columns indexed by the vertices and edges of $G$, respectively, such that $(\textbf{I}(G))_{i,j}=1$ if and only if vertex $i$ and edge $j$ are incident.

A graph $G$ is called \emph{bipartite} ($r$-\emph{partite}, respectively) if its vertex set can be partitioned into two ($r$, respectively) parts such that every two adjacent vertices belong to two different parts.
 A bipartite graph is denoted by $G=(L\cup R, E)$, where $L$ is the left part and $R$ is the right part of $G$.
 A bipartite graph $G$ is called \emph{biregular} if the degree of the vertices in one part is $d_1$ and the degree of the vertices in the other part is $d_2$.
 An $r$-partite graph is called \emph{complete} if every two vertices from two different parts are connected by an edge.
The complete bipartite graph with left part of size $n$ and right part of size $m$ is denoted by $K_{n,m}$.
Note that in $K_{n,m}$ the degree of a vertex in the left part is $m$ and the degree of a vertex in the right part is $n$.

The following theorem, known as \emph{Tur\'an's theorem},
provides a necessary condition that a graph
does not contain a clique of a given size~\cite[p. 58]{Jukna}.
\begin{theorem}
\label{thm:turan}
If a graph $G = (V,E)$ on $n$ vertices has no $(r+1)$-clique, $r \geq 2$,
then
\begin{equation}
\label{turanTheorem}
 |E| \leq(1-\frac{1}{r})\frac{n^2}{2}.
\end{equation}
\end{theorem}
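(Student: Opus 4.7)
The plan is to prove Theorem \ref{thm:turan} by induction on $n$, with $r$ fixed. The base case handles $n \leq r$: here $|E| \leq \binom{n}{2} = n(n-1)/2$, and a quick algebraic check shows $n(n-1)/2 \leq (1-1/r)n^2/2$ precisely when $n \leq r$, so the inequality holds trivially.

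For the inductive step with $n > r$, I would first argue that we may assume $G$ contains an $r$-clique. If not, add edges to $G$ one at a time, never creating an $(r+1)$-clique (which is automatic if no $r$-clique has yet appeared, since two non-adjacent vertices joined to no $r$-clique cannot create one); keep adding until either an $r$-clique forms or $G$ becomes complete. Since $n > r$, the complete graph $K_n$ contains an $r$-clique, so an $r$-clique must appear first. Adding edges only strengthens the hypothesis, so it suffices to prove the bound assuming $G$ contains an $r$-clique $K$.

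Next, fix an $r$-clique $K \subseteq V$ and let $H$ be the induced subgraph on $V \setminus K$, which has $n - r$ vertices. Partition $E$ into three parts: edges inside $K$, edges between $K$ and $V \setminus K$, and edges inside $H$. The first contributes exactly $\binom{r}{2}$. For the second, observe that no vertex $v \in V \setminus K$ can be adjacent to all $r$ vertices of $K$, for otherwise $K \cup \{v\}$ would be an $(r+1)$-clique; hence at most $(r-1)(n-r)$ edges of this type exist. For the third, $H$ itself contains no $(r+1)$-clique, so the induction hypothesis gives $|E(H)| \leq (1-1/r)(n-r)^2/2$. Summing the three bounds yields
\[
|E| \;\leq\; \binom{r}{2} + (r-1)(n-r) + \left(1-\frac{1}{r}\right)\frac{(n-r)^2}{2}.
\]

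The remaining task is a routine algebraic verification that the right-hand side equals $(1-1/r)n^2/2$. Rewriting $(1-1/r) = (r-1)/r$ and expanding $n^2 - (n-r)^2 = r(2n-r)$, one checks that $\binom{r}{2} + (r-1)(n-r) = (r-1)(2n-r)/2$, which matches the gap between $(1-1/r)n^2/2$ and $(1-1/r)(n-r)^2/2$. I do not expect any genuine obstacle: the only subtle point is justifying the existence of an $r$-clique (equivalently, reducing to an edge-maximal example), and the edge-addition argument above handles that cleanly. A minor variant would be to induct simultaneously on $r$ as well, treating graphs without an $r$-clique separately via the stronger bound for $r-1$, but the single-variable induction on $n$ is the most transparent route.
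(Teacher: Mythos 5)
The paper does not prove this statement at all: Tur\'an's theorem is quoted as a known result with a citation to Jukna's \emph{Extremal Combinatorics}, so there is no in-paper argument to compare against. Your proof is the classical ``delete an $r$-clique'' induction and it is correct and complete. The two points that could have been gaps are both handled: (i) the reduction to the case where an $r$-clique exists is justified, since joining two non-adjacent vertices of a graph with no $r$-clique cannot create an $(r+1)$-clique (the other $r-1$ vertices of such a clique together with one endpoint would already form an $r$-clique of old edges), and the process must terminate with an $r$-clique because $K_n$ contains one for $n>r$; (ii) the induction is well-founded as strong induction on $n$, since the step invokes the hypothesis for $n-r<n$ and the base case covers all $n\leq r$. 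The arithmetic also checks out: $\binom{r}{2}+(r-1)(n-r)=\tfrac{(r-1)(2n-r)}{2}=\bigl(1-\tfrac{1}{r}\bigr)\tfrac{n^2-(n-r)^2}{2}$, so the three contributions sum to exactly $\bigl(1-\tfrac{1}{r}\bigr)\tfrac{n^2}{2}$.
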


\begin{corollary}
\label{cor:Turan}
If $G$ is an $\alpha$-regular graph which does not contain an $(r+1)$-clique then
$
n\geq \frac{r}{r-1}\alpha.
$
\end{corollary}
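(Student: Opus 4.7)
The proof is a direct application of Theorem~\ref{thm:turan}, so the plan is very short. The key observation is that for an $\alpha$-regular graph on $n$ vertices, the handshake identity immediately yields $|E|=\frac{n\alpha}{2}$, which converts Turán's inequality into a linear inequality between $n$ and $\alpha$.

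First, I would record that $2|E| = \sum_{v\in V}\deg(v) = n\alpha$, so $|E| = \frac{n\alpha}{2}$. Next, since $G$ contains no $(r+1)$-clique and $r\geq 2$, Theorem~\ref{thm:turan} applies and gives
\begin{equation*}
\frac{n\alpha}{2} \;=\; |E| \;\leq\; \left(1-\frac{1}{r}\right)\frac{n^2}{2} \;=\; \frac{(r-1)n^2}{2r}.
\end{equation*}
Finally, I would cancel the factor $n/2$ from both sides (noting $n>0$, which we may assume since otherwise the inequality is vacuous) to obtain $\alpha \leq \frac{r-1}{r}\,n$, and then rearrange to $n \geq \frac{r}{r-1}\alpha$, as desired.

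There is no real obstacle here; the only thing to be careful about is that Theorem~\ref{thm:turan} is stated with the hypothesis $r\geq 2$, so the corollary implicitly inherits the same assumption, and that the denominator $r-1$ in the conclusion is then nonzero. No regularity is used beyond the edge-count identity, and no structural information about $G$ (connectivity, bipartiteness, etc.) is needed.
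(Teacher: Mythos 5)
Your proof is correct and is exactly the intended derivation: the paper states Corollary~\ref{cor:Turan} without proof as an immediate consequence of Theorem~\ref{thm:turan}, and your argument (handshake identity $|E|=n\alpha/2$ combined with Tur\'an's bound, then cancelling $n/2$) is the standard and essentially unique way to obtain it. Your remarks about inheriting $r\geq 2$ and needing nothing beyond the edge count are accurate.
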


We consider a family of regular graphs, called \emph{Tur\'an graphs},  which attain the bound of Corollary~\ref{cor:Turan}, in other words,  have the smallest number of vertices.
Let $r,n$ be two integers such that $r$ divides $n$. An $(n,r)$-\emph{Tur\'an graph} is defined as a regular complete $r$-partite graph, i.e., a graph formed by partitioning a set of $n$ vertices into $r$ parts of size $\frac{n}{r}$ and connecting each two vertices of different parts by an edge.  Clearly, an $(n,r)$-Tur\'an graph does not contain a clique of size $r+1$  and it is an $(r-1)\frac{n}{r}$-regular graph.

We now turn to another family of graphs, called \emph{cages}.
A \emph{cycle} in a graph $G$ is a connected subgraph of $G$ in which each vertex has degree two. The \emph{girth} of a graph is the length of its shortest cycle.
A $(d,g)$-\emph{cage} is a $d$-regular graph with girth $g$ and minimum number of vertices. For example, a $(d,4)$-cage is a complete bipartite graph $K_{d,d}$. Constructions for cages are known for $g\leq 12$~\cite{ExJa08}. Let $\widehat{n}_0(d,g)$ be the minimum number of vertices in a $(d,g)$-cage. A lower bound on the number of vertices in a $(d,g)$-cage is given in the  following theorem, known as \emph{Moore bound}~\cite[p. 180]{Biggs}.

\begin{theorem}
\label{MooreBound}
The number of vertices in a $(d,g)$-cage is at least
\begin{equation}
\label{eq:cage}
   n_0(d,g)=\left\{\begin{array}{c c}
         1+d\sum_{i=0}^{\frac{g-3}{2}}(d-1)^i &\;\textmd{ if }g \textmd{ is odd} \\
            2\sum_{i=0}^{\frac{g-2}{2}}(d-1)^i&\;\textmd{ if }g \textmd{ is even}
         \end{array}\right..
       \end{equation}
\end{theorem}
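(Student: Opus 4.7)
The plan is the classical BFS-tree argument, split into the odd-girth and even-girth cases.

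First I would handle the odd case, $g=2t+1$. I would pick an arbitrary vertex $v_0$ of the $(d,g)$-cage $G$ and consider the sets $V_i$ of vertices at distance exactly $i$ from $v_0$, for $0\le i\le t$. The key claim is that these sets are pairwise disjoint and that $|V_i|=d(d-1)^{i-1}$ for $1\le i\le t$. Disjointness is clear from the distance definition. For the size, I would argue by induction: each vertex in $V_{i-1}$ has degree $d$, and a short double-counting/path argument shows that each of its neighbors lies in $V_{i-2}\cup V_{i-1}\cup V_i$, with exactly one neighbor in $V_{i-2}$ (the parent on the shortest path) and none in $V_{i-1}$ when $i\le t$; otherwise we would produce a closed walk of length at most $2i\le 2t<g$, which would contain a cycle shorter than $g$, contradicting the girth. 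Summing $1+d\sum_{i=0}^{t-1}(d-1)^i$ then gives the odd-case bound.

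Next I would handle the even case, $g=2t+2$. Instead of a root vertex I start with a root edge $\{u,w\}$ and let $U_i$ (resp.\ $W_i$) be the set of vertices whose shortest path to $\{u,w\}$ has length $i$ and starts at $u$ (resp.\ $w$), for $0\le i\le t$. Again $|U_i|=(d-1)^i$ and $|W_i|=(d-1)^i$ by the same neighbor-counting argument, and one must verify that $U_i$ and $W_j$ are disjoint for all relevant $i,j$: any vertex lying in both would give two internally disjoint paths to $u$ and $w$, closing into a cycle of length at most $i+j+1\le 2t+1<g$, again a contradiction. Summing $2\sum_{i=0}^{t}(d-1)^i$ yields the even-case bound.

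The only delicate step is the no-short-cycle deduction that justifies the two disjointness/freshness claims; once one is careful about whether two shortest paths can meet early (which would produce a cycle strictly shorter than $g$), everything else is a geometric-series computation. I would present the forbidden-short-cycle sublemma once, cleanly, and then apply it in both parities to avoid repeating the case analysis.
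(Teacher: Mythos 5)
The paper does not prove this statement at all: it is quoted as the classical Moore bound with a citation to Biggs's \emph{Algebraic Graph Theory}, so there is no in-paper argument to compare yours against. Your proposal is the standard and correct proof of that classical result: the distance-partition (BFS) argument rooted at a vertex for odd girth and at an edge for even girth, with the single sublemma that any early coincidence of shortest paths (a back-edge, a cross-edge within a level, or a common neighbour at the next level) would close into a cycle of length strictly less than $g$. The only point to state explicitly when you write it up is the direction of the counting in each level: each vertex of $V_{i-1}$ sends exactly $d-1$ edges forward to $V_i$, and each vertex of $V_i$ receives exactly one such edge, which together give $|V_i|\geq(d-1)|V_{i-1}|$ and hence the geometric sums $1+d\sum_{i=0}^{(g-3)/2}(d-1)^i$ and $2\sum_{i=0}^{(g-2)/2}(d-1)^i$. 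Your plan is sound as stated.
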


Similar result to the Moore bound for biregular bipartite graphs can be found in~\cite{ABV09,FLSUW95}.


\subsection{Combinatorial Designs}
\label{subsec:designs}

A \emph{set system} is a pair $(\cP,\cB)$, where $\cP=\{p_i\}$ is a finite nonempty set of \emph{points} and  $\cB=\{B_i\}$ is a finite
nonempty set of subsets of $\cP$ called \emph{blocks}. A \emph{design} $\cD$ is set system with a constant number of points per block and no
repeated blocks.  A design $\cD$ can be described by an \emph{incidence matrix} $\textbf{I}(\cD)$, which is a binary $|\cP|\times|\cB|$ matrix, with rows indexed by the points, columns indexed by the blocks, where
\begin{equation*}
\left.\begin{array}{c}
        (\textbf{I}(\cD))_{i,j}=\left\{\begin{array}{cc}
         1 &\; \textmd{ if }p_i\in B_j\\
            0&\; \textmd{ if }p_i\notin B_j
         \end{array}\right.
         \end{array}\right..
\end{equation*}

The \emph{incidence} graph $G_I(\cD)=(V,E)$ of $\cD$  is  the bipartite graph with the vertex set $V=\cP\cup\cB$, where  $\{p,B\}\in E$
if and only if $p\in B$, for $p\in \cP$, $B\in\cB$.

A \emph{transversal design} of group size $h$ and block size $\ell$,  denoted by $\text{TD}(\ell, h)$
is a triple $(\cP,\mathcal{G},\mathcal{B})$, where

\begin{enumerate}
\item $\cP$ is a set of $\ell h$ \emph{points};

\item $\mathcal{G}$ is a partition of $\cP$ into $\ell$ sets
(\emph{groups}), each one of size $h$;

\item $\mathcal{B}$ is a collection of $\ell$-subsets of $\cP$
(\emph{blocks});

\item each block meets each group in exactly one point;

\item any pair of points from different groups is contained in exactly one block.
\end{enumerate}


 The properties of a transversal design $\text{TD}(\ell,h)$ which will be useful for our constructions  are summarized in the following lemma~\cite{Anderson}.

\begin{lemma}
\label{lm:TDparameters}
Let $(\cP,\cG,\cB)$ be a transversal design $\text{TD}(\ell,h)$. Then
\begin{itemize}
  \item The number of points is given by $|\cP|=\ell h$;
  \item The number of groups is given by $|\cG|=\ell$;
  \item The number of blocks is given by $|\cB|=h^2$;
  \item The number of blocks that contain a given point is equal to $h$.
  \item The girth of the incidence graph of a transversal design is equal to $6$.
\end{itemize}
\end{lemma}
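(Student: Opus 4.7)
The plan is to prove the five items in order; the first two are read off from the definition, while items three and four reduce to short counting arguments and item five requires a separate combinatorial construction.

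Items $|\cP|=\ell h$ and $|\cG|=\ell$ are immediate from conditions (1) and (2). For $|\cB|=h^{2}$ I would double count the incidences between blocks and unordered pairs of points lying in two distinct groups. By condition (4) every block meets each group in exactly one point, so each block contains exactly $\binom{\ell}{2}$ such pairs; on the other hand, by condition (5), each of the $\binom{\ell}{2}h^{2}$ cross-group pairs lies in exactly one block, giving $|\cB|\binom{\ell}{2}=\binom{\ell}{2}h^{2}$. For the number of blocks through a fixed point $p\in G_{1}$, I would fix any other group $G_{2}$: by condition (5), for every $q\in G_{2}$ there is a unique block $B_{p,q}$ containing both $p$ and $q$, and two different choices of $q$ yield different blocks (otherwise one block would meet $G_{2}$ in two points, violating condition (4)). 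Conversely, every block through $p$ meets $G_{2}$ in some point $q$, hence coincides with $B_{p,q}$; this gives exactly $h$ blocks through $p$.

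For the girth of the incidence graph $G_{I}(\cD)$ I would argue as follows. The graph is bipartite with parts $\cP$ and $\cB$, so any cycle has even length. A $4$-cycle would correspond to two distinct points lying in two common blocks; but two points in the same group share no block (condition (4)) and two points in different groups share exactly one block (condition (5)), so no $4$-cycle exists and the girth is at least $6$. To realize a $6$-cycle, I would pick $p_{1}\in G_{1}$, $p_{2}\in G_{2}$, and let $B$ be the unique block containing both. The block $B$ meets $G_{3}$ in exactly one point $p_{3}^{*}$; choose any $p_{3}\in G_{3}\setminus\{p_{3}^{*}\}$. Then the three blocks $B_{p_{1},p_{2}}$, $B_{p_{2},p_{3}}$ and $B_{p_{1},p_{3}}$ are pairwise distinct (if two coincided, that common block would contain all three of $p_{1},p_{2},p_{3}$, forcing it to equal $B_{p_{1},p_{2}}=B$ and $p_{3}=p_{3}^{*}$, a contradiction), and together with $p_{1},p_{2},p_{3}$ they trace out a $6$-cycle.

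The main obstacle is the girth statement, since the lower bound on the girth and the construction of an actual $6$-cycle rely on combining conditions (4) and (5) in slightly different ways, and the $6$-cycle construction implicitly needs $\ell\geq 3$ and $h\geq 2$ (otherwise $\text{TD}(\ell,h)$ is degenerate and no such cycle exists). The remaining items are purely book-keeping once the incidence-counting viewpoint is in place.
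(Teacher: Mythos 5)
Your proof is correct, and all five counting arguments are the standard ones. Note, however, that the paper does not prove this lemma at all -- it is stated as a summary of known facts with a citation to Anderson's book -- so there is no in-paper argument to compare against. Your closing caveat that the $6$-cycle construction needs $\ell\geq 3$ and $h\geq 2$ is well taken and worth making explicit: for $\ell=2$ the design $\text{TD}(2,h)$ is equivalent to $K_{h,h}$ (as the paper's own remark notes), and its incidence graph has girth $8$, not $6$, so the fifth bullet of the lemma is literally true only for $\ell\geq 3$; this is harmless for the paper since the girth property is only invoked for transversal designs with block size at least $3$.
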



A $\text{TD}(\ell,h)$ is called \emph{resolvable} if the set $\mathcal{B}$
can be partitioned into subsets $\mathcal{B}_1,...,\mathcal{B}_h$, each one contains $h$ blocks,
such that each element of $\cP$ is contained in exactly one block of
each $\mathcal{B}_i$, i.e., the blocks of $\cB_i$ partition the set $\cP$.
Resolvable transversal design $\text{TD}(\ell,q)$ is known to exist for any $\ell \leq q$ and prime power $q$~\cite{Anderson}.


\begin{remark}A $\textmd{TD}(2,h)$, for any integer $h\geq 2$ is equivalent to the complete bipartite graph $K_{h,h}$.
\end{remark}

Next, we consider  two families of designs whose incidence graphs  attain the Moore bound~(\ref{eq:cage}).

A \emph{projective plane} of order $n$ denoted by $\textmd{PG}(2,n)$, is a design $(\cP,\cB)$, such that $|\cP|=|\cB|=n^2+n+1$, each block of $\cB$ is of size $n+1$, and any two points are contained in exactly one block. Note that any two blocks in $\cB$ have exactly one common point. It is well known (see~\cite{Godsil}) that the incidence graph of a projective plane has girth $6$.

A \emph{generalized quadrangle} of order $(s,t)$, denoted by $\textmd{GQ}(s,t)$ is a design $(\cP,\cB)$, where
\begin{itemize}
  \item Each point $p\in \cP$ is incident with $t+1$ blocks, and each block $B\in\cB$ is incident with $s+1$ points.
  \item Any two blocks have at most one common point.
  \item For any  pair $(p,B)\in \cP\times\cB$, such that $p\notin B$, there is exactly one block $B'$
 incident with $p$, such that $|B'\cap B|=1$.
\end{itemize}
In a generalized quadrangle $\textmd{GQ}(s,t)$, the number of points $|\cP|=(s+1)(st+1)$, the number of blocks $|\cB|=(t+1)(st+1)$ and the girth of the incidence graph is $8$~\cite{Godsil}.

We note that transversal designs, projective planes, and generalized quadrangles belong to a class of  designs called \emph{partial geometries}.
In addition, projective planes and generalized quadrangles are examples of designs called \emph{generalized polygons} (or $n$-\emph{gons}). Their incidence graphs have girth $2n$ and they attain the Moore bound. Such structures are known to exist only for $n\in\{3,4,6,8\}$~\cite{Godsil}.

\subsection{FR Codes based on Graphs and Designs }
\label{subsec:codes}

Let $C$ be an $(n,\alpha,\rho)$ FR code. $C$ can be described by an \emph{incidence matrix} $\textbf{I}(C)$, which is an $n\times \theta$ binary matrix, $\theta=\frac{n\alpha}{\rho}$, with rows indexed by the nodes of the code and columns indexed by the symbols of the corresponding MDS codeword, such that $(\textbf{I}(C))_{i,j}=1$ if and only if node $i$ contains symbol $j$.

Let $G$ be an $\alpha$-regular graph with $n$ vertices. We say that an $(n,\alpha,\rho=2)$ FR code $C$ is based on $G$ if $\textbf{I}(C)=\textbf{I}(G)$. Such a code will be denoted by $C_G$.
It can be readily verified that any $(n,\alpha,2)$ FR code can be represented by an $\alpha$-regular graph with $n$ vertices.

Let $\cD=(\cP,\cB)$ be a design with $|\cP|=n$ points such that each block $B\in\cB$ contains $\rho$ points and each point $p\in \cP$ is contained in $\alpha$ blocks. We say that an $(n,\alpha,\rho)$ FR code $C$ is based on $\cD$ if $\textbf{I}(C)=\textbf{I}(\cD)$. Such a code will be denoted by $C_{\cD}$.


\section{Fractional Repetition Codes with Repetition Degree 2}
\label{sec:rho2}

In this section we present constructions of optimal and $k$-optimal FR codes with repetition degree  $\rho=2$.  These constructions are based on different types of regular graphs and are given in Subsections~\ref{subsec:constructions_rho2} and ~\ref{subsec:k-optimal}. In Subsection~\ref{subsec:file_size}  the properties of these graphs are investigated in order to present FR codes which allow to store a file of any given size. To avoid triviality we assume throughout the section that $\alpha>2$.

First, we present the following useful lemma which shows a connection between the problem of finding the file size of an FR code based on a graph and the edge isoperimetric problem on graphs~\cite{Bez99}.

\begin{lemma}
\label{lm:isoperimetric}
Let $G=(V,E)$ be an $\alpha$-regular graph and let $C_G$ be the FR code based on $G$. We denote by $G_k$ the family of induced subgraphs of $G$ with $k$ vertices, i.e.,
$$G_k=\{G'=(V',E'):|V'|=k, G'\textmd{ is an induced subgraph of }G\}.$$
Then the file size $M(k)$ of $C_G$ is given by
$$M(k)=k\alpha -\max_{G'\in G_k}|E'|.
$$
\end{lemma}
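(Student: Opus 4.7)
The plan is to unpack what the code $C_G$ looks like combinatorially and then count. Recall from Subsection~\ref{subsec:codes} that in an FR code based on an $\alpha$-regular graph $G=(V,E)$, the nodes of the code correspond to vertices of $G$ and the symbols correspond to edges of $G$, with the set $N_v$ stored at node $v$ being precisely the set of edges of $G$ incident to $v$. Regularity gives $|N_v|=\alpha$ for every $v\in V$, and each edge, being incident to two vertices, belongs to exactly two of the $N_v$'s, which is consistent with $\rho=2$. By definition, $M(k)=\min_{|I|=k}|\cup_{v\in I}N_v|$, so the whole task is to express the size of this union of edge-sets in terms of the induced subgraph on $I$.

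The next step is a routine double-counting. Fix any $I\subseteq V$ with $|I|=k$ and consider the multiset sum $\sum_{v\in I}|N_v|=k\alpha$. An edge $e=\{x,y\}\in E$ with at least one endpoint in $I$ appears in $\cup_{v\in I}N_v$ exactly once; on the other hand, $e$ is counted in $\sum_{v\in I}|N_v|$ once for each endpoint lying in $I$. Hence an edge with exactly one endpoint in $I$ contributes $1$ to both quantities, while an edge with both endpoints in $I$, i.e., an edge of the induced subgraph $G[I]$, contributes $2$ to the sum but only $1$ to the union. Therefore
\begin{equation*}
\Bigl|\bigcup_{v\in I}N_v\Bigr|=\sum_{v\in I}|N_v|-|E(G[I])|=k\alpha-|E(G[I])|.
\end{equation*}

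Finally, I would take the minimum over all $k$-subsets $I\subseteq V$. Since $k\alpha$ does not depend on $I$, minimizing the union size is equivalent to maximizing $|E(G[I])|$ over the same family, and this family is exactly $G_k$ as defined in the statement. This yields
\begin{equation*}
M(k)=\min_{|I|=k}\Bigl|\bigcup_{v\in I}N_v\Bigr|=k\alpha-\max_{G'\in G_k}|E'|,
\end{equation*}
as claimed. There is no real obstacle here; the only subtle point is to be careful about double-counting edges with both endpoints in $I$, which is precisely the step that converts the union bound into an exact equality and reveals the connection to the edge-isoperimetric problem that the lemma is designed to highlight.
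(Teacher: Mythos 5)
Your proof is correct and follows essentially the same route as the paper's: the paper writes $k\alpha=2|E'|+|E'_{\textmd{cut}}|$ and $M(k)=\min_{G'\in G_k}\{|E'|+|E'_{\textmd{cut}}|\}$, which is exactly your double-counting of edges inside $I$ versus edges with one endpoint in $I$. The only difference is presentational; your version makes the identification of nodes with vertices and symbols with edges slightly more explicit.
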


\begin{proof}
For each induced subgraph $G'=(V',E')\in G_k$ we define  $E'_{\textmd{cut}}$ to be the set of all the edges of $E$ in the cut between $V'$ and $V\setminus V'$, i.e.,
$$E'_{\textmd{cut}}=\{\{v,u\}\in E: v\in V', u\in V\setminus V'\}.
$$
Clearly,  $k\alpha=2|E'|+|E'_{\textmd{cut}}|$ for every $G'\in G_k$.
Note that $M(k)=\min_{G'\in G_k}\{|E'|+|E'_{\textmd{cut}}|\}$ and hence
$$M(k)=\min_{G'\in G_k}\{|E'|+\alpha k-2|E'|\}=\alpha k-\max_{G'\in G_k}\{|E'|\}.
$$
\end{proof}

\vspace{-.2cm}
\subsection{Optimal FR Codes Based on Tur\'an Graphs }
\label{subsec:constructions_rho2}
We begin our discussion with
the following lemma which follows directly from Lemma~\ref{lm:isoperimetric}.

\begin{lemma}
\label{lm:clique} Let $G$ be an $\alpha$-regular graph with $n$ vertices, and let $M(k)$ be the file size of the corresponding FR code~$C_{G}$.
The graph $G$ contains a $k$-clique if and only if $M(k)=k\alpha-\binom{k}{2}$.
\end{lemma}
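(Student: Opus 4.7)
The plan is to derive this lemma as a direct consequence of Lemma~\ref{lm:isoperimetric}. That lemma already expresses the file size as $M(k) = k\alpha - \max_{G' \in G_k} |E'|$, where $G_k$ is the family of induced subgraphs of $G$ on $k$ vertices. Substituting into the target equation, the condition $M(k) = k\alpha - \binom{k}{2}$ is equivalent to $\max_{G' \in G_k} |E'| = \binom{k}{2}$, so the whole lemma reduces to showing that the latter equality holds iff $G$ contains a $k$-clique.

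For one direction, I would observe that any simple graph on $k$ vertices has at most $\binom{k}{2}$ edges, with equality if and only if it is the complete graph $K_k$. Hence if $\max_{G' \in G_k}|E'| = \binom{k}{2}$, the maximizing induced subgraph $G'$ must be a $K_k$, which exhibits a $k$-clique in $G$. Conversely, if $G$ contains a $k$-clique on vertex set $V' \subseteq V$, then the induced subgraph on $V'$ lies in $G_k$ and has exactly $\binom{k}{2}$ edges, forcing $\max_{G' \in G_k}|E'| \geq \binom{k}{2}$; combined with the general upper bound $\binom{k}{2}$, equality holds.

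There is essentially no obstacle: the only point worth noting is that ``containing a $k$-clique'' is the same as ``containing $K_k$ as an induced subgraph,'' since a clique is automatically induced by its own vertex set (all its pairwise edges are present). With this observation the two directions above give the biconditional immediately, and the proof is a clean two-line argument after invoking Lemma~\ref{lm:isoperimetric}.
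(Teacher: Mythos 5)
Your proof is correct and follows exactly the route the paper intends: the paper states this lemma as following directly from Lemma~\ref{lm:isoperimetric}, and your argument---reducing the claim to $\max_{G'\in G_k}|E'|=\binom{k}{2}$ and noting that an induced subgraph on $k$ vertices has $\binom{k}{2}$ edges iff it is $K_k$---is precisely the omitted verification. The observation that a $k$-clique is automatically an induced $K_k$ is the right point to make explicit.
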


\begin{corollary}
 \label{cor:clique}
 The file size $M(k)$ of an FR code $C_G$, where $G$  is a graph which does not contain a $k$-clique, is strictly larger than the MBR capacity.
\end{corollary}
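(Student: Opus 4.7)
The plan is to combine Lemma \ref{lm:isoperimetric} with Lemma \ref{lm:clique} in a very short argument. By Lemma \ref{lm:isoperimetric}, the file size can be written as
\[
M(k)=k\alpha-\max_{G'\in G_k}|E'|,
\]
and since any graph on $k$ vertices has at most $\binom{k}{2}$ edges (with equality exactly when it is $K_k$), we obtain the universal lower bound $M(k)\geq k\alpha-\binom{k}{2}$ for every $\alpha$-regular graph $G$. This shows that the MBR capacity is always a floor for $M(k)$; what remains is to argue strictness under the no-$k$-clique hypothesis.

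For strictness, I would invoke the contrapositive of Lemma \ref{lm:clique}: if $G$ contains no $k$-clique then $M(k)\neq k\alpha-\binom{k}{2}$. Combined with the inequality above, this forces $M(k)>k\alpha-\binom{k}{2}$, which is precisely the claim. Alternatively one can argue directly from Lemma \ref{lm:isoperimetric}: the absence of a $k$-clique in $G$ means that no induced subgraph $G'\in G_k$ can equal $K_k$, so every such $G'$ has $|E'|\leq\binom{k}{2}-1$, hence $\max_{G'\in G_k}|E'|<\binom{k}{2}$, and the strict inequality on $M(k)$ follows.

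There is no real obstacle here: the work has been done in Lemmas \ref{lm:isoperimetric} and \ref{lm:clique}, and the corollary amounts to reading off the strict version of the inequality. The only subtlety worth noting explicitly is that $G_k$ is nonempty (which requires $k\leq n$, implicit since we are speaking of a reconstruction degree $k\leq\alpha<n$) so that the maximum in Lemma \ref{lm:isoperimetric} is well defined.
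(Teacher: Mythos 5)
Your proposal is correct and matches the paper's intended argument: the corollary is stated in the paper as an immediate consequence of Lemma~\ref{lm:clique} (together with the universal bound $M(k)\geq k\alpha-\binom{k}{2}$ coming from Lemma~\ref{lm:isoperimetric}), which is exactly the combination you spell out. Nothing is missing.
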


One of the main advantages of an FR code is that its file size usually exceeds the MBR capacity. Hence, as a consequence of Corollary~\ref{cor:clique}, we consider different families of regular graphs which do not contain a $k$-clique for a given $k$.
Therefore, since Tur\'an graphs have the minimum number of vertices among the graphs which
do not contain a  clique of a given size (see Corollary~\ref{cor:Turan}),  we consider FR codes based on Tur\'an graphs.
The following theorem shows that FR codes obtained from Tur\'an graphs attain the upper bound in~(\ref{eq:bound2}) for all $k\leq \alpha$ and hence they are optimal FR codes.

\begin{theorem}
\label{trm:Turan}
Let $T=(V,E)$ be an $(n,r)$-Tur\'an graph, $\alpha=(r-1)\frac{n}{r}$, and let $k$ be an integer such that $1\leq k\leq \alpha$. If $k=br+t$ for nonnegative integers $b,t$ such that $t\leq r-1$ then the $(n,\alpha,2)$ FR code $C_{T}$ based on $T$ has file size
\begin{equation}
\label{eq:turanRate1}
M(k)=k\alpha-\binom{k}{2}+r\binom{b}{2}+bt,
\end{equation}
 which attains the upper bound in~(\ref{eq:bound2}).
\end{theorem}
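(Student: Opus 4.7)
The plan is to apply Lemma \ref{lm:isoperimetric} to recast $M(k)$ as an edge-maximization problem on induced subgraphs of $T$, solve that extremal problem by a convexity argument, and then verify by induction that the resulting formula coincides with the recursion defining $\varphi(k)$ in (\ref{eq:bound2}).

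First, I would note that since $T$ is complete $r$-partite with parts of size $n/r$, each induced subgraph on $k$ vertices is determined up to isomorphism by a tuple $(k_1,\ldots,k_r)$ with $0 \le k_i \le n/r$ and $\sum_i k_i = k$; the only non-edges of $T$ lie within a common part, so the number of edges of such an induced subgraph equals $\binom{k}{2} - \sum_{i=1}^{r}\binom{k_i}{2}$. Applying Lemma \ref{lm:isoperimetric} gives
\begin{equation*}
M(k) = k\alpha - \binom{k}{2} + \min_{(k_1,\ldots,k_r)} \sum_{i=1}^{r}\binom{k_i}{2}.
\end{equation*}

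Next, a standard smoothing argument (whenever $k_i - k_j \ge 2$, replacing $(k_i, k_j)$ by $(k_i - 1, k_j + 1)$ strictly decreases the sum) shows that the minimum is attained by the most balanced partition, namely $t$ coordinates equal to $b+1$ and $r-t$ coordinates equal to $b$. This tuple is feasible because $k \le \alpha = (r-1)n/r$ forces $b < n/r$, hence $b+1 \le n/r$. Substituting gives
\begin{equation*}
\min \sum_{i=1}^{r}\binom{k_i}{2} = t\binom{b+1}{2} + (r-t)\binom{b}{2} = r\binom{b}{2} + tb,
\end{equation*}
which yields the formula (\ref{eq:turanRate1}) for $M(k)$.

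To prove attainment of (\ref{eq:bound2}), I would induct on $k$. The base case $M(1) = \alpha = \varphi(1)$ is immediate. For the inductive step, a short case analysis on whether $t < r - 1$ (so $(b,t) \mapsto (b, t+1)$) or $t = r - 1$ (so $(b,t) \mapsto (b+1, 0)$) shows that in both cases
\begin{equation*}
M(k+1) - M(k) = \alpha - (k - b),
\end{equation*}
so it suffices to verify that $\lceil (2M(k) - k\alpha)/(n-k) \rceil = k - b$ for $\rho = 2$.

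The main technical obstacle is this ceiling identity. I plan to attack it by direct substitution: plugging $k = br + t$ and $\alpha = (r-1)n/r$ into $2M(k) - k\alpha = k\alpha - k(k-1) + rb(b-1) + 2bt$ and simplifying yields
\begin{equation*}
(k-b)(n-k) - (2M(k) - k\alpha) = t\!\left(\tfrac{n}{r} - b - 1\right).
\end{equation*}
The right-hand side is nonnegative because $b + 1 \le n/r$, and a brief rearrangement shows it is strictly less than $n - k$ iff $b < n/r$, which again holds. Hence $(2M(k) - k\alpha)/(n-k)$ lies in the half-open interval $(k - b - 1,\, k - b]$, its ceiling equals $k - b$, and the recurrence closes. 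This proves $M(k) = \varphi(k)$ for every $k \le \alpha$, establishing optimality of $C_T$.
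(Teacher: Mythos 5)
Your proposal is correct and follows essentially the same route as the paper: reduce to maximizing edges in a $k$-vertex induced subgraph via Lemma~\ref{lm:isoperimetric}, identify the balanced complete $r$-partite subgraph as extremal, and verify the recursion~(\ref{eq:bound2}) by induction. The only difference is that you actually supply the details the paper leaves as ``one can verify'' (the smoothing argument for the extremal partition, and the explicit ceiling identity $(k-b)(n-k)-(2M(k)-k\alpha)=t(\tfrac{n}{r}-b-1)$ closing the induction), and your complement-based edge count $\binom{k}{2}-\sum_i\binom{k_i}{2}$ reaches~(\ref{eq:turanRate1}) a bit more directly than the paper's pairwise-parts computation in~(\ref{eq:turanRate}).
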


\begin{proof}
By Lemma~\ref{lm:isoperimetric}, the value of $M(k)$ is determined by
 maximum cardinality of the edge set in an induced subgraph $T'=(V',E')$ of $T$, where $|V'|=k$.
One can verify that
since $T$ is a complete regular $r$-partite graph, it follows that the induced subgraph $T'$ with  $E'$ of the  maximum cardinality
is a complete  $r$-partite graph with exactly $t$ parts of size $b+1$ and $r-t$ parts of size $b$.  Hence, the number of edges in $E'$ is given by
\begin{equation*}
\binom{t}{2}(b+1)^2+\binom{r-t}{2}b^2+t(r-t)(b+1)b.
\end{equation*}
Thus, by Lemma~\ref{lm:isoperimetric},
\begin{equation}
\label{eq:turanRate}
M(k)=k\alpha -\left[\binom{t}{2}(b+1)^2+\binom{r-t}{2}b^2+t(r-t)(b+1)b\right].
\end{equation}
It can  be easily  verified that~(\ref{eq:turanRate1}) equals to~(\ref{eq:turanRate}). In addition, one can verify (by induction) that for the parameters of the constructed code $C_T$ the bound in~(\ref{eq:bound2}) equals to~(\ref{eq:turanRate1}).
\end{proof}

\begin{remark}
Note, that for any  $k> r$, the file size of the code $C_{T}$  is strictly larger than the MBR capacity, i.e.,
$$M(k) > k\alpha-\binom{k}{2}.
$$
\end{remark}

In the following theorem we provide an alternative, simpler representation of a file size for the FR code based on a Tur\'an graph. Obviously, this expression for the file size is equivalent to the expression in~(\ref{eq:turanRate1}). The proof of this theorem is also simpler than the one of Theorem~\ref{trm:Turan}. However, the proof of Theorem~\ref{trm:Turan} could be used for the proof of Theorem~\ref{thm:batch_TD} in Section ~\ref{sec:rho>2} and hence it was given for completeness.

\begin{theorem}
\label{thm:alternativeTuran}
Let $T=(V,E)$ be an $(n,r)$-Tur\'an graph, $r<n$, $\alpha=(r-1)\frac{n}{r}$, and let $k$ be an integer such that $1\leq k\leq \alpha$. Then the $(n,\alpha,2)$ FR code $C_{T}$ based on $T$ has the file size given by
\[M(k)=k\alpha-\left\lfloor\frac{r-1}{r}\cdot\frac{k^2}{2}\right\rfloor.
\]
\end{theorem}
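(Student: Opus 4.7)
The plan is to bypass the case analysis of Theorem~\ref{trm:Turan} and instead apply Lemma~\ref{lm:isoperimetric} together with Tur\'an's theorem (Theorem~\ref{thm:turan}) directly. By Lemma~\ref{lm:isoperimetric}, $M(k)=k\alpha-\max_{G'\in G_k}|E(G')|$, so it is enough to establish that
\[
\max_{G'\in G_k}|E(G')|=\left\lfloor\frac{r-1}{r}\cdot\frac{k^2}{2}\right\rfloor.
\]

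For the upper bound, I would observe that an $(n,r)$-Tur\'an graph is complete $r$-partite and hence contains no $(r+1)$-clique, and neither does any induced subgraph $G'\in G_k$. Applying Tur\'an's theorem to $G'$ yields $|E(G')|\leq \tfrac{r-1}{r}\cdot\tfrac{k^2}{2}$, and integrality of $|E(G')|$ upgrades this to $|E(G')|\leq\lfloor\tfrac{r-1}{r}\cdot\tfrac{k^2}{2}\rfloor$.

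For the matching lower bound I would exhibit an explicit induced subgraph $G^{*}\in G_k$ attaining the bound. Writing $k=br+t$ with $0\leq t<r$ as in Theorem~\ref{trm:Turan}, the assumption $k\leq\alpha=(r-1)n/r$ forces $b+1\leq n/r$, so one may choose $b+1$ vertices from each of $t$ parts of $T$ and $b$ vertices from each of the remaining $r-t$ parts. The induced subgraph $G^{*}$ is then itself a complete $r$-partite graph with parts as balanced as possible, i.e., an $(k,r)$-Tur\'an graph, and its edge count is therefore the extremal Tur\'an value for $K_{r+1}$-free graphs on $k$ vertices. Substituting the resulting maximum back into the formula from Lemma~\ref{lm:isoperimetric} then produces the stated expression for $M(k)$.

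The main obstacle I expect is the arithmetic identification in the last step: one has to verify that the closed-form edge count $\binom{k}{2}-r\binom{b}{2}-bt$ of the Tur\'an graph on $k$ vertices (implicit in Theorem~\ref{trm:Turan}) agrees with $\lfloor\tfrac{(r-1)k^2}{2r}\rfloor$. Expanding $k=br+t$ and simplifying shows the difference equals $\tfrac{t(r-t)}{2r}$, so the identification reduces to verifying $0\leq t(r-t)<2r$ under the relevant ranges of $r$ and $t$; this is an elementary but slightly delicate modular calculation that one has to carry out carefully rather than hand-wave through.
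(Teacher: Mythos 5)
Your strategy is essentially the paper's own: the paper also proves this theorem by combining Lemma~\ref{lm:isoperimetric} with Tur\'an's theorem to get $|E'|\le\frac{r-1}{r}\cdot\frac{k^2}{2}$, and you improve on it by supplying the matching lower-bound construction (the balanced complete $r$-partite induced subgraph) that the paper's proof of this particular theorem omits. However, the step you defer at the end is not merely delicate --- it fails. Writing $k=br+t$ with $0\le t\le r-1$, the extremal induced subgraph has exactly $\binom{k}{2}-r\binom{b}{2}-bt$ edges, and, as you compute, this equals $\frac{(r-1)k^2}{2r}-\frac{t(r-t)}{2r}$; the identification with $\lfloor\frac{(r-1)k^2}{2r}\rfloor$ therefore requires $t(r-t)<2r$. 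But $\max_{0\le t\le r-1}t(r-t)=\lfloor r^2/4\rfloor$, which reaches $2r$ already at $r=8$, $t=4$. Concretely, for the $(16,8)$-Tur\'an graph and $k=4$ (so $b=0$, $t=4$) the densest induced subgraph is a $K_4$ with $6$ edges, so $M(4)=4\alpha-6$, whereas the stated formula gives $4\alpha-\lfloor 7\rfloor=4\alpha-7$, which even violates the general lower bound $M(k)\ge k\alpha-\binom{k}{2}$ of~(\ref{eq:range}); the same happens for $r=9$, $k=3$, where $t(r-t)=18=2r$.

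So the gap is real: you correctly isolated the condition $0\le t(r-t)<2r$ but asserted it would hold ``under the relevant ranges of $r$ and $t$,'' and it does not. No better choice of induced subgraph can rescue the argument, because your (correct) upper and lower bounds on $\max_{G'\in G_k}|E'|$ genuinely differ whenever $t(r-t)\ge 2r$: the floor of the Tur\'an bound is simply not attained there. In other words, Theorem~\ref{thm:alternativeTuran} is equivalent to~(\ref{eq:turanRate1}) only when $t(r-t)<2r$, which holds for every admissible $k$ precisely when $r\le 7$ (and otherwise only for $t$ close to $0$ or to $r$). The paper's own proof has the identical hole --- it establishes only the inequality $|E'|\le\frac{r-1}{r}\cdot\frac{k^2}{2}$ and then asserts equality with the floor --- so your blind attempt has in effect located the error in the published argument. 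The safe conclusions are $M(k)\ge k\alpha-\lfloor\frac{r-1}{r}\cdot\frac{k^2}{2}\rfloor$ in general, with equality iff $t(r-t)<2r$; for the exact file size one must use Theorem~\ref{trm:Turan}.
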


\begin{proof}
We consider an induced  subgraph $T'=(V',E')$ of the Tur\'an graph $T$ with $|V'|=k$ vertices which has the maximum number of edges. Since $T'$ is a subgraph of $T$, in particular it does not contain $K_{r+1}$. Then, by Tur\'an's theorem (see  Theorem~\ref{thm:turan}), $|E'|\leq \frac{r-1}{r}\frac{k^2}{2}$. Hence by Lemma~\ref{lm:isoperimetric}, assuming that $M(k)$ is an integer, we have
\[M(k)=k\alpha-\left\lfloor\frac{r-1}{r}\cdot\frac{k^2}{2}\right\rfloor.
\]
\end{proof}

\vspace{1,0cm}

The following result for FR codes based on complete bipartite graphs is a special case of Theorem~\ref{thm:alternativeTuran} with $r=2$.

\begin{corollary}
\label{thm:TD(2,a)}
The maximum size $M(k)$ of a file that can be stored using the $(2\alpha,\alpha, 2)$ FR code $C_{K_{\alpha,\alpha}}$ based on a regular complete bipartite graph $K_{\alpha,\alpha}$, for $\alpha\geq 2$, is given by

\begin{equation}\label{eq:rho2 file size}
M(k)=\left\{\begin{array}{cc}
         k\alpha -\frac{k^2}{4}&\;\textmd{ if }k\textmd{ is even} \\
            k\alpha -\frac{k^2-1}{4}&\;\textmd{ if }k\textmd{ is odd}
         \end{array}\right.
\end{equation}
which attains the upper bound in~(\ref{eq:bound2}) for all $1\leq k\leq \alpha$.
\end{corollary}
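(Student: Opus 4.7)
The plan is to derive the corollary as the $r=2$ specialization of Theorem~\ref{thm:alternativeTuran}. First I would observe that the regular complete bipartite graph $K_{\alpha,\alpha}$ is precisely the $(n,r)$-Tur\'an graph with $n = 2\alpha$ and $r = 2$: it partitions $2\alpha$ vertices into two parts of size $\alpha$, every cross pair is joined by an edge, and every vertex has degree $(r-1)\frac{n}{r} = \alpha$. So the hypotheses of Theorem~\ref{thm:alternativeTuran} are satisfied.

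Next I would substitute $r=2$ into the file-size formula from Theorem~\ref{thm:alternativeTuran}, obtaining
\[
M(k) = k\alpha - \left\lfloor \frac{1}{2}\cdot\frac{k^2}{2} \right\rfloor = k\alpha - \left\lfloor \frac{k^2}{4}\right\rfloor.
\]
A quick case split on the parity of $k$ then yields the two cases in~(\ref{eq:rho2 file size}): for even $k$, $k^2$ is divisible by $4$, so $\lfloor k^2/4\rfloor = k^2/4$; for odd $k$, $k^2 \equiv 1 \pmod 4$, so $\lfloor k^2/4 \rfloor = (k^2-1)/4$.

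Finally, optimality with respect to the bound in~(\ref{eq:bound2}) is inherited directly from Theorem~\ref{trm:Turan}, which established that for any $(n,r)$-Tur\'an graph the file size of the associated FR code meets this bound for every $1\leq k\leq \alpha$. Specializing once more to $r=2$, $n=2\alpha$ gives the claim for $C_{K_{\alpha,\alpha}}$.

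There is no real obstacle here; the only thing to be careful about is verifying the parity simplification of the floor and noting that for the $r=2$ case the formula in Theorem~\ref{trm:Turan} (the one involving $b$ and $t$ with $k=br+t$, $t\leq r-1$) reduces to the same quantity, which serves as a sanity check.
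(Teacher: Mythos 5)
Your proposal is correct and follows exactly the paper's route: the paper also presents this corollary as the $r=2$ specialization of Theorem~\ref{thm:alternativeTuran}, with the attainment of the bound in~(\ref{eq:bound2}) inherited from Theorem~\ref{trm:Turan}. The parity case split on $\lfloor k^2/4\rfloor$ is the only computation needed, and you carry it out correctly.
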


\begin{example}
\label{ex:bipartite}
 The $(6,3,2)$ FR code based on $K_{3,3}$ and its file size for $1\leq k\leq 3$ are shown in Fig.~\ref{fig:bipartite}.
\begin{figure*}[t]
 \centering
 \includegraphics[width=0.6\textwidth]{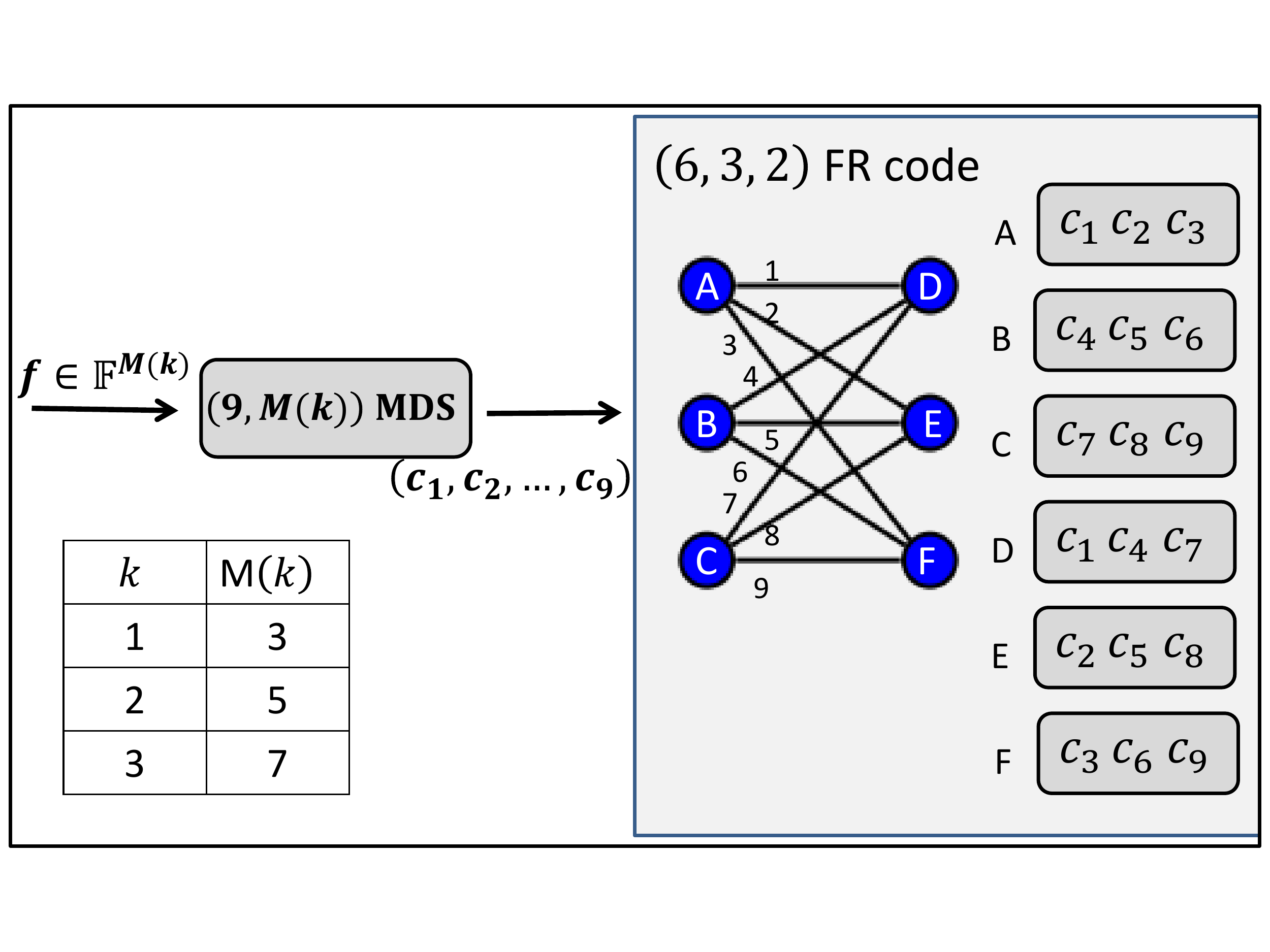}
 \caption{The $((9,M(k)),k,(6,3,2))$ DRESS code with the inner  FR code based on the complete bipartite graph $K_{3,3}$}\label{fig:bipartite}
\end{figure*}
\end{example}


\subsection{$k$-Optimal FR Codes Based on Graphs with a Given Girth}
\label{subsec:k-optimal}

 First, we provide a simple upper bound on the file size of FR codes and show that this bound can be attained. The proof of the following Lemma can be easily verified from~(\ref{eq:bound2}) or Lemma~\ref{lm:isoperimetric}.

\begin{lemma}
\label{lm:upper} If $C$ is an $(n,\alpha, 2)$ FR code then the file size $M(k)$ of $C$, for any $1\leq k\leq \alpha$, satisfies
$$M(k)\leq k\alpha-k+1.
$$
\end{lemma}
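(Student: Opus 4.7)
The plan is to apply Lemma~\ref{lm:isoperimetric} and produce an induced subgraph on $k$ vertices whose edge count is at least $k-1$. Since for an $(n,\alpha,2)$ FR code with underlying $\alpha$-regular graph $G$ we have $M(k) = k\alpha - \max_{G' \in G_k} |E'|$, the stated bound $M(k) \leq k\alpha - k + 1$ is equivalent to the claim that there exists an induced subgraph of $G$ on $k$ vertices containing at least $k-1$ edges, i.e., some connected induced subgraph (a ``star'' suffices).

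The key observation is that the hypothesis $k \leq \alpha$ lets us exhibit such a subgraph essentially for free. Pick any vertex $v$ of $G$; since $G$ is $\alpha$-regular and $k-1 \leq \alpha - 1 < \alpha$, the vertex $v$ has at least $k-1$ distinct neighbors. Choose any $k-1$ of them, call them $u_1, \ldots, u_{k-1}$, and let $G'$ be the induced subgraph on $V' = \{v, u_1, \ldots, u_{k-1}\}$. By construction every edge $\{v, u_i\}$ lies in $E'$, so $|E'| \geq k-1$. Plugging this into the identity of Lemma~\ref{lm:isoperimetric} gives
\[
M(k) \;=\; k\alpha - \max_{G'\in G_k} |E'| \;\leq\; k\alpha - (k-1) \;=\; k\alpha - k + 1.
\]

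There is really no obstacle here: the argument is a one-line application of the regularity of $G$, and the bound $k \leq \alpha$ is precisely what guarantees the star on $k$ vertices exists as an induced configuration. Alternatively one can verify the bound directly from the recursion~(\ref{eq:bound2}) by induction on $k$, checking the base case $\varphi(1) = \alpha$ and using the ceiling in the recursion to show $\varphi(k+1) \leq \varphi(k) + \alpha - 1$ whenever $\varphi(k) \leq k\alpha - k + 1$, but the graph-theoretic route above is cleaner and highlights why the bound is tight exactly when the FR code's graph contains a spanning star on some $k$ vertices (for instance in a complete bipartite graph).
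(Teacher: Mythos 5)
Your proof is correct and follows exactly the route the paper itself indicates: the paper gives no explicit argument, stating only that the lemma ``can be easily verified from~(\ref{eq:bound2}) or Lemma~\ref{lm:isoperimetric},'' and your star construction is precisely the isoperimetric route made explicit (and the $k\leq\alpha$ hypothesis is indeed what makes the star available). One small caveat on your closing aside: it is backwards --- since the spanning star exists in \emph{every} $\alpha$-regular graph with $k\leq\alpha$, tightness of the bound is not characterized by its existence but by the absence of any denser induced subgraph on $k$ vertices, i.e., by the girth exceeding $k$ (Lemma~\ref{lm:girth}); for instance $K_{\alpha+1}$ contains the star yet the bound is far from tight there.
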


By Lemma~\ref{lm:isoperimetric},  to obtain a large value for $M(k)$, every induced subgraph with $k$ vertices should be as sparse as possible. Hence, for the rest of this subsection
we consider graphs with a large girth (usually larger than $k$), in other words, the induced subgraphs with $k$ vertices, $1\leq k\leq \alpha$, will be trees.
Next, we consider the girth of a graph and show that FR codes obtained from a graph with a large enough girth are optimal.

\begin{lemma}
\label{lm:girth}
 Let $G$ be an $\alpha$-regular graph with $n$ vertices and let $M(k)$ be  the file size of the corresponding FR code~$C_{G}$.
The girth of  $G$ is at least $k+1$ if and only if $M(k)=k\alpha -(k-1)$.

\end{lemma}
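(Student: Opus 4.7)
The plan is to reduce the statement to an edge-count question via Lemma~\ref{lm:isoperimetric}, which already gives $M(k)=k\alpha-\max_{G'\in G_k}|E'|$. So the equivalence I want to prove becomes: $\max_{G'\in G_k}|E'|=k-1$ if and only if the girth of $G$ is at least $k+1$. The bridge to girth is the elementary fact that a graph on $k$ vertices has $\leq k-1$ edges exactly when it is a forest, while any graph on $k$ vertices with $\geq k$ edges contains a cycle. I would dispose of the trivial cases $k=1,2$ at the outset (both sides of the ``iff'' hold automatically for simple $\alpha$-regular graphs with $\alpha\geq 1$) and focus on $k\geq 3$.

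For the direction ($\Rightarrow$), suppose the girth is at least $k+1$. Any induced subgraph on $k$ vertices has no cycle (such a cycle would have length $\leq k$), hence is a forest and carries at most $k-1$ edges. To see this bound is tight, pick any vertex $v$ and $k-1$ of its neighbors; this is possible because $k\leq\alpha$. Since the girth is at least $k+1\geq 4$, no two neighbors of $v$ can be adjacent (that would create a triangle through $v$). The induced subgraph on $\{v\}\cup\{k-1\text{ chosen neighbors}\}$ is therefore the star $K_{1,k-1}$, contributing exactly $k-1$ edges. Hence $\max_{G'\in G_k}|E'|=k-1$ and $M(k)=k\alpha-(k-1)$ follows from Lemma~\ref{lm:isoperimetric}.

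The converse ($\Leftarrow$) is the subtler half, and the hard part of the argument. Assume $M(k)=k\alpha-(k-1)$, equivalently $\max_{G'\in G_k}|E'|=k-1$, and suppose for contradiction the girth $g$ of $G$ satisfies $g\leq k$. Take a shortest cycle $C$ of length $g$ in $G$. Because $C$ is a shortest cycle it has no chords, so the subgraph induced on $V(C)$ equals $C$ itself and contributes $g$ edges. If $g=k$ we are already done: we have an induced $k$-subgraph with $k$ edges. If $g<k$, fix a vertex $v\in V(C)$; its two cycle-neighbors aside, it still has $\alpha-2$ further neighbors. Since $k\leq\alpha$ and $g\geq 3$, one has $\alpha-2\geq k-2\geq k-g$, so we may select $k-g$ neighbors of $v$ lying outside $C$ and adjoin them to $V(C)$. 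The resulting induced subgraph has $k$ vertices and at least $g+(k-g)=k$ edges (the $g$ cycle edges together with the $k-g$ edges from $v$ to the new vertices), contradicting $\max_{G'\in G_k}|E'|=k-1$. Hence the girth must be at least $k+1$, completing the proof. The only step that requires care is the extension argument in this last paragraph, specifically the verification that enough external neighbors of $v$ exist; but this follows cleanly from the regularity assumption $k\leq\alpha$ together with $g\geq 3$.
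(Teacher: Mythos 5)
Your proof is correct and follows essentially the same route as the paper: reduce via Lemma~\ref{lm:isoperimetric} to showing that $\max_{G'\in G_k}|E'|=k-1$ exactly when the girth exceeds $k$, using the forest/cycle dichotomy. The paper states the key equivalence and the achievability of $k-1$ edges without detail, whereas you supply the explicit witnesses (the induced star $K_{1,k-1}$ for achievability, and a chordless shortest cycle extended by neighbors for the converse), which is a welcome but not structurally different elaboration.
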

\begin{proof}
Let $G$ be a graph with girth $g$.
 Any induced subgraph $G'$ of $G$ with $k$ vertices has at most $k-1$ edges if and only if $g\geq k+1$. Clearly, there exists at least one induced subgraph $G'$ of $G$ with $k$ vertices and $k-1$ edges. Thus, by Lemma~\ref{lm:isoperimetric} we have
$M(k)=k\alpha-(k-1).
$
\end{proof}

\begin{corollary}
\label{cor:girthOptimal}
For each $k\leq g-1$,
an FR code $C_G$ based on an $\alpha$-regular graph $G$ with girth $g$ attains the bound in~(\ref{eq:bound2}), and hence it is $k$-optimal. $C_G$ also attains the bound of Lemma~\ref{lm:upper}.
\end{corollary}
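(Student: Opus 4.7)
The approach decomposes into two independent computations. First, I would apply Lemma~\ref{lm:girth} directly to $G$: since the girth of $G$ is $g \geq k+1$, that lemma gives $M(k) = k\alpha - (k-1)$. This single equation simultaneously settles the claim that $C_G$ attains the bound of Lemma~\ref{lm:upper}, since that lemma asserts $M(k) \leq k\alpha - (k-1)$ for every $(n,\alpha,2)$ FR code.

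Next, to establish attainment of the bound in~(\ref{eq:bound2}) (and hence $k$-optimality), I would verify that the recursively-defined quantity $\varphi(k)$ from~(\ref{eq:bound2}) also equals $k\alpha - (k-1)$ for every $k$ in the range $1 \leq k \leq g-1$. The plan is induction on $k$. The base case $\varphi(1) = \alpha$ is immediate. For the inductive step, substituting $\varphi(k-1) = (k-1)\alpha - (k-2)$ into the recursion produces
\[\varphi(k) = k\alpha - (k-2) - \left\lceil \frac{(k-1)\alpha - 2(k-2)}{n - k + 1} \right\rceil,\]
so the induction reduces to showing that the ceiling equals $1$, i.e.\ that $1 \leq (k-1)\alpha - 2(k-2) \leq n - k + 1$. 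Positivity of the numerator, $(k-1)(\alpha-2) + 2 \geq 2$, is immediate from $\alpha \geq 2$, so what remains is the lower bound $n \geq (k-1)\alpha - k + 3$ on the number of vertices.

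The main obstacle is verifying this last lower bound on $n$, and this is precisely where the girth hypothesis is used. Since $G$ is $\alpha$-regular with girth $g \geq k+1$, the Moore bound (Theorem~\ref{MooreBound}) forces $n \geq n_0(\alpha, k+1)$. A short case analysis separating the parities of $k+1$ then shows that $n_0(\alpha, k+1) \geq (k-1)\alpha - k + 3$ for all $\alpha \geq 2$ and $k \geq 2$: for small $k$ (namely $k=2,3,4$) one checks the inequality $n_0(\alpha, k+1) \geq (k-1)\alpha - k + 3$ by direct comparison, while for larger $k$ the exponential growth of $n_0(\alpha, k+1)$ in $\alpha$ dominates the linear right-hand side with room to spare.

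Having shown both $M(k) = k\alpha - (k-1)$ and $\varphi(k) = k\alpha - (k-1)$, the chain of inequalities $M(k) \leq A(n,k,\alpha,2) \leq \varphi(k)$ collapses to equality throughout, which gives both the attainment of~(\ref{eq:bound2}) and the $k$-optimality of $C_G$ for every $k \leq g-1$.
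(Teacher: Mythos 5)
Your proof is correct and takes essentially the same route as the paper: Lemma~\ref{lm:girth} gives $M(k)=k\alpha-k+1$, and the verification that $\varphi(k)=k\alpha-k+1$ whenever $n\geq (k-1)\alpha-k+3$ is exactly the induction the paper itself carries out later in the proof of Lemma~\ref{lm:not tight}. Your invocation of the Moore bound to certify that an $\alpha$-regular graph of girth at least $k+1$ has at least $(k-1)\alpha-k+3$ vertices is the one detail the paper leaves implicit, and your case analysis (which holds for all $\alpha\geq 2$, with equality at $\alpha=2$) closes that gap correctly.
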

\begin{corollary}
\label{cor:Optimal}
An FR code $C_G$ based on an $\alpha$-regular graph $G$ with girth $g\geq \alpha+1$ is optimal.
 \end{corollary}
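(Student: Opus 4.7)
The plan is very short because this is essentially an immediate consequence of the preceding Corollary~\ref{cor:girthOptimal}. Recall that, by definition, an FR code is called optimal if it is $k$-optimal for every $1 \leq k \leq \alpha$, i.e., if for each such $k$ its file size meets the bound in~(\ref{eq:bound2}).

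First, I would observe that the hypothesis $g \geq \alpha + 1$ is equivalent to $g - 1 \geq \alpha$. Next, for any integer $k$ in the range $1 \leq k \leq \alpha$ this forces $k \leq g - 1$, which is precisely the regime in which Corollary~\ref{cor:girthOptimal} applies. Invoking that corollary then yields that $C_G$ is $k$-optimal for every $k$ in $\{1, 2, \ldots, \alpha\}$, and this is the definition of optimality.

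There is no real obstacle here: the content of the argument is entirely packaged inside Lemma~\ref{lm:girth} and Corollary~\ref{cor:girthOptimal}, which already show that large girth makes every induced subgraph on $k \leq g-1$ vertices a forest, hence $M(k) = k\alpha - (k-1)$, matching the recursion in~(\ref{eq:bound2}). The only thing the present corollary does is ensure that the condition $k \leq g-1$ is satisfied across the \emph{entire} admissible range $k \leq \alpha$, which is exactly what $g \geq \alpha + 1$ guarantees. Hence the proof is a one-line deduction from Corollary~\ref{cor:girthOptimal}.
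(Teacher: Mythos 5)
Your proposal is correct and coincides with the paper's own (implicit) argument: the corollary is stated as an immediate consequence of Corollary~\ref{cor:girthOptimal}, since $g \geq \alpha+1$ guarantees $k \leq \alpha \leq g-1$ for every admissible $k$, so the code is $k$-optimal for all $k \leq \alpha$ and hence optimal by definition. Nothing further is needed.
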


\begin{theorem}
\label{thm:rateGirth}
Let $G$ be a graph with girth $g$. Then the file size $M(k)$ of an FR code $C_G$ based on $G$ satisfies
\[M(k)=\left\{\begin{array}{cc}
         k\alpha -k+1 \;&\textmd{ if } k\leq g-1 \\
           k \alpha-k \;&\textmd{ if }g\leq k\leq g+\lceil\frac{g}{2}\rceil-2.
         \end{array}\right.
\]
\end{theorem}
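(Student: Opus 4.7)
The plan is to use Lemma~\ref{lm:isoperimetric} to reduce the computation of $M(k)$ to determining $\max_{G'\in G_k}|E'|$, the largest edge count among induced subgraphs of $G$ on $k$ vertices. For $k\leq g-1$ the claim $M(k)=k\alpha-k+1$ is exactly Lemma~\ref{lm:girth}, so I concentrate on the range $g\leq k\leq g+\lceil g/2\rceil-2$, where the goal becomes $\max_{G'\in G_k}|E'|=k$.

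For the upper bound I would argue via the cyclomatic number $\mu(H):=|E(H)|-|V(H)|+c(H)$. If some induced subgraph $G'$ on $k$ vertices had $|E'|\geq k+1$, then $\mu(G')\geq 2$. Iteratively deleting pendant vertices preserves $\mu$ and produces a 2-core $H$ of minimum degree $\geq 2$, girth $\geq g$, and $\mu(H)\geq 2$. Such an $H$ must contain one of the following as a subgraph: (i) a theta graph, consisting of three internally disjoint paths of lengths $a,b,c$ between two common endpoints; (ii) two cycles sharing a single vertex; or (iii) two vertex-disjoint cycles joined by a path. In case (i) the three pairwise cycles have lengths $a+b$, $a+c$, $b+c$, each $\geq g$, so $2(a+b+c)\geq 3g$ and $|V(H)|\geq a+b+c-1\geq\lceil 3g/2\rceil-1=g+\lceil g/2\rceil-1$; cases (ii) and (iii) give $|V(H)|\geq 2g-1$ and $|V(H)|\geq 2g$ respectively, each exceeding $g+\lceil g/2\rceil-1$. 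This contradicts $|V(H)|\leq k\leq g+\lceil g/2\rceil-2$.

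For the lower bound I would exhibit an induced subgraph on $k$ vertices with exactly $k$ edges, namely a $g$-cycle $C$ augmented by a pendant path of length $k-g$. A $g$-cycle $C$ exists since $G$ has girth $g$; pick $v\in V(C)$, and, as $\alpha\geq 3$ while only two neighbors of $v$ lie on $C$, choose $u_1\in N(v)\setminus V(C)$. Setting $S_i:=V(C)\cup\{u_1,\ldots,u_i\}$, I proceed by induction on $i$: assuming $|E(S_{i-1})|=g+i-1$, the upper bound applied to $S_{i-1}$ (of size $\leq k-1<g+\lceil g/2\rceil-1$) forces $\mu(S_{i-1})=1$, whence $u_{i-1}$ has only $u_{i-2}$ as a neighbor in $S_{i-1}$ and therefore at least $\alpha-1\geq 2$ neighbors outside $S_{i-1}$; pick any as $u_i$. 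The same upper bound applied to $S_i$ (still of size $\leq k$) forces $|E(S_i)|=g+i$, extending the induction. After $k-g$ iterations, $S_{k-g}$ has $k$ vertices and $k$ edges.

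The main obstacle is the structural lemma underlying the upper bound: every graph of girth $\geq g$ with cyclomatic number at least $2$ has at least $g+\lceil g/2\rceil-1$ vertices, with the theta subgraph being the extremal case. Once this is in place, both the upper bound itself and the feasibility of each inductive step in the lower bound construction follow immediately, and the result $M(k)=k\alpha-k$ drops out of Lemma~\ref{lm:isoperimetric}.
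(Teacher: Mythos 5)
Your proposal is correct and follows essentially the same route as the paper: the heart of both arguments is the observation that a subgraph of girth $g$ with two cycles needs at least $g+\lceil g/2\rceil-1$ vertices, established via the same theta-graph inequality (your paths $a,b,c$ are the paper's $x,y,z$ in its Fig.~3). You are somewhat more thorough than the paper on two points — the explicit 2-core reduction with the three-case classification of bicyclic configurations, and the inductive pendant-path construction certifying that an induced subgraph on $k$ vertices with exactly $k$ edges actually exists (the paper asserts this only implicitly) — but the underlying argument is the same.
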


\begin{proof}
For $k\leq g-1$ the result follows directly from Lemma~\ref{lm:girth}. Since the graph $G$ has a cycle of length $g$, it follows that $M(g)=g\alpha-g$.
If any subgraph of $G$ with $k$ vertices contains at most one cycle then the file size  satisfies $M(k)=k\alpha-k$.
Note that for $g\leq k\leq g+\lceil\frac{g}{2}\rceil-2$  there is no subgraph of $G$ with $k$ vertices that contains two cycles with no common vertices.
Now we claim that the minimum number $m$ of vertices in a connected subgraph of $G$ with two cycles is $g+\lceil\frac{g}{2}\rceil-1$. Assume for the contrary that $m=g+\lceil\frac{g}{2}\rceil-1-\epsilon$, $\epsilon\geq 1$. Hence there exists a subgraph of $G$ depicted in Fig.~\ref{fig:twoCycles} such that
\begin{figure*}[h]
 \centering
 \includegraphics[width=0.4\textwidth]{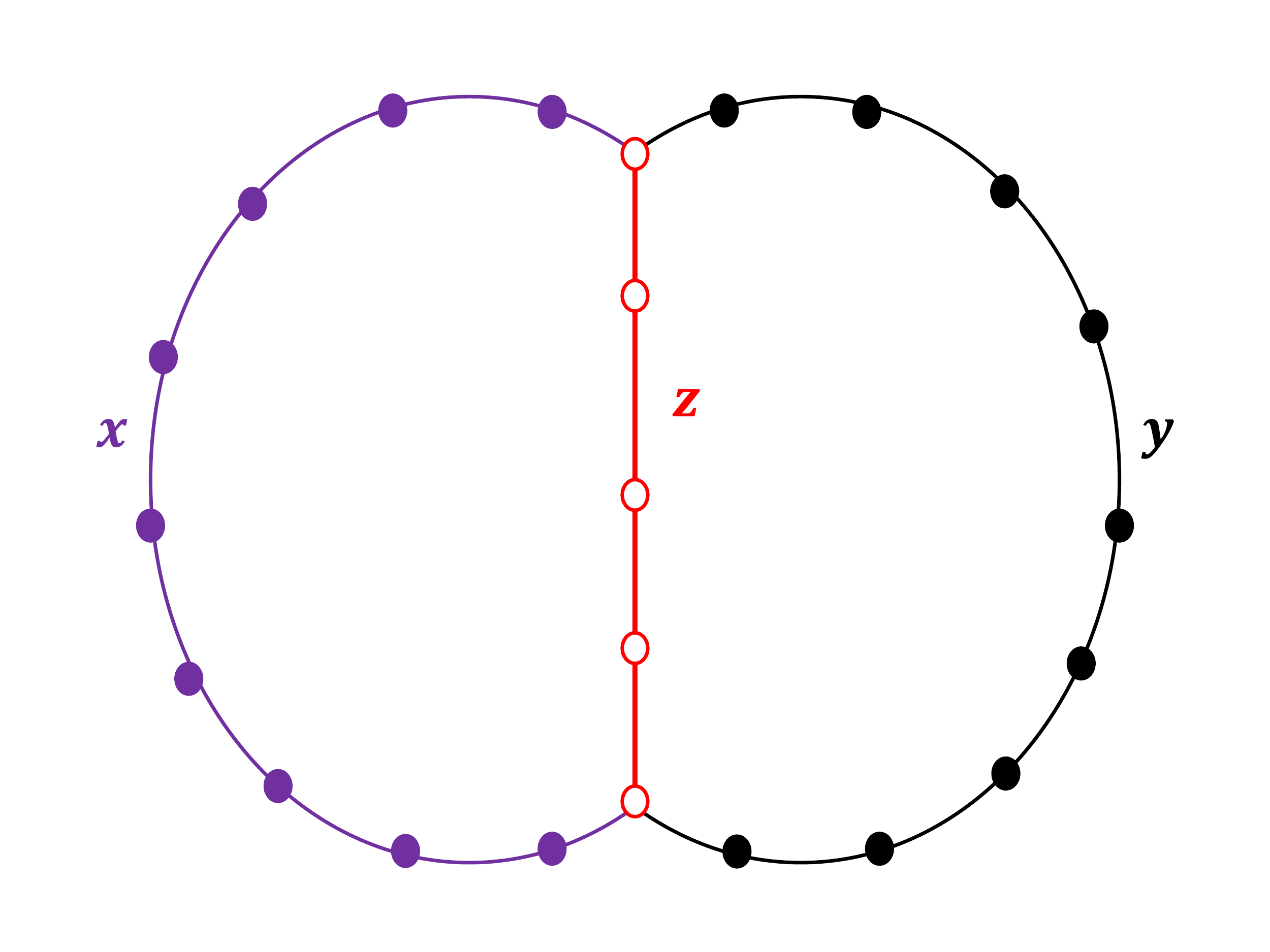}
 \caption{Two intersecting cycles }\label{fig:twoCycles}
\end{figure*}

\begin{equation}
\label{eq:1}
x+z\geq g
\end{equation}
\begin{equation}
y+z\geq g
\label{eq:2}
\end{equation}
\begin{equation}
\label{eq:3}
x+y+2\geq g
\end{equation}
\begin{equation}
\label{eq:4}
x+y+z=g+\lceil\frac{g}{2}\rceil-1-\epsilon.
\end{equation}
Hence from~(\ref{eq:1}) and (\ref{eq:4}) we have that $y\leq \lceil\frac{g}{2}\rceil-2$ and from~(\ref{eq:2}) and~(\ref{eq:4}) we have that $x\leq \lceil\frac{g}{2}\rceil-2$ which contradicts to (\ref{eq:3}). Thus,
the maximum number of vertices in a subgraph of $G$ with at most one cycle is $g+\lceil\frac{g}{2}\rceil-2$ which completes the proof of the theorem.
\end{proof}

Next we consider examples of  FR codes based on some interesting graphs for which the girth is known.

\begin{example}
\label{ex:MooreFR}

Let TD be a $\textmd{TD}(q,q)$ transversal design. The  $(2q^2,q,2)$ FR code $C_{G_{\textmd{TD}}}$ based on the incidence graph $G_I(\textmd{TD})$  attains the bound in~(\ref{eq:bound2}) for all $k\leq 5$.

The following graphs attain the Moore bound (see Theorem~\ref{MooreBound}), i.e., they have the minimum number of vertices given girth and  degree.
%
The parameters of the FR codes corresponding to these graphs can be found in the following table.
\vspace{.3cm}
\begin{center}
\begin{tabular}{|c|c|c|c|}
  \hline
  name of a graph& degree & girth & $(n,\alpha, \rho)$ \\\hline\hline
  Complete graph $K_n$ & $n-1$ & 3 &$(n,n-1,2)$ \\\hline
   Complete bipartite graph $K_{r,r}$ & $r$ & 4 & $(2r,r,2)$ \\\hline
  Petersen graph & 3 & 5 & $(10,3,2)$ \\\hline
  Hoffman-Singleton graph & 7 & 5 & $(50,7,2)$ \\\hline
  Projective plane & $q+1$ & 6 & $(2q^2+2q+2,q+1,2)$ \\\hline
  Generalized quadrangle & $q+1$ & 8 & $(2q^3+2q^2+2q+2,q+1,2)$ \\\hline
  Generalized hexagon & $q+1$ & 12 & $(2q^5+2q^4+2q^3+2q^2+2q+2,q+1,2)$ \\
  \hline
\end{tabular}
\end{center}

\end{example}

\vspace{.3cm}

Next, we use the Moore bound to show that the bound in~(\ref{eq:bound2}) can be improved in some cases.

\begin{lemma}
\label{lm:not tight}
The  bound in~(\ref{eq:bound2}) is not tight  for $\rho=2$ if $$\alpha k-\alpha-k+3\leq n< \widehat{n}_0(\alpha,k+1),$$
 where
$ \widehat{n}_0(d,g)$  is the minimum number of vertices in a $(d,g)$-cage (see Section~\ref{sec:preliminaries}).
\end{lemma}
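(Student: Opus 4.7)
The plan is to combine two independent facts: (i) under the hypothesis $n \geq \alpha k - \alpha - k + 3$, the recursive bound $\varphi(k)$ of~(\ref{eq:bound2}) evaluates to exactly $k\alpha - k + 1$; and (ii) by Lemma~\ref{lm:girth}, an $(n,\alpha,2)$ FR code attains file size $k\alpha-k+1$ if and only if its underlying $\alpha$-regular graph has girth at least $k+1$, so no such code exists when $n < \widehat{n}_0(\alpha, k+1)$. Chaining these together immediately gives strict inequality between the actual maximum file size and the bound.

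To establish (i), I will induct on $j = 1, \ldots, k$ with the inductive hypothesis $\varphi(j) = j\alpha - j + 1$. The base case $\varphi(1) = \alpha$ is just the definition. For the step, substituting the hypothesis and $\rho = 2$ into~(\ref{eq:bound2}) gives
\begin{equation*}
\varphi(j+1) = j\alpha - j + 1 + \alpha - \left\lceil \frac{j\alpha - 2j + 2}{n-j}\right\rceil.
\end{equation*}
The ceiling equals $1$ precisely when $1 \leq j\alpha - 2j + 2 \leq n-j$. The lower inequality is immediate from $\alpha > 2$, and the upper becomes $n \geq j(\alpha-1) + 2$. Maximizing this over $j = 1, \ldots, k-1$ (the expression is increasing in $j$) yields exactly $n \geq (k-1)(\alpha-1) + 2 = \alpha k - \alpha - k + 3$, which is the hypothesis. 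Hence the ceiling is $1$ at every step of the recursion, so $\varphi(j+1) = (j+1)\alpha - (j+1) + 1$, closing the induction and giving $\varphi(k) = k\alpha - k + 1$.

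For (ii), I will appeal to the equivalence stated in Subsection~\ref{subsec:codes}: every $(n,\alpha,2)$ FR code is of the form $C_G$ for some $\alpha$-regular graph $G$ on $n$ vertices. By Lemma~\ref{lm:girth}, $C_G$ achieves $M(k) = k\alpha - (k-1) = k\alpha - k + 1$ if and only if the girth of $G$ is at least $k+1$. By the definition of $\widehat{n}_0(\alpha, k+1)$ as the minimum order of an $\alpha$-regular graph whose girth is at least $k+1$, no such graph exists on fewer than $\widehat{n}_0(\alpha, k+1)$ vertices. Thus every FR code in the stated range satisfies $M(k) \leq k\alpha - k$, strictly less than the value $k\alpha - k + 1$ produced by bound~(\ref{eq:bound2}).

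The only nonroutine part is the induction in (i); the rest is a straightforward application of Lemma~\ref{lm:girth} together with the definition of the cage number. A minor subtlety that I will flag is that $\widehat{n}_0(\alpha, k+1)$ must be interpreted as the minimum order over $\alpha$-regular graphs of girth \emph{at least} $k+1$ (rather than exactly $k+1$); under the convention used in the paper the two quantities coincide, so no modification of the argument is needed.
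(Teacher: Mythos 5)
Your proposal is correct and follows essentially the same two-step argument as the paper: an induction on the recursion in~(\ref{eq:bound2}) showing $\varphi(k)=k\alpha-k+1$ whenever $n\geq \alpha k-\alpha-k+3$, combined with Lemma~\ref{lm:girth} and the cage number $\widehat{n}_0(\alpha,k+1)$ to cap the actual file size at $k\alpha-k$. Your explicit reduction of the ceiling condition to $n\geq j(\alpha-1)+2$ and the remark about interpreting the cage number via girth at least $k+1$ are slightly more careful than the paper's terse justification, but the route is identical.
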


\begin{proof} Let $n,k$ be  integers such that $\alpha k-\alpha-k+3\leq n< \widehat{n}_0(\alpha,k+1)$. Since $n< \widehat{n}_0(\alpha,k+1)$, the $\alpha$-regular graph with $n$ vertices  corresponding to an $(n,\alpha,2)$ FR code has girth at most $k$, and hence by Lemma~\ref{lm:girth} the file size satisfies $M(k)\leq k\alpha-k$. Therefore, $A(n,k,\alpha,2)\leq k\alpha-k$.

To complete the proof, we will show that if $\alpha k-\alpha-k+3\leq n$ then $\varphi(k)=k\alpha-k+1$. To prove this statement we will prove by induction that if $\alpha k-\alpha-k+3\leq n$ then for all $1\leq\ell\leq k$ it holds that $\varphi(\ell)=\ell\alpha-\ell+1$.  This trivially holds for $\ell=1$. Assume that $\varphi(\ell-1)=(\ell-1)\alpha-(\ell-1)+1$, $2\leq \ell\leq k$. Then, by applying the recursion in~(\ref{eq:bound2}) we have
$$\varphi(\ell)=(\ell-1)\alpha-(\ell-1)+1+\alpha-\left\lceil\frac{(\ell-1)\alpha-2\ell+4}{n-\ell+1}\right\rceil=\ell\alpha-\ell+1,
$$
since $(\ell-1)\alpha-2\ell+4\leq n-\ell+1$ for $n\geq\alpha k-\alpha-k+3$ and $\ell\leq k$.

\end{proof}

As a consequence of Lemma~\ref{lm:not tight} we have that the bound in~(\ref{eq:bound2}) is not always tight and hence we have a similar better bound on $A(n,k,\alpha, \rho)$:
 \begin{equation*}
   A(n,k,\alpha, \rho)\leq  \varphi'(k), \textmd{ where } \varphi'(1)=\alpha,
  \end{equation*}
  \[\textmd{  } \varphi'(k+1)= A(n,k,\alpha, \rho)+\alpha-\left\lceil\frac{\rho  A(n,k,\alpha, \rho)-k\alpha}{n-k}\right\rceil.\]

\subsection{FR Codes with a Given File Size}
\label{subsec:file_size}

We observe from Lemma~\ref{lm:clique} and Lemma~\ref{lm:upper}
that for any $1\leq k\leq \alpha$, the file size $M(k)$ of an $(n,\alpha,2)$  FR code $C$ satisfies
\begin{equation}
\label{eq:range}
k\alpha-\binom{k}{2}\leq M(k)\leq k\alpha-(k-1),
\end{equation}
and the value of the file size depends on the structure of the underlying $\alpha$-regular graph $G$.  If the graph contains a clique~$K_k$ then the file size attains the lower bound in~(\ref{eq:range}). If the graph does not contain a cycle of length $k$ then the file size attains the upper bound in~(\ref{eq:range}). The intermediate values for the file size can be obtained by excluding  certain subgraphs of $K_k$ from the graph $G$. For example, $M(k)\geq k\alpha-\binom{k}{2}+1$ if and only if $G$ does not contain $K_k$ as a subgraph, and
to have $M(k)\geq k\alpha-\binom{k}{2}+2$,  $G$ should not contain $K_k-e$, i.e., a $k$-clique without an edge.
Note that the problem of finding the minimum number of vertices in a graph which does not contain a specific subgraph is highly related to a Tur\'an type problems (see e.g.,\cite{FuSi13} and the references therein).

For the rest of the section we assume that $n\alpha$ is an even integer.

There are only two possible values for $M(3)$, $3\alpha-3$ and $3\alpha-2$. To have a code $C$ with  file size $3\alpha-2$, one should exclude a clique $K_3$, which is  also a cycle of length $3$.
From~(\ref{eq:cage}) it follows that  for a given $\alpha$, if $n<2\alpha$ then $M(3)=3\alpha-3$. Equivalently, the necessary condition for $M(3)=3\alpha-2$ is that $n\geq2\alpha$.

Constructions for codes  with file size $M(3)=3\alpha-2$ are provided in the previous subsection, based on optimal (Tur\'an, Moore) graphs, for  specific choices for the parameters $\alpha, n$, where  $n$ is even. In addition, we provide in Appendix~\ref{app:triangle} another two constructions of FR codes with file size $3\alpha-2$, the first one for even $n\geq 2\alpha$, and the second one for odd $n\geq \frac{5}{2}\alpha$.

The following lemma is proved in Appendix~\ref{app:triangle}.
\begin{lemma}
\label{lm:odd-triangle}
Let $n_3$ is the minimum value of $n$ such that $A(n,3,\alpha, 2)=3\alpha-2$, for any $n\geq n_3$. Then
\[2\alpha+2\leq n_3\leq \frac{5\alpha}{2}.
\]
\end{lemma}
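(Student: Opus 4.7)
My approach is to translate the statement into a question about triangle-free $\alpha$-regular graphs and then combine a simple graph-theoretic nonexistence argument with the explicit constructions that Appendix~\ref{app:triangle} will supply. By Lemmas~\ref{lm:isoperimetric} and~\ref{lm:clique}, the FR code $C_G$ attains $M(3)=3\alpha-2$ precisely when its underlying $\alpha$-regular graph $G$ is triangle-free, and a direct evaluation of the recurrence~(\ref{eq:bound2}) gives $\varphi(3)=3\alpha-2$ whenever $n\geq 2\alpha$. Consequently, the equality $A(n,3,\alpha,2)=3\alpha-2$ is equivalent to the existence of a triangle-free $\alpha$-regular graph on $n$ vertices (under the standing hypothesis that $n\alpha$ is even).

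For the upper bound $n_3\leq \frac{5\alpha}{2}$, the two constructions to be presented in Appendix~\ref{app:triangle} yield triangle-free $\alpha$-regular graphs for every even $n\geq 2\alpha$ and for every odd $n\geq \frac{5\alpha}{2}$. Together these cover every admissible $n\geq \lceil \frac{5\alpha}{2}\rceil$ of either parity, so $A(n,3,\alpha,2)=3\alpha-2$ holds throughout this range and the claimed upper inequality follows.

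For the lower bound $n_3\geq 2\alpha+2$, it suffices to exhibit some $n\geq 2\alpha+1$ for which the equality $A(n,3,\alpha,2)=3\alpha-2$ fails, and I would take $n=2\alpha+1$. If $\alpha$ is odd then $(2\alpha+1)\alpha$ is odd, so no $\alpha$-regular graph on $2\alpha+1$ vertices exists at all and the equality fails trivially. If $\alpha$ is even (hence $\alpha\geq 4$ in view of the standing assumption $\alpha>2$), I plan to appeal to the Andr\'asfai--Erd\H{o}s--S\'os theorem: any triangle-free graph on $n$ vertices with minimum degree strictly greater than $2n/5$ is bipartite. The hypothesis $\alpha>2(2\alpha+1)/5$ simplifies to $\alpha>2$, which is satisfied, so any triangle-free $\alpha$-regular graph on $2\alpha+1$ vertices would be bipartite with equal parts, forcing $2\alpha+1$ to be even---a contradiction.

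The only real obstacle is the lower bound: ruling out an $\alpha$-regular triangle-free graph on $2\alpha+1$ vertices. The Andr\'asfai--Erd\H{o}s--S\'os theorem gives a one-line resolution in the even $\alpha$ case, while a pure parity observation disposes of the odd $\alpha$ case.
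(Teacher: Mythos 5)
Your proposal is correct, and the upper bound is handled exactly as in the paper (both defer to the two explicit triangle-free constructions of Appendix~\ref{app:triangle}, which cover all admissible $n\geq\frac{5}{2}\alpha$ of either parity). The lower bound is where you genuinely diverge. The paper proves directly that no triangle-free $\alpha$-regular graph on $2\alpha+1$ vertices exists: fixing a vertex $v$ with neighbourhood $A$ and complement $B$ (each of size $\alpha$), it counts $\alpha(\alpha-1)$ cut edges, deduces exactly $\frac{\alpha}{2}$ edges inside $B$, and then finds a forced common neighbour of some $b_1\in B$ and $a_2\in A$, yielding a triangle. You instead invoke the Andr\'asfai--Erd\H{o}s--S\'os theorem: since $\alpha>\frac{2(2\alpha+1)}{5}$ reduces to $\alpha>2$, any triangle-free $\alpha$-regular graph on $2\alpha+1$ vertices would be bipartite, hence of even order --- a contradiction; and you dispose of odd $\alpha$ by parity. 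Both arguments are sound. Yours is shorter and makes the parity issue explicit (the paper's count of $\frac{\alpha}{2}$ edges inside $B$ tacitly assumes $\alpha$ even, so its proof really only addresses that case), at the cost of importing a nontrivial external theorem; the paper's argument is self-contained and elementary, essentially carrying out by hand the special case of the neighbourhood analysis that underlies the Andr\'asfai--Erd\H{o}s--S\'os bound. One small caveat: when $\alpha$ is odd the paper's standing convention restricts $n$ to values with $n\alpha$ even, under which the true threshold is $n_3=2\alpha$ and the stated lower bound $2\alpha+2$ only makes sense for even $\alpha$; your ``fails trivially because no such graph exists'' reading of the odd-$\alpha$ case is a defensible resolution of an ambiguity in the lemma statement itself rather than a flaw in your argument.
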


We conjecture that $n_3=\frac{5}{2}\alpha$.
The following theorem is an immediate consequence from the discussion above (see also Appendix~\ref{app:triangle}).

\begin{theorem}
\label{thm:summary3}
The maximum file size for $k=3$ satisfies
\begin{itemize}
  \item For even $n$
  \[A(n,3,\alpha,2)=
  \left\{\begin{array}{cc}
  3\alpha-3&\textmd{ if } \;n < 2\alpha \\
           3\alpha-2&\textmd{ if }\; n\geq 2\alpha
         \end{array}\right..
  \]
  \item For odd $n$ and even $\alpha$, $A(n,3,\alpha,2)=3\alpha-2$.
   \[A(n,3,\alpha,2)=
  \left\{\begin{array}{cc}
  3\alpha-3&\textmd{ if } \;n < n_3 \\
           3\alpha-2&\textmd{ if }\; n\geq n_3
         \end{array}\right..
  \]

\end{itemize}

\end{theorem}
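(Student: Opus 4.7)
The plan is to reduce the evaluation of $A(n,3,\alpha,2)$ to a pure existence question: for which pairs $(n,\alpha)$ does a triangle-free $\alpha$-regular graph on $n$ vertices exist? Once this reduction is in place, Corollary~\ref{cor:Turan}, the explicit constructions in Appendix~\ref{app:triangle}, and Lemma~\ref{lm:odd-triangle} combine to yield the two piecewise formulas.

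First I would apply Lemma~\ref{lm:isoperimetric} to any $\alpha$-regular graph $G$ on $n$ vertices underlying a candidate $(n,\alpha,2)$ FR code. Since $\alpha\geq 3$, for every edge $\{u,v\}$ the vertex $u$ has another neighbor $w\ne v$, so the induced subgraph on $\{u,v,w\}$ contains at least two edges; hence $\max_{G'\in G_3}|E'|\in\{2,3\}$. Combining this with Lemma~\ref{lm:clique} yields the dichotomy $M(3)\in\{3\alpha-3,\,3\alpha-2\}$, with $M(3)=3\alpha-2$ iff $G$ is triangle-free. Maximizing over all $\alpha$-regular graphs on $n$ vertices, I obtain the key equivalence
\[
A(n,3,\alpha,2)=3\alpha-2\ \Longleftrightarrow\ \text{a triangle-free $\alpha$-regular graph on $n$ vertices exists},
\]
and otherwise $A(n,3,\alpha,2)=3\alpha-3$. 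All the remaining work is to determine the thresholds.

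Next I would split on the parity of $n$. For even $n$, Corollary~\ref{cor:Turan} applied with $r=2$ shows that no triangle-free $\alpha$-regular graph exists when $n<2\alpha$, giving $A(n,3,\alpha,2)=3\alpha-3$ in that range. For $n\geq 2\alpha$ with $n$ even, the base case $n=2\alpha$ is realized by the complete bipartite graph $K_{\alpha,\alpha}$ (Corollary~\ref{thm:TD(2,a)}), while for larger even $n$ the explicit triangle-free $\alpha$-regular construction of Appendix~\ref{app:triangle} attains the bound, so $A(n,3,\alpha,2)=3\alpha-2$. For odd $n$, the parity constraint $n\alpha$ even forces $\alpha$ to be even; the theorem's threshold is then precisely the quantity $n_3$ defined in Lemma~\ref{lm:odd-triangle}, whose sandwich $2\alpha+2\leq n_3\leq 5\alpha/2$ is what guarantees both (i) the non-existence of a triangle-free $\alpha$-regular graph for $n<n_3$ and (ii) the existence of one for every $n\geq n_3$, so the piecewise formula follows immediately from the equivalence established above.

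The main obstacle is not the theorem itself, which is essentially a bookkeeping corollary of the equivalence, but rather the supporting construction/non-existence results: producing explicit triangle-free $\alpha$-regular graphs on $n$ vertices for all even $n\geq 2\alpha$ and all odd $n$ above the threshold (in particular, in the delicate window $2\alpha+2\leq n\leq 5\alpha/2$ for the odd case), both of which are deferred to Appendix~\ref{app:triangle} via Lemma~\ref{lm:odd-triangle}. Granted those ingredients, the proof of Theorem~\ref{thm:summary3} reduces to the reduction step above together with one invocation of Corollary~\ref{cor:Turan} for the lower bound and one appeal to the appendix constructions for the upper bound.
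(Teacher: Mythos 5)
Your proposal is correct and follows essentially the same route as the paper: establish the dichotomy $M(3)\in\{3\alpha-3,3\alpha-2\}$ with equality to $3\alpha-2$ exactly for triangle-free underlying graphs, rule out $n<2\alpha$ via the Tur\'an/Moore lower bound, and invoke the Appendix~\ref{app:triangle} constructions together with Lemma~\ref{lm:odd-triangle} and the definition of $n_3$ for the existence side. The only cosmetic difference is that you cite Corollary~\ref{cor:Turan} where the paper cites the Moore bound~(\ref{eq:cage}); both give the identical threshold $n\geq 2\alpha$.
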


\vspace{.5cm}

By (\ref{eq:range}) we have that $M(4)\in\{4\alpha-3, 4\alpha-4, 4\alpha-5, 4\alpha-6\}$.

\begin{enumerate}
\item If $M(4)=4\alpha-3$ then by Lemma~\ref{lm:girth} the corresponding graph $G$ has girth at least 5.  Codes with file size $4\alpha-3$  can be derived from any graph $G$ with girth $\geq 5$, e.g., Hoffman-Singleton graph and its {generalizations~\cite{Mu79,AABL12}.}
\item If $M(4)=4\alpha-4$ then the corresponding graph $G$ contains a subgraph of $K_4$ with $4$ edges,
but does not contain $K_4-e$, a $4$-clique without an edge. Codes with file size $4\alpha-4$ and minimum number of nodes are constructed from $K_{\alpha,\alpha}$ by Theorem~\ref{trm:Turan}.
\item If $M(4)=4\alpha-5$ then the corresponding graph $G$ contains $K_4-e$,
but does not contain $K_4$. Codes with file size $4\alpha-5$ and
minimum number of nodes are constructed from $(n,3)$-Tur\'an graphs if $n\equiv 0\;(\textmd{mod }3)$. If $n\not\equiv 0\;(\textmd{mod }3)$ then  we use a modification of a Tur\'an graph (see Example~\ref{ex:ap-motdification} in Appendix~\ref{app:triangle}).
\item If $M(4)=4\alpha-6$ then the corresponding graph $G$ contains $K_4$.
Codes with file size $4\alpha-6$ and  minimum number of nodes are constructed from the complete graph $K_{\alpha+1}$.
\end{enumerate}

Let  $G(k,\ell)$ denote any graph with $k$ vertices and $\ell$ edges.
By Lemma~\ref{lm:isoperimetric}, to calculate the file size of an FR code based on a graph $G$, we need to find an induced subgraph $G(k,\ell)$ of $G$ with the largest $\ell$. The following lemma is an immediate consequence from Lemma~\ref{lm:isoperimetric}.

\begin{lemma}
\label{lm:G-M}
 Let $G$ be an $\alpha$-regular graph and let $C_G$ be the FR code based on $G$.
\begin{itemize}
  \item If $G$ contains an induced subgraph $G(k,\ell)$ then $M(k)\leq k\alpha-\ell$.
  \item If $G$ does not contain an induced subgraph $G(k,r)$  for all $r\geq \ell+1$ then $M(k)\geq k\alpha-\ell$.
  \item If $G$ contains an induced subgraph $G(k,\ell)$  and does not contain  $G(k,r)$ for $r\geq \ell+1$ then $M(k)=k\alpha-\ell$.
\end{itemize}
\end{lemma}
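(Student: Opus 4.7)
The plan is to derive all three items directly from Lemma~\ref{lm:isoperimetric}, which says $M(k) = k\alpha - \max_{G' \in G_k} |E'|$, where $G_k$ denotes the family of induced subgraphs of $G$ on $k$ vertices. Since the quantity being bounded in each bullet is precisely $\max_{G' \in G_k} |E'|$ (up to the affine transformation $k\alpha - \cdot$), the proof reduces to bounding this maximum from below, from above, or pinning it down exactly, according to which hypothesis is in force.

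For the first bullet, I would observe that if $G(k,\ell)$ appears as an induced subgraph of $G$, then $G(k,\ell)$ is a member of $G_k$ contributing $\ell$ edges, so $\max_{G' \in G_k} |E'| \geq \ell$. Substituting into the formula from Lemma~\ref{lm:isoperimetric} yields $M(k) \leq k\alpha - \ell$. For the second bullet, the hypothesis says no induced $k$-vertex subgraph has $r$ edges for any $r \geq \ell+1$, hence every $G' \in G_k$ satisfies $|E'| \leq \ell$, so $\max_{G' \in G_k} |E'| \leq \ell$, which gives $M(k) \geq k\alpha - \ell$. The third bullet is then just the conjunction of the previous two, yielding equality.

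There is no real obstacle here; the argument is essentially bookkeeping, which is why the surrounding text calls the statement an immediate consequence of Lemma~\ref{lm:isoperimetric}. The only minor point worth being careful about is the distinction between \emph{subgraph} and \emph{induced subgraph}: since $G_k$ in Lemma~\ref{lm:isoperimetric} is indexed by \emph{induced} subgraphs and the present lemma also speaks of induced subgraphs $G(k,\ell)$, the correspondence is exact and no adjustment of edge counts is required.
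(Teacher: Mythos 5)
Your proof is correct and follows exactly the route the paper intends: the paper gives no separate argument, simply declaring the lemma an immediate consequence of Lemma~\ref{lm:isoperimetric}, and your bookkeeping with $\max_{G'\in G_k}|E'|$ is precisely that consequence spelled out. Nothing is missing.
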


Based on the previous discussion  we have the following theorem.

\vspace{.5cm}

\begin{theorem}
\label{thm:k4} The maximum file size for $k=4$ satisfies

  \[A(n,4,\alpha,2)\leq
  \left\{\begin{array}{cl}
  4\alpha-6&\textmd{ if } \;\alpha+1\leq n < \frac{3}{2}\alpha \\
           4\alpha-5&\textmd{ if }\; \frac{3}{2}\alpha\leq n<\min\{n_3, 3\alpha-3\}\\
           4\alpha-4&\textmd{ if }\; \min\{n_3, 3\alpha-3\}\leq n < 1+\alpha^2\\
           4\alpha-3&\textmd{ if }\; n\geq 1+\alpha^2
         \end{array}\right.,
  \]

  \end{theorem}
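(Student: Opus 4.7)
The plan is to invoke Lemma~\ref{lm:G-M}: for the FR code based on an $\alpha$-regular graph $G$, $M(4)=4\alpha-\ell$, where $\ell$ is the maximum number of edges in any induced $4$-vertex subgraph of $G$. Since $\ell\in\{3,4,5,6\}$, to obtain the upper bound $A(n,4,\alpha,2)\leq 4\alpha-\ell_0$ it suffices to show that every $\alpha$-regular graph on $n$ vertices must contain an induced subgraph $G(4,r)$ with $r\geq\ell_0$. I will then treat the four intervals in turn.

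For $\alpha+1\leq n<\frac{3}{2}\alpha$, I would apply Corollary~\ref{cor:Turan} with $r=3$: any $K_4$-free $\alpha$-regular graph must satisfy $n\geq\frac{3}{2}\alpha$, so $G$ necessarily contains a $K_4$, which is an induced $G(4,6)$. This gives $M(4)\leq 4\alpha-6$. For $\frac{3}{2}\alpha\leq n<\min\{n_3,3\alpha-3\}$, the inequality $n<n_3$ forces $G$ to contain a triangle $\{a,b,c\}$ by the definition of $n_3$ appearing in Theorem~\ref{thm:summary3}. The triangle emits $3(\alpha-2)=3\alpha-6$ edges to the $n-3$ outside vertices; since $n-3<3\alpha-6$, pigeonhole produces an outside vertex $d$ adjacent to at least two of $\{a,b,c\}$. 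The induced subgraph on $\{a,b,c,d\}$ then has at least five edges (it is $K_4-e$ or $K_4$), yielding $M(4)\leq 4\alpha-5$.

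For $\min\{n_3,3\alpha-3\}\leq n<1+\alpha^2$, the odd-girth case of the Moore bound (Theorem~\ref{MooreBound}) gives $n_0(\alpha,5)=1+\alpha^2$, so $G$ has girth at most $4$. If the girth equals $4$, any shortest $4$-cycle is chordless (a chord would drop the girth to $3$), so $G$ contains an induced $C_4=G(4,4)$. If the girth equals $3$, fix a triangle $\{a,b,c\}$; since $3\alpha-6\geq 3$ edges leave the triangle (using $\alpha>2$), some outside vertex $d$ is adjacent to at least one of $a,b,c$, producing an induced $4$-subgraph with at least four edges. Either way $M(4)\leq 4\alpha-4$. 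Finally, for $n\geq 1+\alpha^2$, Lemma~\ref{lm:upper} applied with $k=4$ gives $M(4)\leq 4\alpha-3$ unconditionally.

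The only delicate point I expect is confirming that each boundary value is indeed the correct threshold: $\frac{3}{2}\alpha$ comes from Tur\'an's theorem, $3\alpha-3$ from the pigeonhole count on edges leaving a triangle, $n_3$ from Theorem~\ref{thm:summary3}, and $1+\alpha^2$ from the $(\alpha,5)$-Moore bound. Once these are lined up, the four cases follow exactly as stated, so aside from careful bookkeeping the argument is routine.
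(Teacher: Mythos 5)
Your argument follows the same skeleton as the paper's proof: Corollary~\ref{cor:Turan} for the range ending at $\frac{3}{2}\alpha$, the Moore bound for the threshold $1+\alpha^2$, and a triangle-based count to separate $4\alpha-4$ from $4\alpha-5$. Where you genuinely diverge is the middle case: the paper assumes there is no induced $G(4,5)$ and counts \emph{vertices} (the triangle, plus the remaining neighbours of each of its three vertices, all forced to be distinct) to conclude $n\geq 3\alpha-3$, whereas you argue the contrapositive by counting the $3\alpha-6$ \emph{edges} leaving the triangle and applying pigeonhole to the $n-3$ outside vertices. The two counts are dual and produce the identical threshold, but your version is shorter and makes the origin of $3\alpha-3$ more transparent. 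Your explicit handling of the third range (a chordless $4$-cycle when the girth is $4$, a triangle plus one outside neighbour when the girth is $3$) is also spelled out in more detail than in the paper, which simply cites the Moore bound there.

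One step deserves a caveat, and it is a weakness you share with the paper rather than introduce: the claim that $n<n_3$ forces every $\alpha$-regular graph on $n$ vertices to contain a triangle. By Theorem~\ref{thm:summary3}, this implication is only guaranteed for odd $n$; for even $n$ the triangle-free threshold is $2\alpha$, which is strictly below $n_3$ since Lemma~\ref{lm:odd-triangle} gives $n_3\geq 2\alpha+2$. Concretely, for $\alpha=4$ one has $n_3=10$ and $\min\{n_3,3\alpha-3\}=9$, yet $K_{4,4}$ is a triangle-free $4$-regular graph on $n=8$ vertices with $M(4)=4\alpha-4$ by Corollary~\ref{thm:TD(2,a)}, so the bound $4\alpha-5$ cannot hold on all of the second interval as stated. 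Your pigeonhole step itself is fine, but the premise ``$G$ contains a triangle'' is not justified for even $n\in[2\alpha,n_3)$; the second range should use $2\alpha$ in place of $n_3$ when $n$ is even. Aside from this inherited issue, the proposal is correct and matches the paper's reasoning.
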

\begin{proof}
The range of $n$ for file size $4\alpha-6$ follows from Corollary~\ref{cor:Turan}. For file size $4\alpha-3$, the range of $n$ follows from the Moore bound (see Theorem~\ref{MooreBound}).

To distinguish between file sizes $4\alpha-4$ and $4\alpha-5$, we note that for $4\alpha-4$ the corresponding graph $G$ should not contain an induced subgraph  $G(4,5)$. If $G$ does not contain $K_3$ then obviously it does not contain $G(4,5)$. Hence, by Theorem~\ref{thm:summary3}, $n\geq n_3$ for file size $4\alpha-4$. If $G$ contains $K_3$ then let $v, u_1$ and $u_2$ be three vertices of~$K_3$. Since the degree of $v$ is $\alpha$, it follows that there are $\alpha-2$ other adjacent vertices $u_3,\ldots,u_{\alpha}$ of $v$. Since there is no $G(4,5)$ in $G$ it follows that $u_1$ and $u_2$ are not adjacent to any vertex in $\{u_3,\ldots,u_{\alpha}\}$. Moreover, except for $v$ there is no other vertex for which both $u_1$ and $u_2$ are adjacent. Let $A$ ($B$, respectively) be the set of additional $\alpha-2$ vertices to which $u_1$ ($u_2$, respectively) is adjacent. The set $A\cup B\cup \{v\}\cup\{u_1,u_2,u_3,\ldots,u_{\alpha}\} $ contains $3\alpha-3$ vertices, and hence $n\geq \min\{n_3, 3\alpha-3\}$, which completes the proof of the theorem.
\end{proof}

\begin{remark}Note, that there is an inequality in Theorem~\ref{thm:k4} since it is not always possible to find
a regular graph which satisfies a given constraint on the number of edges in a subgraph with a given number of vertices. For example, there are no Moore graphs  for some parameters.
\end{remark}

The existence problem of FR codes
 with $\rho=2$, for any given file size  in the interval between $k\alpha -\binom{k}{2}$ and $k\alpha-k+1$, is getting more complicated as
$k$ increases. Some  file sizes can be obtained by graphs with a given girth and by $(n,r)$-Tur\'an graphs with different $r$'s (see Theorem~\ref{trm:Turan} and Theorem~\ref{thm:rateGirth}). However, not all the values between $k\alpha -\binom{k}{2}$ and $k\alpha-k+1$ can be obtained by this way.
Moreover, an additional problem is to find the minimum number of vertices in an $\alpha$-regular  graph $G$ such that  the corresponding FR code $C_G$ has file size $k\alpha-x$, for some $k-1\leq x\leq \binom{k}{2}$. Note, that in some cases  Theorem~\ref{trm:Turan} and Theorem~\ref{thm:rateGirth} together with Moore bound and Tur\'an's theorem can provide the answer to this problem, as was demonstrated in Theorem~\ref{thm:summary3} and Theorem~\ref{thm:k4}.

The following lemma shows that given an information about the file size for a given reconstruction degree $k$, one can get some information about the file size for the reconstruction degree $k+1$. In other words, by eliminating the existence of some induced subgraphs of size $k$, one can prove the nonexistence of induced subgraphs of size $k+1$ which implies bounds on $M(k+1)$ by Lemma~\ref{lm:isoperimetric}.

\begin{lemma}
\label{lm:G-recursion}
If $G$ is a regular graph that does not contain $G(k,\binom{k}{2}-\delta)$, for all $0\leq \delta\leq x$, where $x$ is a given integer,  $0\leq x\leq \binom{k-1}{2}$,
then $G$ also does not contain $G(k+1,\binom{k+1}{2}-\gamma)$,  for all $0\leq \gamma\leq y$, where
 $y$ is given by
\begin{equation}
\label{eq:G(k,l)}
y=\left\{\begin{array}{cl}
    \binom{k}{2}-1 &\textmd{ if } x =\binom{k-1}{2} \\
  \left\lfloor\frac{2x}{k-1}\right\rfloor+x+1 &\textmd{ if } k \textmd{ is odd and } x\leq \binom{k-1}{2}-1\\
           2\left\lfloor\frac{x}{k-1}\right\rfloor+x+1 &\textmd{ if } k \textmd{ is even, } x\leq \binom{k-1}{2}-1\textmd{ and } 0\leq x\textmd{ mod }k-1\leq \frac{k-4}{2}\\
           2\left\lfloor\frac{x}{k-1}\right\rfloor+x+2 &\textmd{ if } k \textmd{ is even, } x\leq \binom{k-1}{2}-1\textmd{ and } \frac{k-2}{2}\leq x\textmd{ mod }k-1\leq k-2\\
         \end{array}\right.
\end{equation}

\end{lemma}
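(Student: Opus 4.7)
The plan is to prove the contrapositive by a standard vertex--deletion averaging argument, and then to check that each branch of the piecewise formula for $y$ in~\eqref{eq:G(k,l)} is dominated by the bound produced by the averaging. Specifically, suppose for contradiction that $G$ contains an induced subgraph $H$ on $k+1$ vertices with $e:=e(H)\geq \binom{k+1}{2}-y$ edges. For every vertex $v\in V(H)$, the graph $H-v$ is an induced subgraph of $G$ on $k$ vertices with $e-\deg_H(v)$ edges, so the hypothesis forces
\[
e-\deg_H(v)\ \leq\ \binom{k}{2}-x-1,\qquad\text{that is,}\qquad \deg_H(v)\ \geq\ e-\binom{k}{2}+x+1.
\]
Summing this lower bound over all $k+1$ vertices and using the identity $\sum_v \deg_H(v)=2e$ gives $(k-1)e\leq(k+1)\bigl(\binom{k}{2}-x-1\bigr)$, which I will rewrite in the equivalent form
\[
e \ \leq\ \binom{k+1}{2}\ -\ \frac{(k+1)(x+1)}{k-1}.
\]
Since $e$ is an integer this forces $e\leq \binom{k+1}{2}-\bigl\lceil (k+1)(x+1)/(k-1)\bigr\rceil$, producing the desired contradiction with $e\geq \binom{k+1}{2}-y$ whenever $y\leq \lceil(k+1)(x+1)/(k-1)\rceil-1$.

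The rest of the proof is arithmetic: I have to verify that each of the four expressions for $y$ in~\eqref{eq:G(k,l)} satisfies this last inequality. The key algebraic identity is
\[
\frac{(k+1)(x+1)}{k-1}\ =\ (x+1)+\frac{2(x+1)}{k-1},
\]
which reduces the question to an exact evaluation of $\lceil 2(x+1)/(k-1)\rceil$, and this in turn depends on the parity of $k-1$ and on the residue $s=x\bmod(k-1)$. For $k$ odd, both $k-1$ and $2(x+1)$ are even, and one checks that $\lceil 2(x+1)/(k-1)\rceil=\lfloor 2x/(k-1)\rfloor+1$, which is exactly what is needed for the second branch $y=\lfloor 2x/(k-1)\rfloor+x+1$. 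For $k$ even, writing $x=q(k-1)+s$ with $0\leq s\leq k-2$ and $q=\lfloor x/(k-1)\rfloor$, a direct check gives $\lceil 2(x+1)/(k-1)\rceil=2q+1$ when $0\leq s\leq (k-4)/2$ and $\lceil 2(x+1)/(k-1)\rceil=2q+2$ when $(k-2)/2\leq s\leq k-2$, producing the third and fourth branches respectively. Finally, when $x=\binom{k-1}{2}$ one computes $(k+1)(x+1)/(k-1)=\binom{k}{2}+2/(k-1)$, whose ceiling is $\binom{k}{2}+1$ for every $k\geq 3$; hence the first branch $y=\binom{k}{2}-1$ is comfortably within the averaging bound.

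The main obstacle, in my view, is not conceptual but notational: the verification of each branch requires careful bookkeeping of the floors and ceilings and of the residues modulo $k-1$. Nothing beyond the inequality $\deg_H(v)\geq e-\binom{k}{2}+x+1$ and the double-counting identity $\sum_v \deg_H(v)=2e$ is used, and in particular the regularity of $G$ assumed in the lemma statement plays no role in the argument itself.
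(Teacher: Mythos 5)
Your proof is correct and follows essentially the same vertex-deletion argument as the paper: the paper removes a maximum-degree vertex of the complement of the hypothetical $(k+1)$-vertex induced subgraph and verifies $x = y-\left\lceil 2y/(k+1)\right\rceil$, while you sum the degree bound $\deg_H(v)\geq e-\binom{k}{2}+x+1$ over all $k+1$ vertices, which is the same pigeonhole step in averaged form, followed by the same residue-by-residue arithmetic. Your formulation is in fact slightly tidier, since assuming only $e\geq\binom{k+1}{2}-y$ disposes of every $\gamma\leq y$ simultaneously, whereas the paper treats the extremal case $\gamma=y$ and defers the remaining values to an unwritten induction.
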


\begin{proof}
First, we prove the statement for the maximum value  of $\gamma$.
If there exists an induced subgraph $G(k+1,\binom{k+1}{2}-y)$ in $G$ then in its complement $\overline{G}(k+1,\binom{k+1}{2}-y)$ there are $y$ edges and hence in $\overline{G}(k+1,\binom{k+1}{2}-y)$ there is a vertex $v$ with degree at least $\left\lceil\frac{2y}{k+1}\right\rceil$.
However, $G(k+1,\binom{k+1}{2}-y)\setminus \{v\}$ is an induced subgraph $G(k, \binom{k}{2}-(y-\left\lceil\frac{2y}{k+1}\right\rceil)+\epsilon)$, for some $\epsilon\geq 0$. To prove the statement of the lemma, it is sufficient to show that \begin{equation}
\label{eq:proofG}
x=y-\left\lceil\frac{2y}{k+1}\right\rceil,
\end{equation}
for $y$ given in~(\ref{eq:G(k,l)}).
It is easy to verify that $y$ from~(\ref{eq:G(k,l)}) indeed satisfies equation~(\ref{eq:proofG}).
The proof for the other values of $\gamma$ follows by induction.
\end{proof}

\begin{corollary}
If $M(k)\leq k\alpha-(\binom{k}{2}-x-1)$, for a given $x$, $0\leq x\leq \binom{k-1}{2}$, then $M(k+1)\leq (k+1)\alpha-(\binom{k+1}{2}-y-1)$, where $y$ is given in ~(\ref{eq:G(k,l)}).
\end{corollary}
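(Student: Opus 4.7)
The plan is to deduce the corollary directly from Lemma~\ref{lm:G-recursion} using Lemma~\ref{lm:isoperimetric} as a dictionary between bounds on the file size $M(k)$ and edge counts of induced $k$-vertex subgraphs of the underlying $\alpha$-regular graph $G$ of $C_G$.

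First I would unpack the hypothesis on $M(k)$. By Lemma~\ref{lm:isoperimetric} one has $M(k) = k\alpha - \max_{G'\in G_k}|E(G')|$, so bounding $M(k)$ in the stated form is equivalent to bounding $\max_{G'\in G_k}|E(G')|$ by $\binom{k}{2}-x-1$. In particular, $G$ has no induced subgraph $G(k,\binom{k}{2}-\delta)$ for any $\delta$ with $0\leq \delta\leq x$, which is exactly the hypothesis required to invoke Lemma~\ref{lm:G-recursion}.

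Second, I would apply Lemma~\ref{lm:G-recursion} to conclude that $G$ contains no induced subgraph $G(k+1,\binom{k+1}{2}-\gamma)$ for any $\gamma$ with $0\leq \gamma\leq y$, where $y$ is the explicit function of $x$ and $k$ given in~(\ref{eq:G(k,l)}). Equivalently, $\max_{G'\in G_{k+1}}|E(G')|\leq \binom{k+1}{2}-y-1$. Translating back through Lemma~\ref{lm:isoperimetric} yields the bound on $M(k+1)$ claimed by the corollary.

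I do not anticipate any substantive obstacle: all the combinatorial content resides in Lemma~\ref{lm:G-recursion}, and the corollary is a purely formal restatement under the correspondence $M(\cdot)\leftrightarrow \ell\alpha - \max|E(G')|$. The only care required is to check that the direction of the inequalities lines up consistently, namely that the absence of dense induced subgraphs of order $k$ corresponds to the stated bound on $M(k)$, and likewise at order $k+1$; once this bookkeeping is verified against the wording of Lemma~\ref{lm:G-M}, the conclusion follows immediately.
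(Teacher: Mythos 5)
Your overall route---using Lemma~\ref{lm:isoperimetric} as a dictionary between bounds on $M(\cdot)$ and edge counts of induced subgraphs, and then invoking Lemma~\ref{lm:G-recursion}---is exactly the one the paper intends; the corollary is stated there with no separate proof for precisely this reason. However, the step you deferred as ``bookkeeping'' is where the argument breaks, and it does not line up. Since $M(k)=k\alpha-\max_{G'\in G_k}|E'|$, the hypothesis $M(k)\leq k\alpha-(\binom{k}{2}-x-1)$ is equivalent to $\max_{G'\in G_k}|E'|\geq \binom{k}{2}-x-1$, i.e.\ to the \emph{existence} of an induced $k$-vertex subgraph with at least $\binom{k}{2}-x-1$ edges. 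Your claim that this hypothesis is equivalent to bounding $\max_{G'\in G_k}|E'|$ \emph{above} by $\binom{k}{2}-x-1$, hence that $G$ contains no induced $G(k,\binom{k}{2}-\delta)$ for $0\leq\delta\leq x$, has the inequality reversed: that non-containment statement is equivalent to the \emph{lower} bound $M(k)\geq k\alpha-(\binom{k}{2}-x-1)$ (this is the second bullet of Lemma~\ref{lm:G-M}). So, as written, your argument does not establish the literal statement.

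The underlying issue is that the corollary as printed appears to have its inequality signs the wrong way around: consistently with the sentence introducing Lemma~\ref{lm:G-recursion} (``by eliminating the existence of some induced subgraphs of size $k$, one can prove the nonexistence of induced subgraphs of size $k+1$'') and with Lemma~\ref{lm:G-M}, the intended statement is $M(k)\geq k\alpha-(\binom{k}{2}-x-1)\Rightarrow M(k+1)\geq (k+1)\alpha-(\binom{k+1}{2}-y-1)$. For that corrected statement your proof is complete and is precisely the paper's (implicit) argument. For the statement with ``$\leq$'' taken literally, Lemma~\ref{lm:G-recursion} only yields, via its contrapositive, an implication running from level $k+1$ down to level $k$ (and with shifted constants), and the existence of a dense induced $k$-subgraph does not in general force the existence of a correspondingly dense induced $(k+1)$-subgraph; so no straightforward appeal to the lemma can rescue the literal version. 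You should either flag the sign discrepancy and prove the corrected statement, or explicitly carry out the translation in both directions rather than asserting that it ``lines up.''
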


As we already saw, constructions of FR codes for given specific parameters can be formulated in terms of graph theory.
We formulate the following problems in graph theory that provide some information about the number of nodes of FR codes with maximum file size and
the existence of an FR code with a given file size.

\textbf{Problem 1.} Find the value of  $N(k, \alpha, \delta)$, $0\leq\delta\leq \binom{k}{2}-k$, which is the maximum number of vertices such that any $\alpha$-regular graph $G$  with $n<N(k, \alpha, \delta)$ vertices, where $n\alpha$ is even integer,  contains $G(k,\binom{k}{2}-x)$, for some $0\leq x\leq \delta$.

\textbf{Problem 2.} Find the value of  $N'(k, \alpha, \delta)$, $0\leq\delta\leq \binom{k}{2}-k$,  which is the minimum number of vertices such that for any $n\geq N'(k, \alpha, \delta)$, where $n\alpha$ is even integer,  there exists an
$\alpha$-regular graph $G$  with $n$ vertices which does not contain $G(k,\binom{k}{2}-x)$, for any $0\leq x\leq \delta$ but contains $G(k,\binom{k}{2})-\delta-1)$.

\textbf{Problem 3.} Let $n, k, \alpha, \delta$ be positive integers such that $3\leq k\leq \alpha$ and $0\leq \delta\leq \binom{k}{2}-k$. Does there exist an $\alpha$-regular graph $G$ with $n$ vertices which does not contain $G(k,\binom{k}{2}-x)$ for any $0\leq x\leq \delta$ but contains $G(k, \binom{k}{2}-\delta-1)$?

Clearly, an answer to Problem 3 provides a solution to the existence question of  FR codes with any file size for $\rho=2$. Based on Lemma~\ref{lm:G-M} and the solutions to Problem 1 and Problem 2, one can prove the following lemma.
\begin{lemma}
\[
\begin{array}{cc}
    A(n,k,\alpha,2) \leq k\alpha-\binom{k}{2}+\delta &\textmd{ if }n< N(k, \alpha, \delta)\\
     A(n,k,\alpha,2)\geq  k\alpha-\binom{k}{2}+\delta+1 &\textmd{ if } n\geq N'(k, \alpha, \delta).
  \end{array}
\]
\end{lemma}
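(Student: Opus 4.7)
The plan is to read both halves of the claim as translations between the combinatorial graph statements that define $N(k,\alpha,\delta)$ and $N'(k,\alpha,\delta)$ on one side and the extremal file-size statement $A(n,k,\alpha,2)$ on the other, using Lemma~\ref{lm:isoperimetric} and its immediate consequence Lemma~\ref{lm:G-M} as the dictionary. Recall that every $(n,\alpha,2)$ FR code is of the form $C_G$ for some $\alpha$-regular graph $G$ on $n$ vertices, and that the file size is $M(k)=k\alpha-\max_{G'\in G_k}|E'|$.

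For the first inequality, I would assume $n<N(k,\alpha,\delta)$ and pick an arbitrary $(n,\alpha,2)$ FR code $C_G$. By the definition of $N(k,\alpha,\delta)$ in Problem~1, the underlying graph $G$ must contain an induced subgraph of the form $G(k,\binom{k}{2}-x)$ for some $0\le x\le\delta$. Applying the first bullet of Lemma~\ref{lm:G-M} to this induced subgraph yields
\[
M(k)\le k\alpha-\Bigl(\binom{k}{2}-x\Bigr)=k\alpha-\binom{k}{2}+x\le k\alpha-\binom{k}{2}+\delta.
\]
Since this upper bound holds for every admissible $G$, taking the supremum over $C_G$ gives $A(n,k,\alpha,2)\le k\alpha-\binom{k}{2}+\delta$.

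For the second inequality, I would assume $n\ge N'(k,\alpha,\delta)$ and invoke the definition of $N'$ in Problem~2 to obtain an $\alpha$-regular graph $G$ on $n$ vertices whose induced subgraphs on $k$ vertices have no more than $\binom{k}{2}-\delta-1$ edges (i.e.\ none of the forbidden denser configurations $G(k,\binom{k}{2}-x)$, $0\le x\le\delta$, occurs in $G$). The second bullet of Lemma~\ref{lm:G-M} then gives
\[
M(k)\ge k\alpha-\bigl(\tbinom{k}{2}-\delta-1\bigr)=k\alpha-\binom{k}{2}+\delta+1
\]
for the FR code $C_G$, and since $A(n,k,\alpha,2)$ is the maximum achievable $M(k)$, the desired lower bound follows.

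There is no real obstacle here beyond making the two direction-of-quantification correct: the upper bound uses that $N(k,\alpha,\delta)$ is defined by a \emph{universal} property over all $\alpha$-regular graphs on fewer than $N$ vertices, so it bounds every candidate code; the lower bound uses the \emph{existential} guarantee produced by $N'(k,\alpha,\delta)$, furnishing a single graph whose FR code already achieves the claimed file size. A minor parity remark is that in both definitions $n\alpha$ is required to be even so that $\alpha$-regular graphs on $n$ vertices actually exist; under the standing assumptions of the subsection this is automatic, so the argument goes through without further ado.
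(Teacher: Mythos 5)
Your proof is correct and follows exactly the route the paper intends: the paper states this lemma without an explicit proof, remarking only that it follows from Lemma~\ref{lm:G-M} together with the definitions of $N(k,\alpha,\delta)$ and $N'(k,\alpha,\delta)$ in Problems~1 and~2, and your argument fills in precisely that translation (universal containment for the upper bound, existential avoidance for the lower bound). Nothing further is needed.
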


By our discussion above we have
\begin{corollary}
 $N(3,\alpha,0)=2\alpha$; $N'(3,\alpha,0)=2\alpha $ if $\alpha$ is odd, and $ N'(3,\alpha,0)\leq \frac{5}{2}\alpha$ if $\alpha$ is even.
 \end{corollary}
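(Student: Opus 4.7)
The plan is to read the corollary off directly from Corollary~\ref{cor:Turan}, Theorem~\ref{thm:summary3}, and Lemma~\ref{lm:odd-triangle}: once the definitions of $N$ and $N'$ are unwound for $k=3$, $\delta=0$, each inequality reduces to one of these previously-established facts. The only translation needed is that, for $k=3$ and $\delta=0$, requiring the graph to contain $G(3,\binom{3}{2}-x)$ for some $0\le x\le 0$ just means requiring a triangle, and requiring the graph to contain $G(3,\binom{3}{2}-\delta-1)=G(3,2)$ (the induced path on $3$ vertices) is automatic for any $\alpha$-regular triangle-free graph with $\alpha\ge 2$ (pick any vertex $v$ and any two neighbors $u_1,u_2$; since $u_1,u_2$ cannot be adjacent, $u_1vu_2$ is an induced $P_3$).

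For $N(3,\alpha,0)=2\alpha$, the lower bound $N(3,\alpha,0)\ge 2\alpha$ is exactly Corollary~\ref{cor:Turan} with $r=2$: every triangle-free $\alpha$-regular graph has at least $2\alpha$ vertices, so for $n<2\alpha$ every $\alpha$-regular graph on $n$ vertices must contain a triangle. The matching upper bound is witnessed by $K_{\alpha,\alpha}$, an $\alpha$-regular triangle-free graph on exactly $2\alpha$ vertices; its existence certifies that the universal property "every $\alpha$-regular graph contains a triangle" fails at $n=2\alpha$. Hence $N(3,\alpha,0)=2\alpha$.

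For $N'(3,\alpha,0)$, the task reduces to existence of an $\alpha$-regular triangle-free graph on $n$ vertices for every admissible $n\ge N'$ (i.e., $n\alpha$ even). When $\alpha$ is odd, admissibility forces $n$ even, and the first bullet of Theorem~\ref{thm:summary3} gives $A(n,3,\alpha,2)=3\alpha-2$ for every even $n\ge 2\alpha$; by Lemma~\ref{lm:clique} this produces the required triangle-free $\alpha$-regular graph. Combined with the necessary condition $n\ge 2\alpha$ from the Tur\'an argument above, this yields $N'(3,\alpha,0)=2\alpha$. When $\alpha$ is even, both parities of $n$ are admissible; even $n\ge 2\alpha$ are handled as before, and Lemma~\ref{lm:odd-triangle} (via the odd-$n$ construction in Appendix~\ref{app:triangle}) covers odd $n\ge n_3$, where $n_3\le \tfrac{5\alpha}{2}$. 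Since $\alpha$ is even, $\tfrac{5\alpha}{2}$ is an integer and dominates $2\alpha$, so for every $n\ge\tfrac{5\alpha}{2}$ of either parity a triangle-free $\alpha$-regular graph on $n$ vertices exists, giving $N'(3,\alpha,0)\le\tfrac{5\alpha}{2}$. No substantive obstacle arises — the corollary is a bookkeeping exercise combining the Tur\'an lower bound with the existence results already in hand; the only care required is tracking the parity constraint $n\alpha\in 2\Z$ separately for odd and even $\alpha$, which is precisely what produces the asymmetry between the equality in the odd case and the (potentially non-tight) inequality in the even case.
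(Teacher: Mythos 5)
Your proposal is correct and follows essentially the same route as the paper, which states this corollary without a separate proof ("By our discussion above") — the intended ingredients are exactly the ones you assemble: Corollary~\ref{cor:Turan} with $r=2$ plus the witness $K_{\alpha,\alpha}$ for $N(3,\alpha,0)=2\alpha$, and the even-$n$ and odd-$n$ triangle-free constructions behind Theorem~\ref{thm:summary3} and Lemma~\ref{lm:odd-triangle} for the bounds on $N'(3,\alpha,0)$. Your explicit check that a triangle-free $\alpha$-regular graph automatically contains the required induced $G(3,2)$ is a small point the paper leaves implicit.
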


 From the Moore bound and from the Tur\'an bound we have the following corollary.

\vspace{.2cm}

 \begin{corollary}
$~$
 \begin{itemize}
   \item $N(k,\alpha,\binom{k}{2}-k)\geq \widehat{n}_0(\alpha, k+1)$, where $\widehat{n}_0(\alpha, k)$ is defined in Subsection~\ref{subsec:graphs}.
   \item $N(k,\alpha,0)\geq\left\lceil \frac{k-1}{k-2}\alpha\right\rceil$.
 \end{itemize}

 \end{corollary}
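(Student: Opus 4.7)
The plan is to handle the two items separately, as each reduces directly to a tool already established in the preliminaries. Throughout we may assume $k\geq 3$ so that the parameter $\binom{k}{2}-k$ is nonnegative and Tur\'an-type statements apply; recall also the standing assumption $\alpha>2$.

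For the first item, I would show that whenever $n<\widehat{n}_0(\alpha,k+1)$, every $\alpha$-regular graph $G$ on $n$ vertices contains an induced subgraph on $k$ vertices with at least $k$ edges; this is precisely a $G(k,\binom{k}{2}-x)$ for some $x\in\{0,1,\ldots,\binom{k}{2}-k\}$. The starting observation is that, by the paper's use of $\widehat{n}_0$ (as in Lemma~\ref{lm:not tight}), the hypothesis $n<\widehat{n}_0(\alpha,k+1)$ forces $G$ to have girth at most $k$, so $G$ must contain a cycle $C$ of some length $\ell\leq k$. If $\ell=k$, then $V(C)$ already gives the desired induced subgraph. If $\ell<k$, I would extend $V(C)$ greedily: at each step, add a vertex $v$ outside the current set that is adjacent to some vertex inside it, thus contributing at least one new induced edge. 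After $k-\ell$ extension steps, the induced subgraph has $k$ vertices and at least $\ell+(k-\ell)=k$ edges. The extension step is always possible because $V(C)$ lies in a connected component of $G$ of size at least $\alpha+1\geq k+1$, so the growing set remains a proper subset of its component and connectivity guarantees an adjacent vertex outside. I do not expect this step to pose any real difficulty; the only thing to verify carefully is the availability of the extension, which is routine.

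For the second item, the claim is that any $\alpha$-regular graph on $n<\lceil\frac{k-1}{k-2}\alpha\rceil$ vertices must contain a $k$-clique (which is $G(k,\binom{k}{2})$). I would take the contrapositive and apply Corollary~\ref{cor:Turan} with $r=k-1$: the absence of a $k$-clique is precisely the absence of an $(r+1)$-clique, so $n\geq \tfrac{r}{r-1}\alpha=\tfrac{k-1}{k-2}\alpha$, and the integrality of $n$ yields the ceiling bound. This item is essentially a translation of the Tur\'an corollary into the $N(\cdot,\cdot,\cdot)$ notation with $\delta=0$, and presents no obstacle.
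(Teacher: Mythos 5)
Your proposal is correct and follows exactly the route the paper intends (the paper gives no explicit proof beyond invoking the Moore and Tur\'an bounds): the first item via $n<\widehat{n}_0(\alpha,k+1)$ forcing girth at most $k$ and hence an induced $k$-vertex subgraph with at least $k$ edges (your greedy extension is a correct elaboration of what the paper's Lemma on girth already asserts, and the extension step is indeed available since each component of an $\alpha$-regular graph has at least $\alpha+1\geq k+1$ vertices), and the second item via Corollary~\ref{cor:Turan} with $r=k-1$ plus integrality of $n$. No gaps.
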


\textbf{Problem 4.}  Find the value of  $\eta(k, \alpha, \delta)$, $k-1\leq\delta\leq \binom{k}{2}$,  which is the minimum number of vertices in  a  graph  such that the FR code based on $G$ has file size $k\alpha-\delta$.

\vspace{.5cm}

Now we consider bounds on $\eta(5, \alpha, \delta)$.

\begin{itemize}

\item A code $C$ for which $M(5)= 5 \alpha -4$ is obtained from  a graph $G$ if and only if the girth of $G$ is at least 6. Hence, $\eta(5, \alpha, 4)\geq 2\alpha^2-2\alpha+2$, by the Moore bound (if $\alpha=q+1$, for a prime power $q$, then the existence of a projective plane of order $q$ implies that $\eta(5, \alpha, 4)= 2\alpha^2-2\alpha+2$).

\item A code $C$ for which $M(5)= 5 \alpha -5$ can be obtained from any graph with girth 5. Hence, if there exists an $(\alpha, 5)$-cage then $\eta(5, \alpha, 5)\leq \alpha^2+1$, by the Moore bound.

\item A code $C$ for which $M(5)= 5 \alpha -6$ can be obtained from an $(n,2)$-Tur\'an graph, which is also a graph with girth~4. Hence,  $\eta(5, \alpha, 6)\leq 2\alpha$, by the Tur\'an's theorem.

\item A code $C$ for which $M(5)= 5 \alpha -8$ can be obtained from an $(n,3)$-Tur\'an graphs. Hence, if   $\alpha$ is even  then $\eta(5, \alpha, 8)\leq \frac{3}{2}\alpha$, by the Tur\'an's theorem.

\item A code $C$ for which $M(5)= 5 \alpha -9$ can be obtained from an $(n,4)$-Tur\'an graph. Hence, if  3 divides $\alpha$ then $\eta(5, \alpha, 9)\leq \frac{4}{3}\alpha$, by the Tur\'an's theorem.

\item A code $C$ for which $M(5)= 5 \alpha -10$ can be obtained  from an $(n,r)$-Tur\'an graph, for any $r\geq 5$ (e.g., a complete graph for $r=n$). Hence, $\eta(5, \alpha, 10)= \alpha+1$.

\end{itemize}

One can see that the value of $M(5)=5\alpha-7$ is missing from the list above. However, it is easy to prove that any file size in the interval between $k\alpha -\binom{k}{2}$ and $k\alpha-k+1$ can be obtained.
To prove this claim one can start with
a known graph $G=(V,E)$ which contains an induced subgraphs $G(k,t-1)$ but does not contain $G(k,t)$.
Let $H=G(k,t-1)$ be an induced subgraph
of $G$; let $v_1,v_2$ two vertices
in $H$ and $v_3, v_4$ two vertices in $G\setminus H$ such
that $e_1=\{v_1,v_2\}, e_2=\{v_3,v_4\}\notin E$ and $e_3=\{v_1,v_3\}, e_4=\{v_2,v_4\}\in E$.
One can easily verify that the graph $G\setminus \{e_3,e_4\}\cup\{e_1,e_2\}$
contains an induced subgraph $G(k,t)$ but does not contain $G(k,t+1)$.
This implies  that $\eta(k, \alpha, \delta)$ is a decreasing function of $\delta$, and hence $\eta(5,\alpha, 7)\leq \eta(5,\alpha, 6)$.
We conjecture that $\eta(5,\alpha, 7)=\ell\alpha$, where $\frac{3}{2}<\ell<2$.
%

The discussion of this subsection about FR codes with a given file size raises a lot of questions in graph theory. We leave the research on these questions for future research in graph theory.


\section{Fractional Repetition Codes with Repetition Degree $\rho>2$}
\label{sec:rho>2}

In this section, we consider FR codes with repetition degree $\rho> 2$. Note, that while  codes with $\rho=2$ have the maximum data/storage ratio,  codes with $\rho>2$ provide multiple choices for node repairs. In other words, when a node fails, it can be repaired from different $d$-subsets of available nodes.

We present  generalizations  of the  constructions from the previous section which were based on Tur\'an graphs and graphs with a given girth. These generalizations employ transversal designs and generalized polygons, respectively.

We start with a construction of FR codes from transversal designs.  Let $\textmd{TD}$ be a transversal design $\textmd{TD}(\rho,\alpha)$, $\rho\leq \alpha+1$, with block size $\rho$ and group size $\alpha$. Let $C_{\textmd{TD}}$ be an $(n,\alpha,\rho)$ FR code based on $\textmd{TD}$ (see Subsection~\ref{subsec:codes}). Recall that
 by Lemma~\ref{lm:TDparameters}, there are $\rho \alpha$ points in $\textmd{TD}$ and hence  $n= \rho \alpha$. Note, that  all the symbols stored in  node $i$ correspond to the set $N_i$ of blocks from $\textmd{TD}$ that contain the point $i$. Since by Lemma~\ref{lm:TDparameters} there are $\alpha$  blocks that contain a given point, it follows that each node stores $\alpha$ symbols.

Similarly to  Theorem~\ref{trm:Turan} we can prove the following theorem.

\begin{theorem}
\label{lm:TDrate}
Let $k=b\rho+t$, for $b,t\geq 0$ such that $t\leq \rho-1$. For an $(n=\rho\alpha,\alpha,\rho)$ FR code $C_{\textmd{TD}}$  based on a transversal design $\textmd{TD}(\rho,\alpha)$ we have

\[M(k)\geq k\alpha-\binom{k}{2}+\rho\binom{b}{2}
+bt.
 \]
\end{theorem}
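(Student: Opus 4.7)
My plan is to generalize the counting argument implicit in the Tur\'an proof to transversal designs, using the defining property that any two points from distinct groups lie in exactly one common block. Let $(\cP,\cG,\cB)$ be the transversal design $\textmd{TD}(\rho,\alpha)$ underlying $C_\textmd{TD}$; nodes correspond to points of $\cP$ and stored symbols to blocks of $\cB$, so for any index set $I\subseteq\cP$ of size $k$ I need to lower bound $s(I) \deff |\cup_{i\in I}N_i|$, the number of blocks that meet $I$.

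First, I would introduce the two natural statistics: for each group $G_j\in\cG$, $j=1,\ldots,\rho$, let $k_j = |I\cap G_j|$, so $\sum_j k_j = k$; and for each block $B$, let $j_B = |B\cap I|$. Double counting incidences gives $\sum_{B:j_B\geq 1} j_B = k\alpha$ (each of the $k$ points lies in exactly $\alpha$ blocks by Lemma~\ref{lm:TDparameters}), hence
\[
s(I) = \sum_{B:j_B\geq 1} 1 = k\alpha - \sum_{B}(j_B-1).
\]
Double counting pairs of points of $I$ that share a block gives $\sum_B \binom{j_B}{2}$, and by the defining property of a transversal design this equals exactly the number of pairs of points of $I$ lying in distinct groups, i.e.\ $\binom{k}{2} - \sum_j \binom{k_j}{2}$.

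Next, I would use the elementary inequality $j_B - 1 \leq \binom{j_B}{2}$ valid for every integer $j_B\geq 1$ (with equality iff $j_B\in\{1,2\}$) to deduce
\[
s(I) \geq k\alpha - \sum_B \binom{j_B}{2} = k\alpha - \binom{k}{2} + \sum_{j=1}^{\rho}\binom{k_j}{2}.
\]
Finally, I would minimize the right-hand side over integer compositions $k = k_1+\cdots+k_\rho$ with $0 \leq k_j \leq \alpha$. By convexity of $x\mapsto\binom{x}{2}$, the minimum of $\sum_j\binom{k_j}{2}$ is attained at the most balanced composition, namely $t$ parts equal to $b+1$ and $\rho-t$ parts equal to $b$ (feasible since $k\leq\alpha$ forces $b+1\leq\alpha$). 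This minimum evaluates to $t\binom{b+1}{2}+(\rho-t)\binom{b}{2} = \rho\binom{b}{2}+bt$, using $\binom{b+1}{2}-\binom{b}{2}=b$. Combining with the preceding inequality and taking the minimum over $I$ yields $M(k)\geq k\alpha - \binom{k}{2} + \rho\binom{b}{2} + bt$, which is exactly the claimed bound.

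The one place requiring care is the slack in the inequality $j_B-1\leq\binom{j_B}{2}$, which is strict as soon as some block contains three or more selected points; this is precisely why one gets only a lower bound for $\rho>2$, whereas in the $\rho=2$ Tur\'an case blocks (edges) meet $I$ in at most two points, the inequality is tight, and the bound becomes an equality as in Theorem~\ref{trm:Turan}.
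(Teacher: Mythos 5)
Your proof is correct, and it is essentially the argument the paper intends: the paper gives no explicit proof here, saying only that the theorem is proved ``similarly to Theorem~\ref{trm:Turan}'', whose proof counts edges inside an induced $k$-subset and identifies the balanced partition as the extremal configuration. Your double-counting of incidences and of coincident pairs, $\sum_B \binom{j_B}{2} = \binom{k}{2} - \sum_j \binom{k_j}{2}$, is the natural design-theoretic version of that argument, and your use of $j_B - 1 \leq \binom{j_B}{2}$ correctly isolates the one step that degrades to an inequality when $\rho > 2$ (a block may contain three or more selected points), which is exactly why the paper states only a lower bound in this theorem while Theorem~\ref{trm:Turan} is an equality. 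The only cosmetic issue is that in your first display the sum $\sum_B (j_B - 1)$ should be restricted to blocks with $j_B \geq 1$, as your subsequent inequality implicitly assumes.
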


\begin{remark} Note, that for all $k\geq \rho+1$, the file size of the FR code $C_{\textmd{TD}}$ is strictly larger than the MBR capacity.
\end{remark}

\begin{corollary}
Let $C_{\textmd{TD}}$ be an $(r\alpha,\alpha,r)$ FR code based on $\textmd{TD}$, a transversal design $\textmd{TD}(r,\alpha)$. Let $C_T$ be an $(\frac{r}{r-1}\alpha,\alpha,2)$ FR code based on $(n,r)$-Tur\'an graph $T$. If $M_{\textmd{TD}}(k)$ and $M_{T}(k)$ are their file sizes, respectively, then
\begin{enumerate}
  \item $C_{\textmd{TD}}=C_T$ for $r=2$;
  \item $M_{\textmd{TD}}(k)\geq M_{T}(k)$ for all $r\geq 2$.
\end{enumerate}
\end{corollary}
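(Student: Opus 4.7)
My plan is to handle the two parts of the corollary using results already established in the paper.

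For part~(1), I would invoke the remark in Subsection~\ref{subsec:designs} stating that a $\textmd{TD}(2,h)$ is equivalent to the complete bipartite graph $K_{h,h}$: the two groups of size $h$ become the two parts, points become vertices, and each block (a pair of points, one from each group) becomes an edge. Under this identification the incidence matrix of $\textmd{TD}(2,\alpha)$ coincides, up to a relabeling of rows and columns, with the incidence matrix of $K_{\alpha,\alpha}$. For $r=2$ the $(n,r)$-Tur\'an graph on $n=\tfrac{r}{r-1}\alpha=2\alpha$ vertices is precisely $K_{\alpha,\alpha}$, so the two constructions produce the same incidence matrix, and by the definition of FR codes based on graphs/designs in Subsection~\ref{subsec:codes} we conclude $C_{\textmd{TD}}=C_T$.

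For part~(2), my plan is to compare the file sizes directly via the explicit formulas already proved. Writing $k=br+t$ with $0\le t\le r-1$, Theorem~\ref{trm:Turan} applied to $T$ gives the exact value
\[
M_T(k)=k\alpha-\binom{k}{2}+r\binom{b}{2}+bt,
\]
and Theorem~\ref{lm:TDrate} applied with $\rho=r$ and the same decomposition of $k$ yields the lower bound
\[
M_{\textmd{TD}}(k)\ge k\alpha-\binom{k}{2}+r\binom{b}{2}+bt.
\]
Since the two right-hand sides are identical, $M_{\textmd{TD}}(k)\ge M_T(k)$ follows for every $k$ in the common admissible range $1\le k\le \alpha$, which is exactly the claim.

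The only potential obstacle I foresee is a bookkeeping check: I would need to verify that the two theorems are really quoting the same decomposition $k=br+t$ with $0\le t\le r-1$, so that the pointwise comparison of the formulas is legitimate. Since both theorems impose $1\le k\le \alpha$ and use the same $r$, this alignment is immediate, and no further computation is required beyond pointing to the two previous theorems.
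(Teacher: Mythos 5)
Your proposal is correct and follows exactly the route the paper intends: the paper states this corollary without proof as an immediate consequence of the remark that $\textmd{TD}(2,h)$ is equivalent to $K_{h,h}$ (giving part~1, since the $(2\alpha,2)$-Tur\'an graph is $K_{\alpha,\alpha}$) and of the matching formulas in Theorem~\ref{trm:Turan} (equality) and Theorem~\ref{lm:TDrate} (lower bound) for part~2. Your bookkeeping check that both theorems use the same decomposition $k=br+t$, $0\le t\le r-1$, is the only point needing care, and you handled it correctly.
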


\begin{example}
\label{ex:TD_34}
Let TD be a transversal design $\textmd{TD}(3,4)$  defined as follows:
 $\cP=\{1,2,\ldots,12\}$; $\mathcal{G}=\{G_1,G_2,G_3\}$, where
$G_1=\{1,2,3,4\}$, $G_2=\{5,6,7,8\}$, and $G_3=\{9,10,11,12\}$;
$\mathcal{B}=\{B_1,B_2,\ldots, B_{16}\}$,
and incidence matrix  given by

\begin{footnotesize}
\[\textbf{I}(\textmd{TD})=\left(
                   \begin{array}{cccccccccccccccc}
                     1 & 1 & 1 & 1 & &&&&&&&&&&& \\
                     &&& & 1 & 1 & 1 & 1 & &&&&&&& \\
                     &&&&&&&& 1 & 1 & 1 & 1 & &&& \\
                     &&&&&&&&&&& & 1 & 1 & 1 & 1 \\
                     1 & && & 1 & && & 1 & && & 1 & && \\
                      & 1 & &&& 1 & && & 1 & && & 1 & & \\
                      &  & 1 & && & 1 & && & 1 & && & 1 &  \\
                      & & & 1 & && & 1 & && & 1 & && & 1 \\
                     1 & &&& & 1 & &&&& & 1 & & & 1 &  \\
                      & 1 & & & 1 & &&&& & 1 & &&& & 1 \\
                      &  & 1 & &&& & 1 &  & 1 & & & 1 & && \\
                      & & & 1 & & & 1 &  & 1 & &&& & 1 & & \\
                   \end{array}
                 \right)
\]
\end{footnotesize}

The placement of the symbols from a codeword of the corresponding MDS code of length $16$ is shown in Fig.\ref{fig:TD34}.
\begin{figure*}[t]
 \centering
 \includegraphics[width=0.5\textwidth]{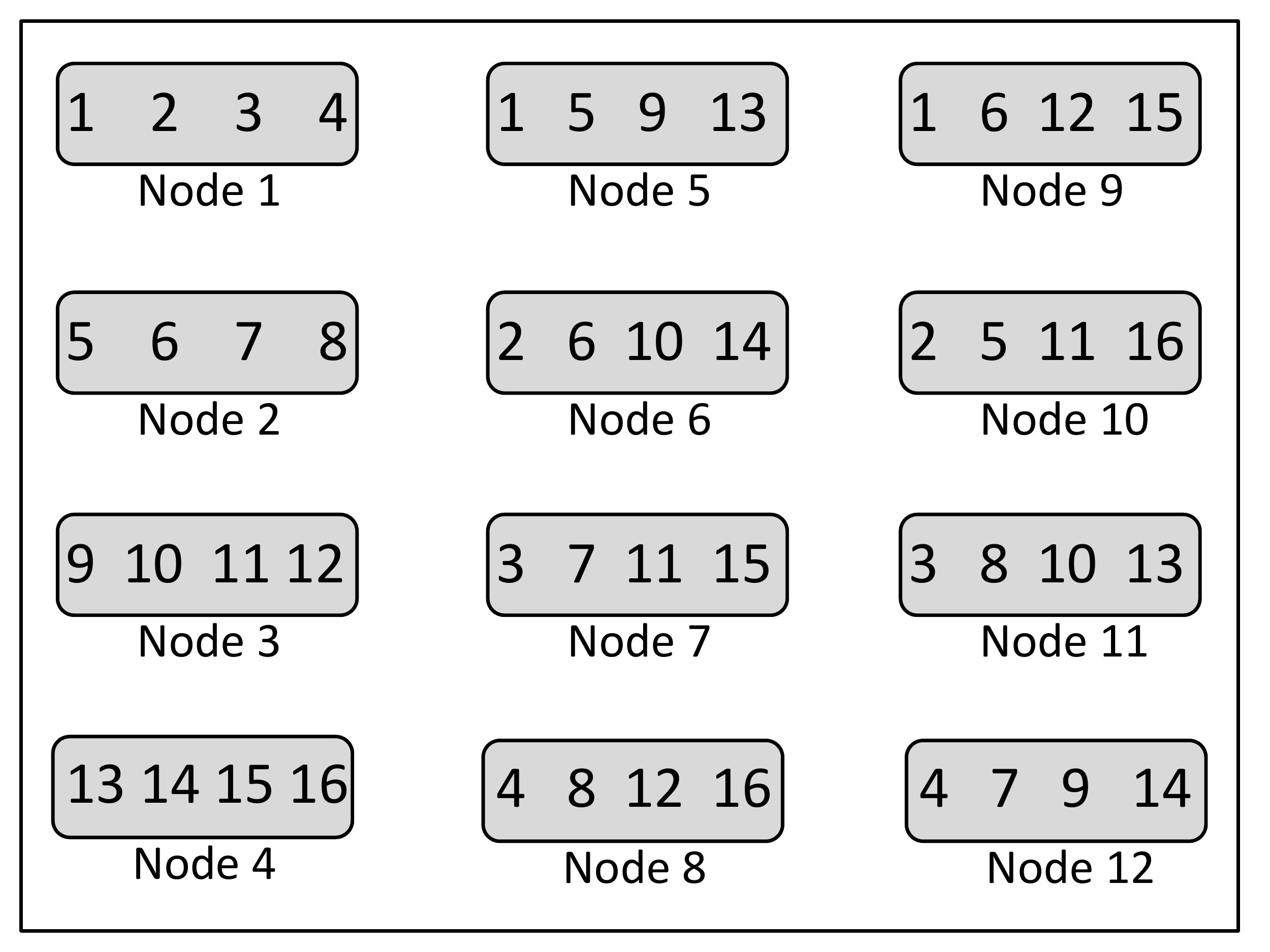}
 \caption{The FR code based on \textmd{TD}(3,4)}\label{fig:TD34}
\end{figure*}
The  values of the file size $M(k)$ for $1\leq k\leq 4$ are given in the following table.
\[\begin{tabular}{|c|c|}
    \hline
    $k$ & $M(k)$ \\  \hline\hline
    1 & 4 \\\hline
    2 & 7 \\\hline
    3 & 9 \\\hline
    4 & 11 \\
    \hline
  \end{tabular}
\]

\end{example}

In the following theorem, proved in Appendix~\ref{app:thm:alphaBound}, we find the conditions on the parameters such that the bound on the file size of an FR code $C_{\textmd{TD}}$ from Theorem~\ref{lm:TDrate} attains the recursive bound in~(\ref{eq:bound2}).
\begin{theorem}
\label{thm:alphaBound}

Let $\rho\geq 3$, $k=b\rho+t\leq \alpha$,  $0\leq t\leq \rho-1$, and $\alpha > \alpha_0(k)$, where
\begin{equation*}
\alpha_0(k)=\left\{\begin{array}{cc}
         \frac{ b^2\rho\binom{\rho-1}{2}+(\rho-2)\binom{t-1}{2}+b((\rho^2+1)(t-1)-\rho(3t-4))}{\rho-t+1}& \;\textmd{ if }k\not\equiv 0\;(\textmd{mod }\rho) \\
           b(b\rho-2)\binom{\rho-1}{2}+b-1&\;\textmd{ if } k\equiv 0\;(\textmd{mod }\rho)
         \end{array}\right..
\end{equation*}
The file size $M(k)$ of the $(\rho\alpha,\alpha,\rho)$ FR code $C_{\textmd{TD}}$ is given by
\[M(k)= k\alpha-\binom{k}{2}+\rho\binom{b}{2}
+bt
 \]
and attains the bound in~(\ref{eq:bound2}) for all $k\leq \alpha$.
\end{theorem}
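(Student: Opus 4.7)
The plan is to establish $\varphi(k) = f(k)$, where $f(k) := k\alpha - \binom{k}{2} + \rho\binom{b}{2} + bt$, by induction on $k$ using the recursive definition of the upper bound in~(\ref{eq:bound2}). Once this is done, combining it with Theorem~\ref{lm:TDrate} (which gives $M(k) \geq f(k)$) and the general inequality $M(k) \leq \varphi(k)$ from~(\ref{eq:bound2}) forces $M(k) = f(k)$ and simultaneously shows that the bound~(\ref{eq:bound2}) is attained.

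The base case $\varphi(1) = \alpha = f(1)$ is immediate. For the inductive step, assuming $\varphi(k-1) = f(k-1)$, I would first compute $f(k) - f(k-1)$ in the two subcases $t \geq 1$ and $t = 0$, which govern whether $b$ stays the same or decreases by $1$ when passing from $k$ to $k-1$; a short calculation gives respectively $\alpha - (k-1) + b$ and $\alpha - k + b$. Matching with the recursion then reduces the inductive step to verifying
\[\left\lceil \frac{\rho\, f(k-1) - (k-1)\alpha}{\rho\alpha - (k-1)} \right\rceil \;=\; \begin{cases} k - 1 - b & \text{if } t \geq 1, \\ k - b & \text{if } t = 0, \end{cases}\]
which splits into a strict lower inequality and a non-strict upper inequality on the fraction.

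The lower (strict) inequality is the binding constraint. Substituting $f(k-1)$ explicitly, using $k = b\rho + t$ and the identity $b\rho - k = -t$, one collects the coefficient of $\alpha$: it is $\rho - t + 1$ in the case $t \geq 1$ and simply $1$ in the case $t = 0$. The resulting inequality therefore takes the form $\alpha(\rho - t + 1) > (\cdot)$ or $\alpha > (\cdot)$, respectively, and expanding the numerator while tracking separately the coefficients of $b^2$, $b^1$, and $b^0$ recovers exactly the two branches of the stated formula for $\alpha_0(k)$: the $b^2$ coefficient collapses to $b^2\rho\binom{\rho-1}{2}$ (respectively $b(b\rho-2)\binom{\rho-1}{2}$), the linear coefficient to $b\bigl[(\rho^2+1)(t-1) - \rho(3t-4)\bigr]$, and the constant to $(\rho-2)\binom{t-1}{2}$ (respectively $b-1$ in the $t=0$ branch). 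The upper inequality is verified by a parallel rearrangement and holds automatically under the hypotheses $\rho \geq 3$ and $k \leq \alpha$.

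The main obstacle is this careful algebraic bookkeeping to extract the closed form of $\alpha_0(k)$ in both cases; the manipulations are elementary but require patience in tracking all cross terms. Once the inductive step is established, iterating from $k=1$ upward (with the condition $\alpha > \alpha_0(k)$ interpreted as supplying the required margin at every earlier level, so the induction hypothesis stays valid) yields $\varphi(k) = f(k)$, and combining with Theorem~\ref{lm:TDrate} completes the proof.
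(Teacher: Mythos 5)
Your proposal follows essentially the same route as the paper's proof in Appendix B: an induction on $k$ via the recursion in~(\ref{eq:bound2}), split into the cases $k\not\equiv 0\ (\textmd{mod }\rho)$ and $k\equiv 0\ (\textmd{mod }\rho)$, with the inductive step reduced to a strict lower inequality on the fraction inside the ceiling whose rearrangement yields the two branches of $\alpha_0(k)$. Your target ceiling values $k-1-b$ and $k-b$ and the coefficients $\rho-t+1$ and $1$ of $\alpha$ all match the paper's computation, so the approach is correct and essentially identical.
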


\begin{example}
We illustrate the minimum values of $\alpha$ for which the FR code obtained from a $\textmd{TD}(\rho, \alpha)$ is optimal as a consequence of Theorem~\ref{thm:alphaBound}.

\begin{center}
\begin{tabular}{|c|c|c|c|c|c|}
  \hline
  \backslashbox{k}{$\rho$} & $3$ & $4$ & $5$ & $6$  \\ \hline
  $3$  & 2 & 3 & 4 & 5 \\\hline
  $4$  & 7 & 7 & 10 & 13 \\\hline
  $5$ & 8 & 17 & 19 & 25 \\\hline
  $6$  & 10 & 22 & 36 & 41 \\\hline
  $7$  & 19& 29 & 47 & 67 \\\hline
  $8$ & 21 & 38 & 61 & 86 \\
  \hline
\end{tabular}
\end{center}
\end{example}
\vspace{.5cm}

Similarly to the case in which $\rho=2$, we continue to find the conditions when there exists an FR code $C$ with file size $M(3)=3\alpha-2$.
To have a file size greater than $3\alpha-3$, we should avoid the existence of a $3\times 3$
 submatrix $\textbf{I}'$ of $\textbf{I}(C)$  such that each row of $\textbf{I}'$ has exactly two ones (recall that the intersection between any two rows is at most one). Such a matrix $\textbf{I}'$ will be called a \emph{triangle}.

\begin{lemma}
\label{lm:k=3general}
If $n < \rho(\rho-1)\alpha-\rho(\rho-2)$ then there exists a triangle in the incidence matrix of an FR code $C$, i.e., a necessary condition for $M(3)=3\alpha-2$ is that $n\geq\rho(\rho-1)\alpha-\rho(\rho-2)$.
\end{lemma}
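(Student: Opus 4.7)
I will prove the contrapositive: if the incidence matrix of $C$ has no triangle, then $n \ge \rho(\rho-1)\alpha - \rho(\rho-2)$. Each of the $\theta = n\alpha/\rho$ symbols corresponds to a \emph{hyperedge}---the $\rho$-set of nodes that store it---and the standing assumption $|N_i \cap N_j|\le 1$ (under which a triangle is defined in the first place) translates to the clean structural fact that any two distinct hyperedges share at most one node. Fix a node $v$; its $\alpha$ hyperedges cover a set $A$ of exactly $\alpha(\rho-1)$ distinct neighbors of $v$ (nodes sharing some symbol with $v$), partitioned into $\alpha$ groups of size $\rho-1$, one per symbol of $v$.

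The heart of the argument is the claim that for every hyperedge $C$ not containing $v$, $|C\cap A|\le 1$. Suppose instead $w_1, w_2\in C\cap A$, let $s_i$ be the unique symbol of $N_v\cap N_{w_i}$, and let $t$ be the symbol of $C$. Then $s_1\ne s_2$: otherwise the $s_1$-hyperedge would be a second hyperedge containing both $w_1$ and $w_2$ alongside $C$, forcing $C$ to coincide with the $s_1$-hyperedge and hence $v\in C$. Also $s_i\ne t$ for $i=1,2$: otherwise $v$ and $w_i$ would share both $s_i$ and $t$, violating $|N_v\cap N_{w_i}|\le 1$. Hence $v, w_1, w_2$ are pairwise intersecting via three distinct symbols, producing a triangle and contradicting the hypothesis.

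A double count then finishes the proof. Each vertex of $A$ lies in exactly $\alpha-1$ hyperedges not through $v$ (one of its $\alpha$ hyperedges is through $v$), giving $|A|(\alpha-1)=\alpha(\rho-1)(\alpha-1)$ incidences between $A$ and hyperedges-not-through-$v$; by the key claim this count is at most the number of such hyperedges. Adding the $\alpha$ hyperedges through $v$ gives
\[
\theta \;\ge\; \alpha + \alpha(\rho-1)(\alpha-1)\;=\;\alpha\bigl[1+(\rho-1)(\alpha-1)\bigr],
\]
and substituting $\theta = n\alpha/\rho$ and simplifying yields $n \ge \rho(\rho-1)\alpha-\rho(\rho-2)$.

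The only delicate step is the three-distinct-symbols verification in the key claim; the rest is routine counting. As a sanity check, at $\rho=2$ the bound specializes to $n\ge 2\alpha$, which recovers the Tur\'an-type statement that an $\alpha$-regular triangle-free graph has at least $2\alpha$ vertices, consistent with the $\rho=2$ discussion in Subsection~\ref{subsec:file_size}.
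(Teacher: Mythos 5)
Your proof is correct and follows essentially the same route as the paper's: fix a node $v$, observe that its $\alpha(\rho-1)$ neighbours each carry $\alpha-1$ further symbols, and use triangle-freeness to force all of these symbols to be distinct, yielding $\theta\geq\alpha+\alpha(\rho-1)(\alpha-1)$; the paper simply asserts the distinctness step that you verify in detail via the hyperedge picture. One small slip: your justification of $s_i\neq t$ is off --- if $s_i=t$ then $v$ and $w_i$ share only the single symbol $s_i$, so nothing violates $|N_v\cap N_{w_i}|\leq 1$; the correct (and immediate) reason is that $s_i\in N_v$ while $t\notin N_v$ because $v\notin C$.
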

\begin{proof}
W.l.o.g. assume that the $\alpha$ $1$'s of the first row  of $\textbf{I}(C)$ are in the first $\alpha$ columns.
Let $S$ be the set of $(\rho-1)\alpha$ rows of $\textbf{I}(C)$ which have common ones with the first row of $\textbf{I}(C)$.
Each row of $S$ contains exactly one $1$ in the first $\alpha$ columns  and ${\alpha-1}$ $1$'s in other columns of $\textbf{I}(C)$.
To avoid a triangle in the matrix, the $1$'s in $S$ which do not appear in the first $\alpha$ columns must appear in different columns. Hence, we have
 $\theta\geq \alpha+(\alpha-1)(\rho-1)\alpha=(\rho-1)\alpha^2-(\rho-2)\alpha$. Since $\theta=\frac{n\alpha}{\rho}$ the claim of the lemma is proved.
\end{proof}


By Lemma~\ref{lm:k=3general} it follows that for $n< \rho(\rho-1)\alpha-\rho(\rho-2)$ and $k=3$ the file size of an FR code $C$ equals $M(3)=3\alpha-3$. However, the bound in~(\ref{eq:bound2}) satisfies $\varphi(3)=3\alpha-1-\left\lceil\frac{2(\rho-1)\alpha-\rho}{n-2}\right\rceil=3\alpha-2$ if and only if $n\geq 2(\rho-1)\alpha-(\rho-2)$. Thus, we have the following lemma

\begin{lemma}
The bound in~(\ref{eq:bound2}) for $k=3$ and $\rho>2$ is not tight in
the interval $n\in [2(\rho-1)\alpha-(\rho-2),\rho(\rho-1)\alpha-\rho(\rho-2))$.

\end{lemma}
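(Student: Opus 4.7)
The plan is to combine the two ingredients that the authors have already set up just before the lemma. First, by Lemma~\ref{lm:k=3general}, in the regime $n < \rho(\rho-1)\alpha - \rho(\rho-2)$ every $(n,\alpha,\rho)$ FR code must contain a triangle in its incidence matrix. A triangle corresponds to three nodes $N_{i_1},N_{i_2},N_{i_3}$ whose pairwise intersections are three distinct singletons (no point lies in all three), so inclusion–exclusion gives $|N_{i_1}\cup N_{i_2}\cup N_{i_3}|=3\alpha-3$, and hence the minimum taken over all $3$-subsets of nodes satisfies $M(3)\le 3\alpha-3$. Consequently, for every $n$ in the stated interval, $A(n,3,\alpha,\rho)\le 3\alpha-3$.

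Second, I would compute $\varphi(3)$ from the recursion in~(\ref{eq:bound2}). The recursion gives $\varphi(2)=2\alpha-\left\lceil\tfrac{(\rho-1)\alpha}{n-1}\right\rceil$; one checks that $n\ge 2(\rho-1)\alpha-(\rho-2)$ is more than enough to force $\left\lceil\tfrac{(\rho-1)\alpha}{n-1}\right\rceil=1$, so $\varphi(2)=2\alpha-1$. Feeding this into the recursion yields
\[
\varphi(3)=3\alpha-1-\left\lceil\frac{2(\rho-1)\alpha-\rho}{n-2}\right\rceil,
\]
and the ceiling equals exactly $1$ precisely when $n-2\ge 2(\rho-1)\alpha-\rho$, i.e.\ when $n\ge 2(\rho-1)\alpha-(\rho-2)$. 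Thus in the interval in question $\varphi(3)=3\alpha-2$.

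Combining the two halves: for $n\in\bigl[2(\rho-1)\alpha-(\rho-2),\,\rho(\rho-1)\alpha-\rho(\rho-2)\bigr)$ we have $A(n,3,\alpha,\rho)\le 3\alpha-3<3\alpha-2=\varphi(3)$, so the bound in~(\ref{eq:bound2}) cannot be attained. There is no real obstacle here since both inputs are already done; the only small subtlety is checking that the lower endpoint of the interval already forces $\varphi(2)=2\alpha-1$ (so that the recursion one step further up delivers the claimed $\varphi(3)$), and this reduces to the easy inequality $2(\rho-1)\alpha-(\rho-2)\ge (\rho-1)\alpha+1$, which holds whenever $\alpha\ge 1$.
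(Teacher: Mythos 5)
Your proposal is correct and follows essentially the same route as the paper: it combines Lemma~\ref{lm:k=3general} (a triangle forces $M(3)\le 3\alpha-3$ throughout the interval) with the explicit evaluation $\varphi(3)=3\alpha-1-\left\lceil\frac{2(\rho-1)\alpha-\rho}{n-2}\right\rceil=3\alpha-2$ for $n\ge 2(\rho-1)\alpha-(\rho-2)$. The only difference is that you spell out the intermediate computation of $\varphi(2)$ and the verification that the lower endpoint forces both ceilings to equal $1$, which the paper leaves implicit.
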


Next, we present a construction of an FR code $C$ based on a generalized quadrangle,  for which $M(3)=3\alpha-2$. This code attains the bound on $n$ presented in Lemma~\ref{lm:k=3general}.
The following lemma  follows directly from the definition of a generalized quadrangle.

\begin{lemma}
Let $\textmd{GQ}$ be a generalized quadrangle $\textmd{GQ}(s,t)$, where $t\geq s$, and let $C_{\textmd{GQ}}$ be the FR code based on $\textmd{GQ}$.
$C_{\textmd{GQ}}$ is an $(n=(s+1)(st+1),\alpha=t+1,\rho=s+1)$ FR code for which $M(3)=3\alpha-2$ and $M(4)=4\alpha-4$. Moreover, this code attains the bound on $n$ of Lemma~\ref{lm:k=3general}.
\end{lemma}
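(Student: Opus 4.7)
The plan is to first read off the parameters $n, \alpha, \rho$ directly from the defining properties of $\textmd{GQ}(s,t)$, and then establish both file-size claims via a uniform double-counting argument combined with two structural facts: any two blocks meet in at most one point, and the incidence graph of $\textmd{GQ}(s,t)$ has girth exactly $8$ (hence contains no $6$-cycles but does contain $8$-cycles).

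For each $k\in\{3,4\}$ and each set of $k$ chosen points, let $e_j$ denote the number of blocks meeting the chosen set in exactly $j$ points. Double counting gives
\begin{equation*}
k\alpha - \Bigl|\bigcup_{i=1}^{k} N_{p_i}\Bigr| \;=\; \sum_{j\ge 2}(j-1)\,e_j,
\end{equation*}
so $M(k)$ equals $k\alpha$ minus the maximum of the right-hand side over all $k$-subsets. The key combinatorial input is that a $6$-cycle on the chosen points would require three pairs joined by three distinct blocks, so girth $8$ implies that the auxiliary graph $H$ on the chosen points whose edges record ``pairs lying on a common block'' is triangle-free.

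For $k=3$: if $e_3\ge 1$ then a single block carries all three points and no further pair can share a block (two blocks meet in at most one point), giving value $2$; otherwise $H$ is triangle-free on three vertices, hence has at most two edges, so $e_2+2e_3\le 2$. Hence $M(3)\ge 3\alpha-2$, and equality is witnessed by three points $p_1,p_2\in B_1$, $p_3\in B_2\setminus\{p_2\}$ with $B_2$ a second block through $p_2$. For $k=4$: I would case split on $e_4$ and $e_3$. The subcases $e_4\ge 1$ or $e_3\ge 1$ each force the value to be at most $3$, using the unique-block axiom of the generalized quadrangle to control how a fourth point interacts with a block already containing three of the chosen ones. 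In the residual case $e_3=e_4=0$, the graph $H$ is triangle-free on four vertices, so Tur\'an's theorem gives $e_2\le 4$. Equality $e_2=4$ is realized by the four points of any $8$-cycle in the incidence graph, which exists since the girth is exactly $8$.

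Attaining the bound of Lemma~\ref{lm:k=3general} is a direct substitution: $\rho(\rho-1)\alpha-\rho(\rho-2)=(s+1)s(t+1)-(s+1)(s-1)=(s+1)(st+1)=n$. The main obstacle is the $k=4$ case analysis, specifically verifying via the unique-block axiom that whenever a block contains three of the four chosen points, the fourth can contribute at most one extra shared block---this is what pins $M(4)$ at exactly $4\alpha-4$ rather than merely bounding it. The triangle-free/Tur\'an step, and the parameter and bound-attainment computations, are routine.
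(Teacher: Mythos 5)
Your proof is correct. The paper itself gives no argument here --- it simply asserts that the lemma ``follows directly from the definition of a generalized quadrangle'' (and a subsequent remark points to the girth-based file-size formula of Theorem~\ref{thm:rateGirth} as the underlying reason) --- so your write-up is a faithful filling-in of exactly the intended reasoning: your identity $k\alpha-|\bigcup_i N_{p_i}|=\sum_{j\ge 2}(j-1)e_j$ is the $\rho>2$ analogue of Lemma~\ref{lm:isoperimetric}, and the triangle-free/$8$-cycle analysis is the design-theoretic counterpart of the girth argument. All the individual steps check out, including the subtler ones: in the $e_3\ge 1$, $e_4=0$ subcase the unique-block axiom does give $e_2\le 1$ (and the ``two blocks meet in at most one point'' axiom forces $e_3=1$), the four points of an $8$-cycle do realize $e_2=4$ with non-collinear diagonals, and the substitution $\rho(\rho-1)\alpha-\rho(\rho-2)=(s+1)(st+1)=n$ is exact.
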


\begin{remark} Similarly to an FR code $C_{G}$ with $\rho=2$ based on a graph $G$ with girth $g$, we can consider an FR code $C_{\textmd{GP}}$ based on a generalized $g$-gon (generalized polygon GP) for $\rho>2$. One can prove that the file size of $C_{\textmd{GP}}$ is identical to the file size of $C_{G}$ for $k\leq g+\lceil\frac{g}{2}\rceil-2$ given in Theorem~\ref{thm:rateGirth}. However, a generalized $g$-gon  is known to exist only for $g\in\{3,4,6,8\}$.
This observation also holds for a biregular bipartite graph of girth $2g$.  The existence of such graphs was considered in~\cite{AABL12,AABLL13,ABV09,ExJa08,FLSUW95}.
\end{remark}

\begin{remark}
Note that both generalized quadrangles and transversal designs are examples of \emph{partial geometries}. Codes for distributed storage systems based on partial geometries were also considered in~\cite{Hollmann}.
\end{remark}

 Note that the problem of constructing for FR codes with repetition degree greater than two, based on combinatorial designs, also can be considered in terms of \emph{expander} graphs (see e.g~\cite{GUV07}). Let $C_D$ be an FR code based on a combinatorial design $D$. If we consider the incidence graph $G_D=(L\cup R,E)$ of $D$, where $L$ corresponds to the points and $R$ corresponds to the blocks of $D$, then calculating $M(k)$ can be described by calculating the set of neighbours of any subset of size $k$ of the part $L$ of  $G_D$. In other words, for an FR code with the file size $M(k)$ it should hold that $|\Gamma(A)|\geq M(k)$ for every $A\subseteq L$ of size $k$, where $\Gamma(A)$ denotes the set of neighbours of $A$. Hence, to have an FR code with $M(k)$, we need to construct a $(k,\frac{M(k)}{k})$ expander graph, where $\frac{M(k)}{k}$ is its expansion factor~\cite{GUV07}.


\section{File Size of FR Codes and Generalized Hamming Weights}
\label{sec:genHamming}

Let $C$ be a $(\theta,k)$ linear code and $A$ be a subcode of $C$.
The \emph{support} of $A$, denoted  by $\chi(A)$, is defined by
$$\chi(A)\deff\{i:\exists(c_1,c_2,\ldots,c_{\theta})=\textbf{c}\in A,c_i\neq 0\}.
$$

The $r$th \emph{generalized Hamming weight} of a linear code $C$, denoted by  $d_r(C)$ ($d_r$ in short), is
the minimum support of any $r$-dimensional subcode of $C$, $1\leq r\leq k$, namely,
$$d_r=d_r(C)\deff \min_{A}\{|\chi(A)|: A \subseteq C, \dim (A) = r\}.
$$
Clearly, $d_r\leq d_{r+1}$ for $1\leq r\leq k-1$.
The set $\{d_1,d_2,\ldots,d_k\}$ is called the \emph{generalized Hamming
weight hierarchy} of~$C$~\cite{Wei91}.

There are a few definitions of generalized Hamming
weights for nonlinear codes~\cite{CLZ94,EtVa98,ReBe99}.
We propose now another straightforward definition for generalized
Hamming weight hierarchy for nonlinear codes. This definition is strongly
connected to the file sizes for different values of $k$ of a given FR code $C$.

Let $C$ be
a code of length $\theta$ with $n$ codewords. Assume further that
the all-zero vector is not a codeword of $C$ (if the all-zero vector
is a codeword of $C$ we omit it from the code). The $r$th generalized
Hamming weight of $C$, $d_r(C)$, will be defined as the minimum support of any
subcode of $C$ with $r$ codewords, i.e.,

$$d_r=d_r(C)\deff \min_{A}\{|\chi(A)|: A \subseteq C, |A| = r\}.
$$

Note that an $(n,\alpha,\rho)$
FR code $C$ can be represented as a binary constant weight code $C$ of length $\theta$ and weight $\alpha$.  Note
further that the minimum Hamming distance of $C$ is $2\alpha-2$.
Finally note that with these definitions we have that
$d_k=M(k)$. Therefore, by our previous discussion and
the definition of the generalized Hamming weight hierarchy
it is natural to define the \emph{file size hierarchy} of
an FR code $C$ to be the same as the generalized Hamming weight hierarchy of the related binary constant weight code $C$.

In addition to the questions discussed in the previous sections, the
definition of the file size hierarchy raises some natural questions.

\begin{enumerate}
\item Do there exist two FR codes
$C_1$ and $C_2$, with the same parameters $n,\alpha,\rho$,
 and two integers $k_1$ and $k_2$, such that $M_1(k_1)<M_2(k_1)$ and $M_1(k_2)>M_2(k_2)$,
 where $M_1(k)$ and $M_2(k)$ are the file sizes of $C_1, C_2$, respectively?

\item Given
$n,\alpha,\rho$ and a file size hierarchy $\{d_1=\alpha, d_2, \ldots, d_{\alpha}\}$, does there exist an FR code $C$ with these parameters,
which satisfies for each $k \leq \alpha$ that
$d_k=M(k)$ for every $k\leq \alpha$?
\end{enumerate}


\section{Bound on Reconstruction Degree}
\label{sec:bound_k}

In this section we consider a lower bound on the reconstruction degree $k$ for an FR code, given $M, \theta, n$, and $\alpha$.
Note, that given the value of file size $M$ it is desirable to have a reconstruction degree $k$ as small as possible, to provide the maximum possible failure resilience for the related DSSs. Hence, it is of interest to obtain FR codes which attain this bound.

\begin{lemma}
\label{lm:k-bound}
Let $C$ be an $(n,\alpha,\rho)$ FR code which stores a file of a given size $M$. The reconstruction degree $k$ of the corresponding system should satisfy
\begin{equation}
\label{eq:k-bound}
k\geq \left\lceil\frac{n\binom{M-1}{\alpha}}{\binom{\theta}{\alpha}}\right\rceil+1.
\end{equation}

\end{lemma}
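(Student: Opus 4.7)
The plan is to convert the reconstruction condition into a forbidden-coverage statement and then apply double counting.

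First I would rewrite the hypothesis. By definition of the file size, every $k$-subset of nodes must have union of size at least $M$. Taking the contrapositive, for every $S \subseteq [\theta]$ with $|S| \leq M-1$, the number of nodes $i\in[n]$ with $N_i \subseteq S$ is at most $k-1$. Specialize to the case $|S| = M-1$.

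Next I would double-count the set of incidences
\[
\mathcal{P} = \{(S,i) : S\subseteq[\theta],\ |S|=M-1,\ N_i \subseteq S\}.
\]
Counting by $i$ first: for each fixed $i$, an admissible $S$ is obtained by adjoining any $M-1-\alpha$ of the $\theta-\alpha$ symbols outside $N_i$, giving $\binom{\theta-\alpha}{M-1-\alpha}$ choices, hence $|\mathcal{P}| = n\binom{\theta-\alpha}{M-1-\alpha}$. Counting by $S$ first and using the reformulated hypothesis: $|\mathcal{P}| \leq (k-1)\binom{\theta}{M-1}$. Combining the two estimates,
\[
n\binom{\theta-\alpha}{M-1-\alpha} \;\leq\; (k-1)\binom{\theta}{M-1}.
\]

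Then I would simplify the ratio. A short factorial computation yields
\[
\frac{\binom{\theta-\alpha}{M-1-\alpha}}{\binom{\theta}{M-1}} \;=\; \frac{(M-1)(M-2)\cdots(M-\alpha)}{\theta(\theta-1)\cdots(\theta-\alpha+1)} \;=\; \frac{\binom{M-1}{\alpha}}{\binom{\theta}{\alpha}}.
\]
Substituting and rearranging gives $k - 1 \geq n\binom{M-1}{\alpha}/\binom{\theta}{\alpha}$, and since $k$ is an integer the ceiling can be inserted, yielding~(\ref{eq:k-bound}).

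The only minor obstacle is the degenerate regime $M-1 < \alpha$ (which forces $k=1$ and thus $M=\alpha$); there $\binom{M-1}{\alpha}=0$ and the bound reads $k\geq 1$, which is trivially correct, so no separate argument is needed. The real content is the reformulation step and the identification of the binomial ratio; once those are in hand, the double counting is mechanical.
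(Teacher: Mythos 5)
Your proof is correct and is essentially identical to the paper's: the paper encodes the same double counting via an $n\times\binom{\theta}{M-1}$ incidence matrix whose rows sum to $\binom{\theta-\alpha}{M-1-\alpha}$ and whose columns sum to at most $k-1$ by the reconstruction property, then rewrites the resulting inequality using the same binomial-ratio identity. Your remark on the degenerate case $M-1<\alpha$ is a small extra care the paper omits, but otherwise the arguments coincide.
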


\begin{proof}
Let  $X$ be an $n\times \binom{\theta}{M-1}$ binary matrix whose rows  are indexed by  $\{N_i\}_{i=1}^n$ of $C$  and  whose columns are indexed by all the possible $(M-1)$-subsets of $[\theta]$. The value $X_{i,j}$ is one if and only if $N_i$ is a subset of the $(M-1)$-set representing the
 $j$th column.  We count the number of ones in the matrix $X$ in two different ways, to obtain a lower bound on the value of the reconstruction degree $k$.

Since $|N_i|=\alpha$ and $N_i\subseteq [\theta]$, it follows that there are $\binom{\theta-\alpha}{M-1-\alpha}$ ways to complete an $\alpha$-subset of $[\theta]$ to an $(M-1)$-subset of $[\theta]$. Hence the number of ones in each row of $X$ is given by $\binom{\theta-\alpha}{M-1-\alpha}$. The number of ones in each column of $X$ is at most $(k-1)$, since otherwise, there exist $k$ subsets $N_{i_1},\ldots,N_{i_k}$ such that $|\cup_{s=1}^{k}N_{i_s}|\leq M-1$ which contradicts to the data reconstruction of FR codes, in other words, it is not possible to reconstruct the file from the $k$ nodes indexed by $i_1,\ldots, i_k$. Hence,
\begin{equation*}
n\binom{\theta-\alpha}{M-1-\alpha}\leq \binom{\theta}{M-1}(k-1),
\end{equation*}
which can be rewritten as
\[
\frac{n}{k-1}\leq \frac{\binom{\theta}{\alpha}}{\binom{M-1}{\alpha}},
\]
and the lemma follows.
\end{proof}

Two families of FR codes which  attain the bound in~(\ref{eq:k-bound}) are presented in the following two lemmas. The corresponding codes correct two node erasures, i.e., $n-k=2$, and the data/storage ratio is $\frac{\theta-1}{2\theta}$, i.e., almost 1/2.

\begin{lemma}
\label{lm:almost_complete_k}
Let $n>2$ be is an even integer and  let $K_n^-$ be an $(n-2)$-regular graph obtained by removing a perfect matching from the set of edges in $K_n$. Then the FR code $C_{K_n^-}$ based on the graph $K_n^-$ attains the bound in~(\ref{eq:k-bound}) for $k=n-2$.

\end{lemma}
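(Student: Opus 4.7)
The plan is to verify both sides of the bound in~(\ref{eq:k-bound}) directly, so I will first compute the file size $M(n-2)$ of $C_{K_n^-}$ using Lemma~\ref{lm:isoperimetric}, and then substitute the resulting parameters $(\theta,\alpha,M,n)$ into the inequality and check that equality is achieved for $k=n-2$.

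First I record the parameters. The graph $K_n^-$ is $(n-2)$-regular on $n$ vertices, so $\alpha=n-2$, $\rho=2$, and $\theta=\frac{n(n-2)}{2}$. To apply Lemma~\ref{lm:isoperimetric} with $k=n-2$, I must choose two vertices $u,v$ to delete so that the induced subgraph on the remaining $n-2$ vertices has the maximum number of edges. The count of deleted edges equals $\deg(u)+\deg(v)-[\{u,v\}\in E]=2(n-2)-[\{u,v\}\in E]$, so the maximum is attained precisely when $\{u,v\}$ is an edge of $K_n^-$ (equivalently, when $\{u,v\}$ is not one of the removed matching pairs). This gives
\[
\max_{G'\in G_{n-2}}|E'|=\tfrac{n(n-2)}{2}-\bigl(2(n-2)-1\bigr)=\tfrac{(n-2)(n-4)}{2}+1.
\]
Lemma~\ref{lm:isoperimetric} then yields
\[
M(n-2)=(n-2)^2-\tfrac{(n-2)(n-4)}{2}-1=\tfrac{n(n-2)}{2}-1=\theta-1.
\]

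The remaining step is the arithmetic verification of the bound in~(\ref{eq:k-bound}). Setting $M=\theta-1$, the ratio of binomial coefficients simplifies to
\[
\frac{n\binom{M-1}{\alpha}}{\binom{\theta}{\alpha}}=\frac{n(\theta-n+1)(\theta-n+2)}{\theta(\theta-1)}.
\]
Plugging in $\theta=\tfrac{n(n-2)}{2}$ and simplifying reduces this to $\frac{(n-2)(n^2-4n+2)}{n^2-2n-2}$. The plan is to sandwich this quantity between $n-4$ and $n-3$: a direct computation of the two differences shows
\[
\frac{(n-2)(n^2-4n+2)}{n^2-2n-2}-(n-3)=\frac{-(n-3)^2-1}{n^2-2n-2}<0,\qquad
\frac{(n-2)(n^2-4n+2)}{n^2-2n-2}-(n-4)=\frac{4n-12}{n^2-2n-2}\geq 0,
\]
valid for all even $n\geq 4$. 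Consequently the ceiling equals $n-3$, and the right-hand side of~(\ref{eq:k-bound}) equals $(n-3)+1=n-2$, which matches $k=n-2$ exactly.

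The main obstacle is really only the bookkeeping: picking the correct pair of vertices to delete (an adjacent pair, not a non-edge of the removed matching) and then carrying out the binomial simplification cleanly. The small case $n=4$ (where $\theta=4$, $\alpha=2$, $M=3$, and the bound gives $\lceil 4/6\rceil+1=2=n-2$) should be noted separately, since one of the inequalities above is tight there.
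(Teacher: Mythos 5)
Your proof is correct and follows essentially the same route as the paper: both establish $M(n-2)=\theta-1$ (you via Lemma~\ref{lm:isoperimetric} by deleting an adjacent pair, the paper by noting directly that any $n-2$ vertices miss at most one edge) and then reduce the ratio in~(\ref{eq:k-bound}) to $n-4+\frac{4n-12}{n^2-2n-2}$, whose ceiling is $n-3$. The only blemish is the closing aside about $n=4$: there the ratio equals $2/3$, so neither of your two inequalities is actually tight, but this does not affect the argument.
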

\begin{proof}
First, observe that a perfect matching in $K_n$ is of size $\frac{n}{2}$ and hence the number of edges in $K_n^-$  is $\theta=\binom{n}{2}-\frac{n}{2}$. Any  $n-2$ vertices of $K_n^-$ are incident with at least $\theta-1$ edges and therefore $M(n-2)=\theta-1$. To prove that $C_{K_n^-}$  attains the bound in~(\ref{eq:k-bound}) for $k=n-2$ we have to prove that
\begin{equation*}
n-3=\left\lceil\frac{n\binom{\theta-2}{\alpha}}{\binom{\theta}{\alpha}}\right\rceil
\end{equation*}
for $\alpha=n-2$.
Since
\[\frac{n\binom{\theta-2}{\alpha}}{\binom{\theta}{\alpha}}=\frac{n(\theta-\alpha)(\theta-1-\alpha)}{(\theta-1)\theta}=\frac{(n-2)((n-2)^2-2)}{n^2-2n-2}=
n-4+\frac{4n-12}{n^2-2n-2},
\]
 the statement of the lemma is proved.
\end{proof}

Similarly to Lemma~\ref{lm:almost_complete_k} one can prove the following lemma.
\begin{lemma}
\label{lm:complete_k}
 The FR code $C_{K_n}$ based on $K_n$ attains the bound in~(\ref{eq:k-bound}) for $k=n-2$.
\end{lemma}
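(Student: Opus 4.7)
The plan mirrors exactly the proof of Lemma~\ref{lm:almost_complete_k}, with the parameters adjusted for the complete graph $K_n$ rather than for $K_n^-$. Concretely, since $K_n$ is $(n-1)$-regular with $\binom{n}{2}$ edges, the FR code $C_{K_n}$ has parameters $\alpha=n-1$, $\rho=2$, and $\theta=\binom{n}{2}$.

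First I would compute the file size for $k=n-2$. Any set of $n-2$ vertices of $K_n$ is incident to every edge of $K_n$ except for the single edge joining the two omitted vertices, so
\[
M(n-2)=\binom{n}{2}-1=\theta-1.
\]
Thus $M-1=\theta-2$, and the lemma is reduced to verifying the numerical identity
\[
n-3=\left\lceil\frac{n\binom{\theta-2}{\alpha}}{\binom{\theta}{\alpha}}\right\rceil
\]
with $\alpha=n-1$ and $\theta=\binom{n}{2}$.

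Next I would simplify the ratio of binomial coefficients using the identity
\[
\frac{\binom{\theta-2}{\alpha}}{\binom{\theta}{\alpha}}=\frac{(\theta-\alpha)(\theta-\alpha-1)}{\theta(\theta-1)},
\]
and substitute $\theta-\alpha=(n-1)(n-2)/2$ and $\theta-\alpha-1=n(n-3)/2$. After cancelling the common factor $n(n-1)/2=\theta$ and using the factorisation $\theta-1=(n-2)(n+1)/2$, I expect the ratio to collapse to
\[
\frac{n\binom{\theta-2}{\alpha}}{\binom{\theta}{\alpha}}=\frac{n(n-3)}{n+1}=n-4+\frac{4}{n+1}.
\]
Since $0<\frac{4}{n+1}<1$ for $n\ge 4$, the ceiling equals $n-3$, which is exactly what is required; the small cases can be checked directly. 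The only real obstacle is making sure the algebraic simplification comes out cleanly, but this is a routine calculation and is in fact essentially the same bookkeeping as in Lemma~\ref{lm:almost_complete_k}, with $\theta=\binom{n}{2}$ in place of $\binom{n}{2}-\tfrac{n}{2}$.
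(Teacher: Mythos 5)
Your proposal is correct and follows exactly the route the paper intends: the paper gives no explicit proof of this lemma, stating only that it is proved ``similarly to Lemma~\ref{lm:almost_complete_k},'' and your computation ($M(n-2)=\theta-1$, followed by the reduction of $n\binom{\theta-2}{\alpha}/\binom{\theta}{\alpha}$ to $n-4+\tfrac{4}{n+1}$) is precisely that adaptation, with the algebra checking out.
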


\begin{remark}$C_{K_n}$ was first defined in~\cite{Rashmi09,RSKR12}).  Note that this code is also an MBR code for which $n=\alpha+1$.
\end{remark}

As we already noticed, when all the other four  parameters $M, \theta, \alpha$ and $n$  (or, equivalently, $M, \rho, \alpha$ and $n$) are fixed, it is desirable to have the smallest possible reconstruction degree $k$. However, when only three parameters $M, \rho$ and  $\alpha$ are fixed, we have a trade-off between the reconstruction degree $k$ and the storage overhead of the DSS. We illustrate this trade-off by the following example.
Suppose that we want to store a file of size $M=36$  by using an FR code
with $\alpha=8$ and $\rho=2$. Note that $n$ is not fixed.
First, for $k=5$ we consider a $(114,8,2)$ FR code based
on a projective plane $PG(2,7)$ (see Example~\ref{ex:MooreFR}). Recall that this code is the optimal code, i.e., it stores a file of the maximum size since the corresponding graph has girth 6 which is  greater than $k=5$ (see Corollary~\ref{cor:girthOptimal}), and in addition, it has the minimum possible value of $n$, as the corresponding graph is a Moore graph. However,
the length of a corresponding MDS codeword and then the total storage is 456 and hence the storage overhead
is much more than 1000$\%$. To have smaller overhead one can use
a $(10,5)$-Tur\'an graph which for $k=7$ by Theorem~\ref{trm:Turan} yields a $(10,8,2)$ FR code
for which we can take a file of size $M=37$ and encode it with
a $(40,37)$ MDS code. Hence, the total overhead is only about 10$\%$ and moreover, the field size for the FR code is required to be much smaller than in the previous code.
 This example illustrates the fact that if we are given the file size $M$,
the number of symbols $\alpha$ stored in a node, and the repetition degree $\rho$, then decreasing the reconstruction degree $k$ increases the storage overhead significantly.

\section{Fractional Repetition Batch Codes}
\label{sec:batch}

In this section we analyze additional properties of DRESS codes (FR codes for which the reconstruction degree $k$ is determined; see Section~\ref{sec:introduction}) which
allow  also load balancing between storage nodes
by establishing a connection to combinatorial batch codes. We consider a scenario when in addition to the uncoded exact repairs of failed nodes and to the recoverability of the stored file  from any set of $k$ nodes we require an additional property. Given a positive  integer~$t$, any $t$-subset of  stored symbols can be retrieved by reading at most one symbol from each node. This retrieval can be performed by $t$ different users in parallel, where each user gets a different symbol.  In other words, we propose a new type of codes for DSS, called in the sequel \emph{fractional repetition batch}  (FRB) codes, which enable uncoded efficient node repairs and load balancing which is performed by several users in parallel. An FRB code is a combination of an FR code and an uniform combinatorial batch code.

The family of codes called \emph{batch codes}  was proposed for load balancing in distributed storage.
A \emph{batch code}, introduced in~\cite{IKOS04},  stores $\theta$ (encoded) data symbols on $n$ system nodes in such a way that any batch of $t$ data symbols can be decoded by reading at most one symbol from each node.
In a $\rho$-\emph{uniform} \emph{combinatorial batch code}, proposed in~\cite{PSW08},
each node stores a subset of the data symbols and no decoding is required during retrieval of any batch of $t$ symbols. Each symbol is stored in exactly $\rho$ nodes and hence it is also called a replication based batch code. A $\rho$-uniform combinatorial batch code is denoted by $\rho-(\theta,N,t,n)$-CBC, where $N=\rho \theta$ is the  total storage over all the $n$ nodes. These codes were studied in~\cite{IKOS04,PSW08,BRR11,BaBh12,BuTu13,SiGa13}.

Next, we provide a formal definition of FRB codes. This definition is based on the definitions of a DRESS code and a uniform combinatorial batch code.
Let $\bf f$ $\in \F_q^M$ be a file of size $M$ and let $c_{\bf f}\in \F_q^{\theta}$ be a codeword of an $(\theta,M)$ MDS code which encodes the data $\bf f$.
Let $\{N_1,\ldots,N_n\}$ be a collection of $\alpha$-subsets of the set $[\theta]$.
 A $\rho-(n,M,k,\alpha,t)$ \emph{FRB code} $C$, $k\leq \alpha$, $t\leq M$, represents a system of $n$ nodes with the following properties:
\begin{enumerate}
  \item Every node $i$, $1\leq i\leq n$, stores $\alpha$ symbols of $c_{\bf f}$ indexed by $N_i$;
  \item Every symbol of $c_{\bf f}$ is stored on $\rho$ nodes;
  \item From any set of $k$ nodes it is possible to reconstruct the stored file $\bf f$, in other words, $M=\min_{|I|=k}|\cup_{i\in I}N_i|$;
  \item Any batch of $t$ symbols from $c_{\bf f}$ can be retrieved by downloading at most one symbol from each node.
\end{enumerate}
Note that the total storage over the $n$ nodes needed to store the file $\bf f$ equals to $n\alpha=\theta\rho$.

\begin{remark}
Note that while in a classical batch code any $t$ \emph{data} symbols can be retrieved, in an FRB code any batch of $t$ \emph{coded} symbols can be retrieved.  In particular, when a systematic MDS code is chosen for an FRB code, the data symbols can be easily retrieved.
\end{remark}

To present constructions of FRB codes, we need the following results on constructions of uniform combinatorial batch codes.

\begin{theorem}~\cite{PSW08}
\label{thm:batch_girth}
Let $G$  be a graph with $n$ vertices, $\theta$ edges and girth $g$. Then  the batch code  $C^B_{G}$ with nodes indexed by the vertices of $G$ and with  data symbols indexed by the edges of $G$, is a $2-(\theta, 2\theta,t,n)$-CBC with $t=2g-\lfloor g/2\rfloor-1$.
\end{theorem}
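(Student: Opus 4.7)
The plan is to reduce the retrieval condition to a graph-theoretic property and then invoke a girth argument. Since every edge is stored on its two endpoint vertices, we have $\rho=2$ and total storage $2\theta$, so the batch-code structure is fixed; the only content of the theorem is the value of $t$.

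First I would reformulate the retrieval requirement. Given any set $S$ of $t$ edges, we must assign each edge $e\in S$ to one of its endpoints so that no vertex is assigned more than one edge. This is exactly a system of distinct representatives for $S$ in the natural bipartite edges-vs-vertices incidence, so by Hall's theorem it exists if and only if every subset $S'\subseteq S$ satisfies $|V(S')|\geq|S'|$, where $V(S')$ denotes the set of vertices incident to some edge of $S'$. A standard observation is that a graph $H$ satisfies $|V(H')|\geq|E(H')|$ for every subgraph $H'$ precisely when $H$ is a \emph{pseudoforest}, i.e.\ every connected component of $H$ contains at most one cycle. Thus retrievability of $S$ is equivalent to $S$ spanning a pseudoforest.

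Next I would translate the quantitative question: we need to show that every subgraph $H\subseteq G$ with $|E(H)|\leq t$ is a pseudoforest, equivalently, that the minimum number of edges in a subgraph of $G$ that contains two (distinct) cycles is at least $t+1=2g-\lfloor g/2\rfloor$. Here the girth hypothesis enters: every cycle in $G$ has length at least $g$. I would then enumerate the ways two cycles can coexist in a minimal non-pseudoforest:
\begin{itemize}
\item two vertex-disjoint cycles: at least $2g$ edges;
\item two cycles sharing a single vertex: at least $2g$ edges;
\item two cycles sharing a common path (a theta subgraph consisting of two vertices joined by three internally disjoint paths of lengths $a,b,c$): total $a+b+c$ edges, subject to $a+b,\ a+c,\ b+c\geq g$;
\item two cycles connected by an extra path: at least $2g+1$ edges.
\end{itemize}

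The main (only nontrivial) step is the theta case. The linear program of minimizing $a+b+c$ with integer $a,b,c\geq 1$ and each pairwise sum $\geq g$ is solved at $a=b=c=\lceil g/2\rceil$ when $g$ is even, and at two variables equal to $\lceil g/2\rceil$ and the third equal to $\lfloor g/2\rfloor$ when $g$ is odd; in either case the minimum value is $3\lceil g/2\rceil - [g \text{ odd}] = 2g-\lfloor g/2\rfloor$. Comparing with the other cases ($2g$ and $2g+1$) and noting $2g-\lfloor g/2\rfloor\leq 2g$, the theta configuration is the binding one. Hence every subgraph of $G$ on at most $t=2g-\lfloor g/2\rfloor-1$ edges is a pseudoforest, Hall's condition holds, the required SDR exists, and $C^B_G$ is a $2$-$(\theta,2\theta,t,n)$-CBC as claimed.
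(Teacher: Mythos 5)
The paper does not prove this statement at all---it is imported verbatim from~\cite{PSW08}---so there is no internal proof to compare against; your argument stands as a self-contained reconstruction, and it is correct. The reduction to a system of distinct representatives via Hall's theorem, the characterization of the Hall condition as ``every component has at most one cycle'' (pseudoforest), and the minimization over the three two-cycle configurations are all sound; the binding theta-graph bound $a+b+c\geq\lceil 3g/2\rceil=2g-\lfloor g/2\rfloor$ follows immediately from summing the three pairwise constraints. This is essentially the same combinatorial core that the paper \emph{does} use in its proof of Theorem~\ref{thm:rateGirth} (the system $x+z\geq g$, $y+z\geq g$, $x+y+2\geq g$ for two intersecting cycles), so your route is consistent with the authors' toolkit even though they chose to cite rather than prove the batch-code statement. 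One small imprecision: failing the Hall condition is \emph{not} equivalent to ``containing two distinct cycles''---two vertex-disjoint cycles in different components form a pseudoforest and do admit an SDR---so the correct reformulation is that no connected subgraph with at most $t$ edges has cycle rank two. Since your case analysis assigns the disjoint-cycles configuration a count of $2g\geq 2g-\lfloor g/2\rfloor$ anyway, this over-inclusion is harmless and the bound you derive is unaffected.
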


\begin{theorem}~\cite{SiGa13}
\label{thm:batch_TD}
Let TD be a resolvable transversal design $\textmd{TD}(q-1,q)$, for a prime power $q$. Then the  batch code  $C^B_{\textmd{TD}}$ with  nodes indexed by  points and data symbols indexed by blocks of TD, is a $(q-1)-(q^2,q^3-q^2,q^2-q-1,q^2-q)$-CBC.
\end{theorem}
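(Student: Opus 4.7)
The plan is to apply Hall's marriage theorem to the bipartite incidence graph $G_I(\textmd{TD})$ between the points and the blocks of the transversal design. Retrieving a batch of $t$ data symbols amounts to finding a system of distinct representatives assigning each block in the batch a distinct point contained in it; by Hall's theorem this succeeds whenever every sub-collection $S$ of blocks in the batch satisfies $|N(S)| \geq |S|$, where $N(S) \subseteq \cP$ denotes the set of points covered by $S$. Thus the theorem reduces to showing that any $S \subseteq \cB$ with $|N(S)| < |S|$ must satisfy $|S| \geq q^2 - q$.

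For the tightness direction (showing $t$ cannot exceed $q^2-q-1$), pick any point $v \in \cP$ and let $S = \{B \in \cB : v \notin B\}$. By Lemma~\ref{lm:TDparameters}, $v$ lies in exactly $q$ blocks, so $|S| = q^2-q$, while $N(S) \subseteq \cP \setminus \{v\}$ gives $|N(S)| \leq q^2-q-1 < |S|$. This exhibits a batch of size $q^2-q$ that fails Hall's condition, so the batch parameter $t$ cannot exceed $q^2-q-1$.

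For the main direction, set $U = \cP \setminus N(S)$ and $u = |U|$, so a Hall violation reads $|S| \geq q^2-q-u+1$. Every block in $S$ is disjoint from $U$. By resolvability, the blocks partition into $q$ parallel classes $\cB_1,\dots,\cB_q$, each of which is itself a partition of $\cP$ into $q$ disjoint blocks of size $q-1$. Within any class $\cB_i$, at least $\lceil u/(q-1)\rceil$ blocks meet $U$, so $\cB_i$ contributes at most $q-\lceil u/(q-1)\rceil$ blocks to $S$, giving $|S| \leq q\bigl(q-\lceil u/(q-1)\rceil\bigr)$. For $u=1$ this forces $|S| \leq q^2-q$, which combined with the Hall-violation lower bound yields $|S| = q^2-q$, as required. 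For $u \geq 2$ the crude parallel-class count has to be refined by exploiting the TD property that any two points from different groups are contained in a unique common block, together with inclusion-exclusion on ``blocks meeting $U$.''

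The main obstacle is precisely the $u \geq 2$ case: one must show that simultaneously avoiding $u \geq 2$ points and having $|S| < q^2-q$ over-constrains the resolvable TD. My approach would split on how the $u$ points distribute among the $q-1$ groups. If all lie in one group, the number of blocks avoiding $U$ drops to $(q-u)q < q^2-q-u+1$ for $u\geq 2$ and $q \geq 2$, ruling out a violation. If $U$ meets two or more groups, the unique-common-block property lets one overcount the blocks incident to $U$ and then correct by inclusion-exclusion, again yielding a contradiction to $|S| \geq q^2-q-u+1$. Equivalently, one can invoke the equivalence of resolvable $\textmd{TD}(q-1,q)$'s with $q-2$ mutually orthogonal Latin squares of order $q$ (i.e., a net/affine-plane fragment) and use the Hall-type combinatorial properties of MOLS to dispose of the residual cases, completing the proof.
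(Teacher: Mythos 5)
First, a point of comparison: the paper does not prove Theorem~\ref{thm:batch_TD} at all --- it is quoted from~\cite{SiGa13} and used as a black box in Theorem~\ref{thm:FRB from TD} --- so there is no internal proof to measure your argument against. Judged on its own, your framework is the standard and correct one for combinatorial batch codes: retrieving a batch is exactly finding a system of distinct representatives, so Hall's theorem applies; the family of $q^2-q$ blocks avoiding a fixed point is a genuine Hall violator (forcing $t\leq q^2-q-1$); and both the reduction (every Hall violator must have size at least $q^2-q$) and the two cases you do treat ($u\leq 1$, and $U$ contained in a single group) are handled correctly.

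The gap is the case you explicitly defer, $u=|U|\geq 2$ with $U$ meeting at least two groups, which is the heart of the theorem. Neither of the two tools you name suffices, and their "combination'' is not carried out. The parallel-class count gives at most $q\bigl(q-\lceil u/(q-1)\rceil\bigr)$ blocks disjoint from $U$, which for every $u\leq q-1$ equals $q^2-q$ and so never reaches the required bound $q^2-q-u$; the all-pairs inclusion--exclusion bound (at least $qu-\binom{u}{2}$ blocks meeting $U$) degrades and eventually becomes vacuous once $u$ grows past roughly $2q$. What actually closes the case is to apply your "unique common block'' observation to exactly \emph{two} groups: since every pair consisting of one point from $G_i$ and one from $G_j$ lies in exactly one block, the $q^2$ blocks are in bijection with $G_i\times G_j$, so the number of blocks disjoint from $U$ is at most $(q-u_i)(q-u_j)$, where $u_i=|U\cap G_i|$. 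Taking the two groups with the largest $u_i$ and checking the elementary inequality $(q-u_1)(q-u_2)\leq q^2-q-u$ for all distributions with $u\geq 2$ contradicts the Hall-violation lower bound $|S|\geq q^2-q-u+1$. Until that (or an equivalent) computation is actually performed, what you have is a correct strategy and a proof outline, not a proof.
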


By applying  Theorems~\ref{thm:batch_girth}, Theorem~\ref{thm:batch_TD} together with Corollary~\ref{thm:TD(2,a)}, Theorem~\ref{thm:rateGirth}, and Theorem~\ref{lm:TDrate} we obtain the following result.
\begin{theorem}
\label{thm:FRB from TD}
$~$
\begin{enumerate}
  \item Let $K_{\alpha,\alpha}$ be a complete bipartite graph with $\alpha>2$. Then $C_{K_{\alpha,\alpha}}$ is a $2-(2\alpha, M,k,\alpha,5)$ FRB code with
  $M=M(k)=k\alpha-\left\lfloor\frac{k^2}{4}\right\rfloor$.
  \item Let $G$ be an $\alpha$-regular graph on $n$ vertices with girth $g$. Then $C_G$ is a $2-(n, M, k,\alpha,2g-\lfloor g/2\rfloor-1)$ FRB code with
  \[M=M(k)=\left\{\begin{array}{cc}
         k\alpha -k+1 \;&\textmd{ if } k\leq g-1 \\
           k \alpha-k \;&\textmd{ if }g\leq k\leq g+\lceil\frac{g}{2}\rceil-2.
         \end{array}\right.
\]
  \item Let TD be a resolvable transversal design $\textmd{TD}(\alpha-1,\alpha)$,  for a prime power $\alpha$. Then $C_{\textmd{TD}}$ is an $(\alpha-1)-(\alpha^2-\alpha, M,k,\alpha,\alpha^2-\alpha-1)$ FRB code with
  $M\geq k\alpha -\binom{k}{2}+(\alpha-1)\binom{x}{2}+xy$, where $x,y$ are nonnegative integers which satisfy $k=x(\alpha-1)+y$, $y\leq \alpha-2$.
\end{enumerate}
\end{theorem}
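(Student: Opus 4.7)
The plan is to observe that each of the three FRB codes is built on the same underlying combinatorial object (a graph in parts (1)--(2) and a resolvable transversal design in part (3)) that already carries \emph{both} an FR-code structure and a uniform combinatorial batch-code structure. The FRB definition decouples into two independent requirements: properties (1)--(3) concern only the FR structure and follow from the earlier file-size results, while property (4) is exactly the batch-retrieval property and follows from Theorems~\ref{thm:batch_girth} and~\ref{thm:batch_TD}. Since in both the FR and the batch constructions node $i$ stores exactly those symbols $j$ with $(\textbf{I}(\cdot))_{i,j}=1$, the two views refer to the same placement of symbols on nodes, so all four FRB properties hold simultaneously and the result reduces to reading off parameters from the cited theorems.

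Concretely, for part (1) I would note that $K_{\alpha,\alpha}$ is an $\alpha$-regular graph of girth $g=4$, so Theorem~\ref{thm:batch_girth} gives the batch parameter $t=2g-\lfloor g/2\rfloor-1=5$ while Corollary~\ref{thm:TD(2,a)} gives $M(k)=k\alpha-\lfloor k^2/4\rfloor$. For part (2), Theorem~\ref{thm:batch_girth} applied to the hypothesised $\alpha$-regular graph $G$ of girth $g$ yields $t=2g-\lfloor g/2\rfloor-1$, and Theorem~\ref{thm:rateGirth} supplies $M(k)$ in the two regimes $k\le g-1$ and $g\le k\le g+\lceil g/2\rceil-2$. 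For part (3), I set $\rho=\alpha-1$ and view $\textmd{TD}(\alpha-1,\alpha)$ with its $\alpha^2-\alpha$ points as the node index set and its $\alpha^2$ blocks as the symbol index set; Theorem~\ref{thm:batch_TD} then delivers the batch parameter $t=\alpha^2-\alpha-1$, while Theorem~\ref{lm:TDrate} with $k=x(\alpha-1)+y$, $0\le y\le\alpha-2$, yields $M(k)\ge k\alpha-\binom{k}{2}+(\alpha-1)\binom{x}{2}+xy$.

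I do not foresee any substantive obstacle: the whole argument consists of identifying the common combinatorial substrate of the two viewpoints and reading off parameters. The one point worth articulating is that the batch-code retrieval property is a purely combinatorial statement about the node--symbol incidence and is therefore oblivious to whether the $\theta$ symbols actually placed on the nodes are the uncoded data of the original batch-code setting or the MDS-encoded symbols of the FRB setting; the remark just before Theorem~\ref{thm:FRB from TD} makes exactly this distinction explicit, which is why Theorems~\ref{thm:batch_girth} and~\ref{thm:batch_TD} apply verbatim to any batch of $t$ coded symbols in the FRB code.
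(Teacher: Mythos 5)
Your proposal is correct and matches the paper's own argument, which likewise obtains the theorem by directly combining Theorems~\ref{thm:batch_girth} and~\ref{thm:batch_TD} with Corollary~\ref{thm:TD(2,a)}, Theorem~\ref{thm:rateGirth}, and Theorem~\ref{lm:TDrate}. Your added observation that the batch-retrieval property depends only on the node--symbol incidence structure (and hence applies verbatim to the MDS-encoded symbols) is exactly the implicit justification the paper relies on.
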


\begin{example}
{~}
\begin{itemize}
\item Consider the FRB code $C_{K_{3,3}}$ based on $K_{3,3}$ (see also Example~\ref{ex:bipartite} for an FR code based on $K_{3,3}$).
By Theorem~\ref{thm:FRB from TD}, for $k=3$, $C_{K_{3,3}}$ is a $2-(6, 7,3,3,5)$ FRB code.

\item Consider the FRB code $C_{\textmd{TD}}$  based on the resolvable transversal design $\textmd{TD}=\textmd{TD}(3,4)$ (see also Example~\ref{ex:TD_34} for an FR code based on $\textmd{TD}(3,4)$). By Theorem~\ref{thm:FRB from TD}, for $k=4$, $C_{\textmd{TD}}$ is a $3-(12, 11,4,4,11)$ FRB code, which stores a file of size $11$ and allows for retrieval of  any (coded) $11$ symbols, by reading at most one symbol from a node. In particular, when using a systematic MDS code, $C_{\textmd{TD}}$  provides load balancing in data reconstruction.
\end{itemize}

\end{example}

For the rest of this section we consider FRB codes obtained from affine planes. Uniform combinatorial batch codes based on affine planes were considered in~\cite{SiGa13}.

 An \emph{affine plane} of order $s$, denoted by $A(s)$, is  a design $(\cP,\mathcal{B})$, where $\cP$ is a set of $|\cP|=s^2$ \emph{points}, $\mathcal{B}$ is a a collection of $s$-subsets (\emph{blocks}) of $\cP$ of size $|\mathcal{B}|=s(s+1)$, such that each pair of points in $\cP$ occur together in exactly one block of~$\mathcal{B}$.
 An affine plane is called \emph{resolvable}, if the \emph{set} $\mathcal{B}$ can be partitioned into $s+1$ sets of size~$s$, called parallel classes, such that every element of $\cP$ is contained in exactly one block of each class. It is well known~\cite{Anderson} that if $q$ is a prime power, then there exists a resolvable affine plane $A(q)$.

\begin{theorem}
\label{thm:affine}
Let  $A(q)$ be an affine plane, for a prime power $q$, and let $C_{A(q)}$ be an FRB based on $A(q)$, i.e., $\textbf{I}(C_{A(q)})=\textbf{I}(A(q))$.
Then $C_{A(q)}$  is a $q-(q^2, k(q+1)-\binom{k}{2},k,q+1,q^2)$ FRB code.
\end{theorem}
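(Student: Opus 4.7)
The plan is to verify the three FRB-code requirements for $C_{A(q)}$ in sequence: the basic storage parameters $(n,\alpha,\rho)$, the file size $M(k)$, and the batch parameter $t=q^2$.

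\textbf{Basic parameters.} Directly from the affine plane definition, $A(q)$ has $|\cP|=q^2$ points, $|\cB|=q(q+1)$ blocks, each block contains $q$ points, and each point lies on $q+1$ blocks. With nodes indexed by points and symbols by blocks (so $\textbf{I}(C_{A(q)})=\textbf{I}(A(q))$), this gives $n=q^2$, $\alpha=q+1$, $\rho=q$, and $n\alpha=\rho\theta$.

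\textbf{File size.} Fix any $k$-subset $I$ of nodes, corresponding to $k$ points of $A(q)$, and for each block $B$ let $j_B$ be the number of chosen points lying on $B$. The defining properties of $A(q)$ give
\[
\sum_B j_B \;=\; k(q+1), \qquad \sum_B \binom{j_B}{2}\;=\;\binom{k}{2},
\]
since every point lies on $q+1$ blocks and every pair of chosen points lies on exactly one block. Now $\big|\bigcup_{i\in I}N_i\big|=\#\{B:j_B\geq 1\}=k(q+1)-\sum_B(j_B-1)^+$, and the pointwise inequality $(j-1)^+\leq\binom{j}{2}$ (equality iff $j\leq 2$) yields $\big|\bigcup_{i\in I}N_i\big|\geq k(q+1)-\binom{k}{2}$, with equality iff no three chosen points are collinear. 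Since $A(q)$ contains a $(q+1)$-arc (take a non-degenerate conic in $PG(2,q)$ whose projective closure misses the line at infinity, a standard construction available for every prime power $q$), any $k\leq q+1=\alpha$ points of such an arc realize equality, giving $M(k)=k(q+1)-\binom{k}{2}$.

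\textbf{Batch parameter.} We must show that every $q^2$-subset $T$ of the $q(q+1)$ symbols admits a system of distinct representatives among the $q^2$ nodes. By Hall's marriage theorem it suffices that $|\Gamma(S)|\geq |S|$ for every $S\subseteq T$, where $\Gamma(S)=\bigcup_{B\in S}B$. Assume for contradiction $s:=|S|>p:=|\Gamma(S)|$, so every block of $S$ lies entirely inside a point set $P$ of size $p$. Two complementary counting bounds on the number of blocks contained in $P$ handle all values of $p$:
\begin{itemize}
\item[(i)] \emph{Pair-counting.} Each block of $S$ uses $\binom{q}{2}$ pairs of $P$, and these pair-sets are disjoint across blocks (two points determine a unique block), so $s\leq\binom{p}{2}/\binom{q}{2}$, which forces $s\leq p$ whenever $p\leq q^2-q+1$.
\item[(ii)] \emph{Bonferroni.} For $\bar p:=q^2-p\in\{1,\ldots,q-2\}$, the number of blocks meeting $\cP\setminus P$ is at least $\bar p(q+1)-\binom{\bar p}{2}$ (using $|A_x\cap A_y|=1$ for the block-sets $A_x=\{B:x\in B\}$), so $s\leq q(q+1)-\bar p(q+1)+\binom{\bar p}{2}$. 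The required inequality $s\leq p$ reduces algebraically to $(\bar p-1)(\bar p-2q)\leq 0$, which holds throughout $\bar p\in[1,q-2]$. The case $\bar p=0$ is trivial since $s\leq q^2=p$.
\end{itemize}
Together the two ranges cover all $p\in\{0,1,\ldots,q^2\}$ and contradict $s>p$, so Hall's condition holds and any $q^2$ symbols can be retrieved.

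\textbf{Main obstacle.} The delicate step is the batch analysis: no single counting bound covers every $p$, so the pair-counting and Bonferroni arguments must be stitched together across the boundary $p=q^2-q+1$. The $(q+1)$-arc invoked for file-size equality is classical but requires a conic choice depending on the parity of $q$.
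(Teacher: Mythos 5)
Your proposal is correct, but it is substantially more self-contained than the paper's own proof, which consists of two sentences: the batch parameters $\rho,n,\alpha,t$ are simply delegated to the cited reference on combinatorial batch codes from affine planes, and the file size is dispatched by noting that any two points lie on exactly one block, so any two rows of $\textbf{I}(C_{A(q)})$ intersect, whence $M(k)=k(q+1)-\binom{k}{2}$. Your file-size argument is the same Bonferroni/pair-counting idea, but you go further by verifying tightness of the bound through the existence of a $(q+1)$-arc (a conic, or hyperoval for even $q$, whose points all lie off the line at infinity); the paper implicitly assumes that the minimum over $k$-subsets is attained and does not mention arcs at all, so your version closes a small gap. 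Your treatment of the batch parameter $t=q^2$ is where the routes genuinely diverge: you reprove the Hall-condition verification from scratch, stitching together the pair-counting bound $s\leq\binom{p}{2}/\binom{q}{2}$ for $p\leq q^2-q+1$ with the complementary Bonferroni bound for $\bar p=q^2-p\leq q-2$ (and the algebra $(\bar p-1)(\bar p-2q)\leq 0$ checks out), whereas the paper buys this entirely by citation. What the paper's approach buys is brevity and reuse of an existing result; what yours buys is a complete, verifiable argument within the proof itself. One cosmetic point: it is worth remarking that $t=q^2$ is trivially the largest possible batch size here, since there are only $q^2$ nodes and each contributes at most one symbol.
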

\begin{proof}
The parameters $\rho, n,\alpha$ and $t$ follow from the properties of the batch code based on $A(q)$ (see~\cite{SiGa13} for details).
Since any two points of $A(q)$ are contained in exactly one block and hence any two rows of $\textbf{I}(C_{A(q)})$ intersect, it follows that the file size is $k(q+1)-\binom{k}{2}$.
\end{proof}

\begin{remark}
The structure of an incidence matrix of an FRB code and some other constructions are considered in~\cite{Sil14}.
\end{remark}


\section{Conclusion}
\label{sec:conclusion}

We considered the problem of constructing  $(n,\alpha,\rho)$ FR codes which
attain the upper bound on the file size.
We presented constructions of FR codes based on
regular graphs, namely, Tur\'an graphs and
graphs with a given girth; and constructions of FR codes based  on combinatorial designs, namely, transversal designs and generalized polygons.
The problems of constructing  optimal FR codes and FR codes with a given file size raise  interesting  questions in graph theory (see Section~\ref{sec:rho2}
and Section~\ref{sec:rho>2}).
We  defined the file size hierarchy of FR codes, which is a possible definition of generalized Hamming weight hierarchy for nonlinear codes. For this we presented FR codes as  binary constant weight codes.
In addition, we derived a lower bound on the reconstruction degree for  FR codes and presented FR codes which attain this bound.
Finally, based on a connection between FR codes and batch codes, we proposed a new family of codes for DSS, namely fractional repetition batch codes, which have the properties of batch codes and FR codes simultaneously.  These are the first codes for DSS which allow for uncoded efficient repairs and load balancing. We
presented  examples of  constructions for FRB codes, based on combinatorial designs, complete bipartite graphs, and graphs with large girth.

In general, given four of the five parameters of FR codes, namely,
the number $n$ of nodes, the number $k$ of nodes
needed to reconstruct the whole stored file,
the number $\alpha$ of stored symbols in a node,
the number $\rho$ of repetitions of a symbol in the code,
and the size $M$ of the stored file $\textbf{f}$, one can ask
what are the possible values of the fifth parameter.
For this we  define the following five functions.

\begin{enumerate}
\item Let $n(k,\alpha,\rho,M)$ be the minimum number $n$
of nodes  for an $(n,\alpha,\rho)$ FR code which stores a file of size~$M$ for a given reconstruction degree $k$.

\item Let $k(n,\alpha,\rho,M)$ be the minimum number $k$
of nodes  from which the whole stored file of size $M$, of an $(n,\alpha,\rho)$ FR code, can be reconstructed.

\item Let $\alpha(n,k,\rho,M)$ be the minimum number $\alpha$ of symbols stored in a node of
 an $(n,\alpha,\rho)$ FR code which stores a file of size~$M$  for a given reconstruction degree $k$.

\item Let $\rho(n,k,\alpha,M)$ be the minimum number $\rho$
of repetitions of a symbol in
an $(n,\alpha,\rho)$ FR code which stores a file of size~$M$  for a given reconstruction degree $k$.

\item Let $M(n,k,\alpha, \rho)$ be the maximum size of a stored file
in an $(n,\alpha,\rho)$ FR code,  for a given reconstruction degree $k$.
\end{enumerate}

This formulation is very similar to classical coding theory, where for example three functions are defined for the trade-off between the length of the code, its size, and its minimum distance.
In this paper, we considered the values of three functions out of the five, namely, $n(k,\alpha,\rho,M)$, $k(n,\alpha,\rho,M)$ and $M(n,k,\alpha, \rho)$.


%
%
%
\appendices

\section{}
\label{app:triangle}
\textbf{Constructions for FR codes with $\bf{\emph{M}(3)=3\alpha-2}$:}
Let $\textbf{I}(K_{\alpha+i,\alpha+i})$, $i\geq 0$, be the $2(\alpha+i)\times (\alpha+i)^2$ incidence matrix of the complete bipartite graph $K_{\alpha+i,\alpha+i}$. Note that it is also the incidence matrix of a resolvable transversal design $\textmd{TD}(2,\alpha+i)$. Based on the resolvability of the design, $\textbf{I}(K_{\alpha+i,\alpha+i})$ can be written in a blocks form, i.e., $\textbf{I}(K_{\alpha+i,\alpha+i})$ is a $2\times (\alpha+i)$ blocks matrix, where each block is a permutation matrix of size $(\alpha+i)\times (\alpha+i)$. Each such permutation matrix block will be called a $p$\emph{-block}.
W.l.o.g. we assume that the first two $p$-blocks  which correspond to the first $\alpha+i$ columns of the matrix  are identity matrices.
Let $\textbf{I}_{\alpha,i}^{\textmd{even}}$ be a matrix obtained from $\textbf{I}(K_{\alpha+i,\alpha+i})$ by removing $i(\alpha+i)$ columns  which correspond to $2i$ $p$-blocks (w.l.o.g. we assume that we removed the leftmost columns).  Note that there are exactly $\alpha$ ones in each row of $\textbf{I}_{\alpha,i}^{\textmd{even}}$. Let $C_{\alpha,i}^{\textmd{even}}$ be an FR code obtained from the graph $G_{\alpha,i}^{\textmd{even}}$, whose incidence matrix is $\textbf{I}_{\alpha,i}^{\textmd{even}}$. It is easy to verify that $C_{\alpha,i}^{\textmd{even}}$ is a $(2\alpha+2i, \alpha, 2)$ code whose  file size is $M(3)=3\alpha-2$. Note that
the  data/storage  ratio $\frac{M(3)}{n\alpha}=\frac{3\alpha-2}{(2\alpha+2i)\alpha}$ decreases when $i$ increases.

For odd $n$, $n\geq \frac{5}{2}\alpha$, and even $\alpha$ we construct an $(n,\alpha, 2)$ FR code $C$ with $M(3)=3\alpha-2$ as follows.
We distinguish between two cases, odd $\alpha/2$ and even $\alpha/2$.

For odd $\alpha/2$ let $G=(V,E)$ be a graph whose vertex set is given by $V=\{X,Y_0,Y_1,Z_0,Z_1\}$, where $|X|=|Y_0|=|Y_1|=\alpha/2$, $|Z_0|=|Z_1|=\alpha/2+j$, for any given integer $j\geq 0$. The edges in $G$ are given by
\begin{enumerate}
  \item $E_1=\{\{v,u\}: v\in X,\;u\in Y_i, i=0,1\}$;
  \item $E_{2+\ell}=\{\{v_i,u_j\}: v_i\in Y_{\ell},\;u_j\in Z_{\ell},  i=1,2,\ldots, \frac{\alpha}{2}, j=((i-1)\frac{\alpha}{2}+1)\textmd{ mod }{|Z_{\ell}|},\ldots, i\frac{\alpha}{2}\textmd{ mod }{|Z_{\ell}|}\}$, $\ell=0,1$, such that the degree of a vertex $x$  in the induced  subgraph $(Y_{\ell}\cup Z_{\ell}, E_{2+\ell})$, is given by
      $$\textmd{deg}(x)=\left\{\begin{array}{cc}
                        \frac{\alpha}{2} & x\in Y_{\ell} \\
                        t:=\left\lfloor\frac{\alpha^2/4}{|Z_1|}\right\rfloor \textmd{ or } t+1 & x\in Z_{\ell}
                      \end{array}
                      \right.
      $$
  \item $E_4=\{\{v_i,u_i\}:v_i\in Z_0, u_i\in Z_1, \textmd{deg}(v_i)=\textmd{deg}(u_i)=t \textmd{ in } (Y_0\cup Z_0, E_2) \textmd{ and } (Y_1\cup Z_1, E_3)$
  \item $E_5=\{\{v,u\}:v\in Z_0, u\in Z_1, s.t. (Z_0\cup Z_1, E_5)=G^{\textmd{even}}_{(\alpha-t-1),(|Z_{\ell}|-\alpha+t+1)}\}$.
\end{enumerate}
The schematic graph $G$ is shown on Fig.~\ref{fig:graph}. Clearly, the degree of every vertex in $G$ is $\alpha$ and there are $\frac{5}{2}\alpha+2j$
vertices in $G$.

\begin{figure*}[t]
\centering
\includegraphics[width=0.5\textwidth]{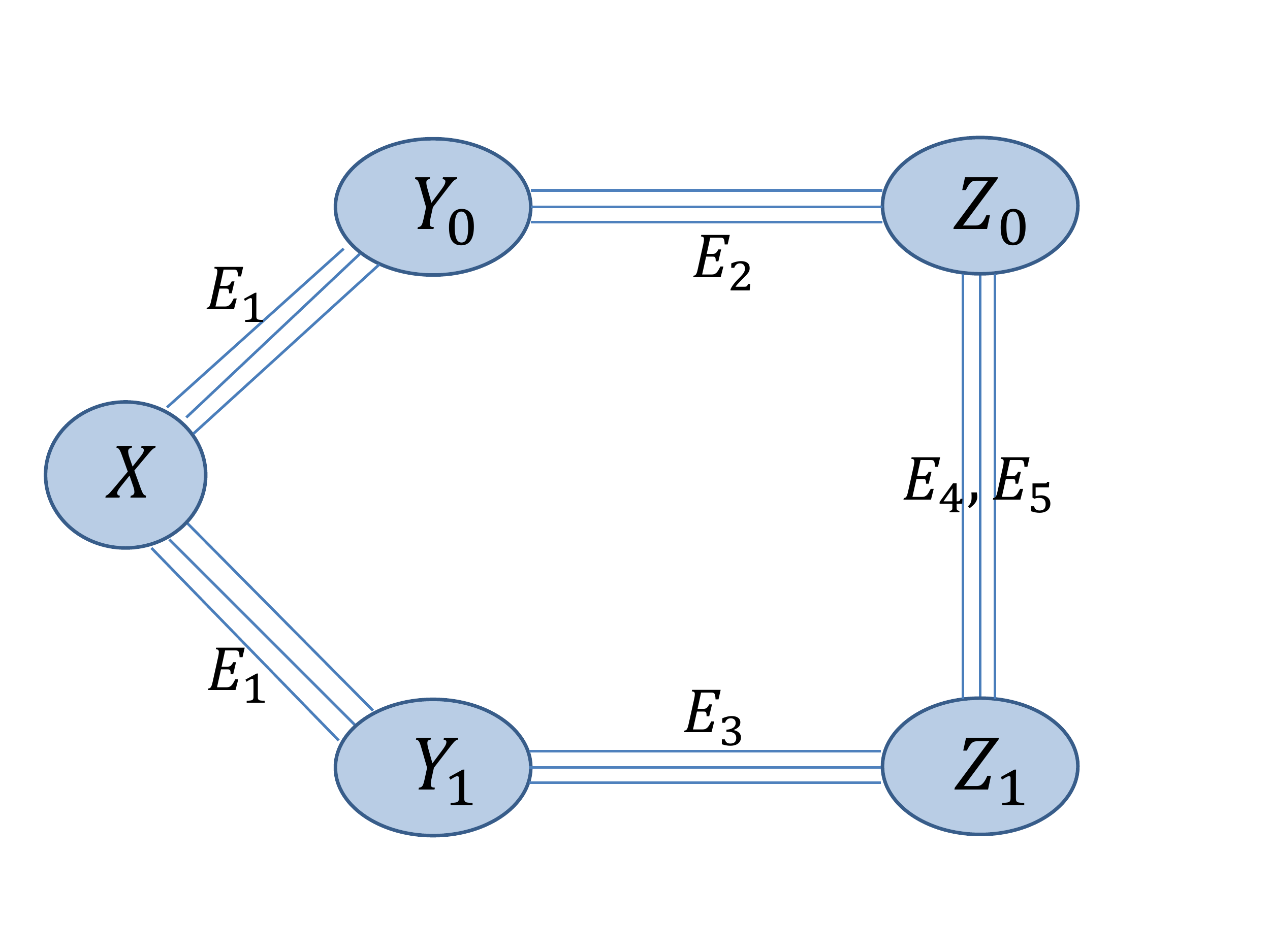}
\caption{A schematic structure of a graph without triangles, with odd number of vertices}\label{fig:graph}
\end{figure*}

For even $\alpha/2$, we define the graph $G$ in the similar way to the case in which $\alpha/2$ is odd. The only difference is that $|X|=\alpha/2-1$ and $|Z_0|=|Z_1|=\alpha/2+1+j$; $E_1, E_2,\ldots, E_5$ are defined similarly and the number of vertices in $G$ is $\frac{5}{2}\alpha+1+2j$.
We illustrate this construction with the following example.
\begin{example} For $\alpha=4$ the incidence  matrix $\textbf{I}(G)$ of the graph $G$ is given by
\begin{footnotesize}
\[\left(
    \begin{array}{cccc|cccccc|cccccc|cccccc}
      1 & 1 & 1 & 1 &  &  &  &  &  &  &  &  &  &  &  &  &  &  &  &  &  &  \\\hline
      1 &  &  &  & 1 & 1 & 1 &  &  &  &  &  &  &  &  &  &  &   &  &  &  &  \\
       & 1 &  &  &  &  &  & 1 & 1 & 1 &  &  &  &  &  &  &  &   &  &  &  &  \\\hline
       &  & 1 &  &  &  &  &   &  &  & 1 & 1 & 1 &  &  &  &  &  &  &  &  &  \\
       &  &  & 1 &  &  &  &   &  &  &  &  &  & 1 & 1 & 1 &  &  &  &  &  &  \\\hline
       &  &  &  & 1 &  &  & 1 &  &  &  &  &  &  &  &  & 1 &  &  & 1 &  &  \\
       &  &  &  &  & 1 &  &  & 1 &  &  &  &  &  &  &  &  & 1 &  &  & 1 &  \\
       &  &  &  &  &  & 1 &  &  & 1 &  &  &  &  &  &  &  &  & 1 &  &  & 1 \\\hline
       &  &  &  &  &  &  &  &  &  & 1 &  &  & 1 &  &  & 1 &  &  &  & 1 &  \\
       &  &  &  &  &  &  &  &  &  &  & 1 &  &  & 1 &  &  & 1 &  & 1 &  &  \\
       &  &  &  &  &  &  &  &  &  &  &  & 1 &  &  & 1 &  &  & 1 &  &  & 1 \\
    \end{array}
  \right),
\]
\end{footnotesize}
where an empty entry in the matrix is $0$ and the blocks of the matrix correspond to the partition of vertices and edges as described in the construction.
\end{example}

\emph{Proof of Lemma}~\ref{lm:odd-triangle}:
\label{app:lm-triangle}
The upper bound on $n_3$ directly follows from the discussion in the beginning of Subsection~\ref{subsec:file_size}. To prove the lower bound, we need to show that there is no $\alpha$-regular graph with $2\alpha+1$ vertices which does not contain a cycle of length~3. Let $G=(V,E)$ be an $\alpha$-regular graph with $2\alpha+1$ vertices. Let $v\in V$ and let $A=\{a_1,a_2,\ldots,a_{\alpha}\}$ be the set of its neighbours.
Since there are no cycles of length 3, there are no edges between vertices of $A$.
Let $B=\{b_1,b_2,\ldots, b_{\alpha}\}$ be the remaining $\alpha$ vertices of $G$.
Clearly, every vertex of $A$ has $\alpha-1$ neighbours in $B$.
%
%
The number of edges in the cut between $A$ and $B$ is $\alpha(\alpha-1)$ and hence the number of edges between the vertices in $B$ is $\frac{\alpha^2-(\alpha(\alpha-1))}{2}=\frac{\alpha}{2}$. W.l.o.g. let $\{a_1, b_1\}\notin E$ and $\{a_2, b_1\}\in E$. The vertices of $B\setminus\{b_1\}$ are neighbours of $a_1$, which implies  that there are no edges between  vertices in $B\setminus\{b_1\}$. Therefore, $b_1$ has $\frac{\alpha}{2}$ neighbours in $B\setminus\{b_1\}$ and $a_2$ has $\alpha-2$ neighbours in $B\setminus\{b_1\}$. This implies that if $\alpha>2$ then there exists a common neighbour $b_{\ell}$ of $a_2$ and $b_1$ in $B\setminus\{b_1\}$. Hence, $a_2,b_1$ and $b_{\ell}$ form a cycle, a contradiction.

\hfill $\blacksquare$

\begin{example}
\label{ex:ap-motdification}
Let $\alpha=7$ and $M(4)=4\alpha-5=23$. By the Tur\'an bound (see Corollary~\ref{cor:Turan}), the minimum number of nodes in an FR code is
$\left\lceil\frac{3}{2}\alpha\right\rceil=11$, but since $n\alpha$ must to be an even number, it follows that $n\geq 12$.  The corresponding Tur\'an graph  $T(12,3)$  with $n=12$ has degree $8$. To obtain from $T(12,3)$ a $7$-regular graph on 12 vertices without $G(4,6)$,  one can remove 6 edges, which correspond to a perfect matching in $T(12,3)$.

\end{example}

\section{}

\emph{Proof of Theorem~\ref{thm:alphaBound}}:
\label{app:thm:alphaBound}
We observe that if the lower bound on the file size of Theorem~\ref{lm:TDrate} is equal to the upper bound in~(\ref{eq:bound2}) then we have
\begin{equation}
\label{eq:TD_induction}M(k)= k\alpha-\binom{k}{2}+\rho\binom{b}{2}+bt=\varphi(k).
 \end{equation}
We prove~(\ref{eq:TD_induction}) by induction.
We distinguish between two cases: $k=b\rho+t$, where $t\neq 0$, i.e., $k\not\equiv 0 \;(\textmd{mod }\rho)$  and $k=b\rho$, i.e., $k\equiv 0\;(\textmd{mod }\rho)$.

\emph{Case 1:} if $k=b\rho+t$, $0<t\leq \rho-1$, then $k-1=b\rho+t-1$, where $0\leq t-1\leq \rho-2$.
Assume that $\varphi(k-1)=M(k-1)=(k-1)\alpha-\binom{k-1}{2}+\rho\binom{b}{2}+b(t-1)$. By applying the recursive formula for $\varphi$, we want to prove that
\begin{equation}
\label{eq:TD_proof}
k\alpha-\binom{k}{2}+\rho\binom{b}{2}+bt=
(k-1)\alpha-\binom{k-1}{2}+\rho\binom{b}{2}+b(t-1)+\alpha-\left\lceil\frac{(\rho-1)(k-1)\alpha-\rho\binom{k-1}{2}+\rho^2\binom{b}{2}+\rho b(t-1)}{\rho\alpha-k+1}\right\rceil
\end{equation}
By simplifying the last equation one can verify that~(\ref{eq:TD_proof}) holds if and only if
\begin{equation}
\label{eq:TD_proof2}
(\rho-1)(k-1)\alpha-\rho\binom{k-1}{2}+\rho^2\binom{b}{2}+\rho b(t-1)>(k-b-2)(\rho\alpha-k+1).
\end{equation}
By substituting $k=b\rho+t$ in~(\ref{eq:TD_proof2}) we have that~(\ref{eq:TD_proof2}) holds if and only if
\[\alpha >\frac{ b^2\rho\binom{\rho-1}{2}+(\rho-2)\binom{t-1}{2}+b((\rho^2+1)(t-1)-\rho(3t-4))}{\rho-t+1}.
\]

\emph{Case 2:}   if $k=b\rho$ then $k-1=(b-1)\rho+\rho-1$.
Assume that $\varphi(k-1)=M(k-1)=(k-1)\alpha-\binom{k-1}{2}+\rho\binom{b-1}{2}+b(\rho-1)$. By applying the recursive formula for $\varphi$, we want to prove that
\begin{equation}
\label{eq:TD_proof3}
k\alpha-\binom{k}{2}+\rho\binom{b}{2}=(k-1)\alpha-\binom{k-1}{2}+\rho\binom{b-1}{2}+b(\rho-1)+\alpha-
\left\lceil\frac{(\rho-1)(k-1)\alpha-\rho\binom{k-1}{2}+\rho^2\binom{b-1}{2}+b\rho(\rho-1)}{\rho\alpha-k+1}\right\rceil.
\end{equation}

One can verify that~(\ref{eq:TD_proof3}) holds if and only if
\begin{equation}
\label{eq:TD_proof4}
(\rho-1)(k-1)\alpha-\rho\binom{k-1}{2}+\rho^2\binom{b-1}{2}+b\rho(\rho-1)>(k-b-1)(\rho\alpha-k+1).
\end{equation}
By substituting $k=b\rho$ in~(\ref{eq:TD_proof4}) we have that~(\ref{eq:TD_proof4}) holds if and only if
\[\alpha >\frac{b^2\rho}{2}(\rho^2-3\rho+2)-b(\rho^2-3\rho+1)-1=b(b\rho-2)\binom{\rho-1}{2}+b-1.
\]
\hfill $\blacksquare$

\end{document}